\documentclass[a4paper,twocolumn,11pt, accepted=2023-08-29]{quantumarticle}
\pdfoutput=1
\usepackage[utf8]{inputenc}
\usepackage[english]{babel}
\usepackage[T1]{fontenc}
\usepackage{amsmath}
\usepackage{hyperref}
\usepackage{tikz}
\usepackage{lipsum}

\usepackage[numbers, sort&compress]{natbib}
\usepackage{array}
\usepackage{graphicx}
\usepackage{mathdots}
\usepackage{amsfonts,amsthm,amssymb,bbm, fullpage,braket,float}
\usepackage[ruled,linesnumbered]{algorithm2e}
\usepackage{multirow}
\usepackage{makecell}
\usepackage[normalem]{ulem}
\usepackage{yhmath}
\usetikzlibrary{arrows}

\newtheorem{theorem}{Theorem}
\newtheorem{prop}{Proposition}
\newtheorem{lemma}[theorem]{Lemma}
\newtheorem{obs}[theorem]{Observation}
\newtheorem{corollary}[theorem]{Corollary}
\newtheorem{conjecture}[theorem]{Conjecture}

\newcommand{\dimension}{{\color{black}dimension }}
\newcommand{\dimensional}{{\color{black}dimensional }}

\newcommand{\R}{\mathbb{R}}
\newcommand{\E}{\mathbb{E}}
\newcommand{\spn}{\mathrm{span}}
\newcommand{\mc}{{\sc Max-Cut}}

\newcolumntype{?}{!{\vrule width 1pt}}

\begin{document}

\title{Warm-Started QAOA with Custom Mixers Provably Converges and Computationally Beats Goemans-Williamson's Max-Cut at Low Circuit Depths}
\author{Reuben Tate}
\affiliation{CCS-3 Information Sciences, Los Alamos National Laboratory, Los Alamos, NM 87544, USA}
\orcid{0000-0002-9170-8906}
\author{Jai Moondra}
\orcid{0000-0002-6401-1505}
\affiliation{Georgia Institute of Technology, Atlanta, GA 30332, USA}
\author{Bryan Gard}
\affiliation{Georgia Tech Research Institute, Atlanta, GA 30332, USA}
\orcid{0000-0001-5529-1675}
\author{Greg Mohler}
\affiliation{Georgia Tech Research Institute, Atlanta, GA 30332, USA}
\orcid{0000-0001-7330-5962}
\author{Swati Gupta}
\affiliation{Sloan School of Management, Massachusetts Institute of Technology, Cambridge, MA 02142, USA}
\orcid{0000-0002-9566-3856}
\thanks{email of the corresponding author is \href{mailto:swatig@mit.edu}{swatig@mit.edu}}
\maketitle

\begin{abstract}
We generalize the Quantum Approximate Optimization Algorithm (QAOA) of Farhi et al. (2014) to allow for arbitrary separable initial states with corresponding mixers such that the starting state is the most excited state of the mixing Hamiltonian. We demonstrate this version of QAOA, which we call {\it QAOA-warmest}, by simulating Max-Cut on weighted graphs. We initialize the starting state as a {\it warm-start} using $2$ and $3$-\dimensional  approximations obtained using randomized projections of solutions to Max-Cut's semi-definite program, and define a warm-start dependent \textit{custom mixer}. We show that these warm-starts initialize the QAOA circuit with constant-factor approximations of 0.658 for 2-\dimensional and 0.585 for 3-\dimensional warm-starts for graphs with non-negative edge weights, improving upon previously known trivial (i.e., 0.5 for standard initialization) worst-case bounds at $p=0$. These factors in fact lower bound the approximation achieved for Max-Cut at higher circuit depths, since we also show that QAOA-warmest with any separable initial state converges to Max-Cut under the adiabatic limit as $p\rightarrow \infty$. However, the choice of warm-starts significantly impacts the rate of convergence to Max-Cut, and we show empirically that our warm-starts achieve a faster convergence compared to existing approaches. Additionally, our numerical simulations show higher quality cuts compared to standard QAOA, the classical Goemans-Williamson algorithm, and a warm-started QAOA without custom mixers for an instance library of 1148 graphs (upto 11 nodes) and depth $p=8$. We further show that QAOA-warmest outperforms the standard QAOA of Farhi et al. in experiments on current IBM-Q and Quantinuum hardware.
\end{abstract}

\section{Introduction}
\label{sec:introduction}

In order to realize a quantum advantage, many researchers have been considering the usage of NISQ devices \cite{P18,HM2017} for the purposes of solving difficult problems in combinatorial optimization. Of particular interest is the Quantum Approximate Optimization Algorithm (QAOA), a hybrid quantum-classical algorithm developed by Farhi et al. that can be applied to a large variety of combinatorial optimization problems \cite{FGG14}. We study the use of QAOA to solve one of the most famous NP-hard combinatorial optimization problems, called Max-Cut.\footnote{Although this work focuses on Max-Cut, our approach can be applied to any suitable combinatorial optimization problem by converting a generic QUBO instance into a Max-Cut instance \cite{DGS18}, and then applying the same techniques. This conversion requires only one additional qubit; however, such a conversion is not necessarily approximation-preserving.} Given a weighted graph $G = (V,E)$, with vertex set $V = [n]$, edge set $E \subseteq \binom{V}{2}$ and weights $w: E \to \mathbb{R}$, the Max-Cut problem is to find a partition of $V$ into two disjoint sets $S, V \setminus S \subseteq V$, such that the total weight of the edges across the partition, i.e. $\text{cut}(S) := \sum_{e \in E} w_e \cdot \mathbf{1}[e \in S \times (V \setminus S)]$, is maximized. The Max-Cut of $G$ is denoted by $\text{Max-Cut}(G) = \max_{S \subseteq V} \text{cut}(S)$.

In the standard QAOA algorithm, qubits are initialized in the $\ket{+}$ state along the $x$-axis of the Bloch sphere, tensorized $n$ times, and QAOA's mixing Hamiltonian rotates each qubit about this same axis. The promise of this approach lies in Farhi et al.'s \cite{FGG14} result that establishes a connection between standard QAOA and quantum adiabatic computing, showing that with increased circuit depth, standard QAOA converges to the Max-Cut. This connection relies on the initial state of standard QAOA being  the highest energy eigenstate of the mixing Hamiltonian. 


We propose to initialize the circuit with separable initial states (other than $\ket{+}^{\otimes n}$) generated using classical relaxations of the Max-Cut problem; following classical optimization literature, we refer to these states as {\it warm-starts}. We modify the mixing Hamiltonians in the QAOA ansatz in a way dependent on the initial state, so that the adiabatic assumptions hold; we call these mixing Hamiltonians {\it custom mixers}, and call the new QAOA variant {\it QAOA-warmest}. We show that like QAOA, QAOA-warmest converges to the Max-Cut with increased circuit depth (Proposition \ref{thm:convergenceGeneral}).

We further note that the convergence \emph{rate} of QAOA-warmest is heavily dependent on the separable initial state used. In this work, we focus on two approaches for warm-starting that generate classically-inspired separable initial states which (empirically) converge faster compared to existing initializations: (1) low-\dimensional projections of optimal solutions to the Goemans-Williamson (GW) semidefinite program (SDP) \cite{GW95}, and (2) locally optimal solutions to  low-\dimensional Burer-Monteiro \cite{BM03} relaxations of the Max-Cut problem. Using a diverse library of 1148 graphs (up to 11 nodes), we show through numerical simulations that QAOA-warmest that uses our two types of warm-starts outperforms the classical Goemans-Williamson algorithm and standard QAOA even at low-circuit depths around $p = 4$. Such a low-depth for this performance does not hold for any random initialization on the Bloch sphere. We further tested QAOA-warmest in the presence of noise using Qiskit's built-in modules and hardware calibration data \cite{qiskit}, and on IBM's 16-qubit Guadalupe device and Quantinuum's 20-qubit devices. Our simulations demonstrate that QAOA-warmest is robust and still yields high (instance-specific) approximation ratios in such noisy regimes.

Additionally, we provide theoretical performance guarantees at depth $p=0$ in the case that projected GW solutions are used. Specifically, for positive-weighted graphs, we show  (Theorem \ref{thm: subspace-hyperplane-rounding}) that, in expectation, the value of the cut obtained from hyperplane rounding of the projected GW solutions is at least $0.878$ of the optimal cut value (just like GW). We further show that quantum measurement of the projected solutions (i.e. depth-0 QAOA-warmest) yields at least a $\frac{3}{4}(0.878) \approx 0.658$-approximation and $\frac{2}{3}(0.878) \approx 0.585$-approximation for dimensions $k=2$ and $k=3$ respectively. These guarantees also serve as a lower bound for what can be achieved with higher depth since, like standard QAOA, our approach also has monotonically increasing expected cut-quality with increased circuit depth for any instance.

This paper presents the first warm-start approach for QAOA (for Max-Cut) which simultaneously (1) provides a nontrivial constant-factor approximation ratio at depth $p=0$, (2) satisfies provable convergence to Max-Cut as circuit depth increases, and (3) enjoys a fast convergence rate, as shown through numerical simulations and experiments on quantum hardware. Some of the previous approaches \cite{CFGRT22, egger2020warm} consider warm-start initializations using perturbations of a single-cut (obtained for example, using GW algorithm). While such initializations theoretically yield higher approximation ratio of 0.878 using quantum sampling at depth $p=0$, this warm-start has a much slower convergence rate in simulations (see Table \ref{tab:summaryTable}). Therefore, we believe that our approach is promising for near-term NISQ devices.


\subsection{Related Work}
Since Farhi et al.'s \cite{FGG14} seminal paper in 2014, many have researched and analyzed the QAOA algorithm. Many empirical studies have been performed including the effects of different parameter initialization strategies \cite{SMKS22, SS21, ZWCPL20, SLLOH22, GLLAS21}, the performance of QAOA in the context of devices with superconducting qubits with limited connectivity \cite{WVGTBWE22,lotshaw2022scaling,guerreschi2019qaoa},the use of machine learning to determine (for specific instances) whether Max-Cut QAOA would yield better cuts compared to the classical GW algorithm \cite{moussa2020quantum}, the effectiveness of different encoding schemes for objectives involving higher-order terms with more than 2 qubits \cite{campbell2022qaoa}, the effect of various graph properties on the performance of QAOA \cite{HTOLHS21}, and the use of QAOA to generate highly squeezed states which are useful in the context of quantum metrology \cite{SJHE22}.

Many researchers have also analyzed QAOA from a more theoretical perspective. Shaydulin et al. \cite{shaydulin2021classical} gives a series of results regarding the classical symmetries in the objective function and how those symmetries are reflected in the QAOA dynamics. In their seminal paper, Farhi et al. show that depth-1 QAOA achieves an approximation ratio of 0.694 for Max-Cut on 3-regular graphs \cite{FGG14}. Wurst and Love show that at depth-2, this approximation ratio improves to 0.7559 for Max-Cut QAOA on 3-regular graphs \cite{wurtz2021maxcut}.  For quantum devices with limited connectivity, Farhi et al. \cite{FGGN17} show that a variation of Max-Cut QAOA on 3-regular graphs on a device whose native graph is a square grid achieves an approximation ratio of 0.5293 without the use of swap operations. Others have also proven limitations of the standard QAOA algorithm as well: Bravyi et al. \cite{BKKT19} show that, for all $d\geq 3$, there exists a sequence of $d$-regular bipartite graphs such that depth-$p$ QAOA with $p < (1/3 \log_2 n - 4)d^{-1}$ on such instances produces a cut (in expectation) whose value is at most $\frac{5}{6}+\frac{\sqrt{d-1}}{3d}$, meaning that, in the worst case, constant-depth QAOA for Max-Cut is inferior to the classical Goemans-Williamson algorithm as $\lim_{d\to \infty}\left(\frac{5}{6}+\frac{\sqrt{d-1}}{3d}\right) = \frac{5}{6} \approx 0.833 < 0.878$. Farhi et al. \cite{FGG20} show a similar result when QAOA is applied to the Max Independent Set problem\footnote{Given a graph $G=(V,E)$, the goal of the Max Independent Set problem is to find $S \subseteq V$, with $|S|$ as large as possible, such that for all vertices $u,v \in S$, the edge $(u,v) \in E$.} and Bravyi et al. \cite{bravyi2022hybrid} give similar results for a recursive variant of QAOA applied to the Max-$k$-Cut problem.\footnote{In the Max-$k$-Cut problem, one is given a weighted graph $G=(V,E)$ with weights $w: E \to \mathbb{R}$ and the goal is to partition the vertices into $k$ disjoint groups so that the sum of weights of edges across partitions is maximized} Hastings \cite{hastings2019classical} defines a notion of a ``local" classical algorithm and shows that, for triangle-free $d$-regular graphs with $2 \leq d \leq 1000$, there exists\footnote{Numerical evidence suggests that the results of Hastings \cite{hastings2019classical} and Marwaha \cite{marwaha2021local} hold for all $d$; however, no formal proof is given.} a $d$-dependent local classical algorithm for Max-Cut with a provably better approximation ratio compared to depth-1 QAOA. Marwaha \cite{marwaha2021local} extended this result, showing that for each $2 \leq d \leq 500$, there exists local classical algorithms that yields a better expected cut compared to depth-2 QAOA for all $d$-regular graphs with girth greater than 5. Barak and Marwaha \cite{barak2021classical} have continued this line of research, showing that for every one-local algorithm (classical or quantum), that the maximum cut achieved is at most $\frac{1}{2}+\frac{1/\sqrt{2}}{d}$ of the maximum cut for $d$-regular graphs with girth greater than 5 and that there exists a $k$-local algorithm with approximation ratio $\frac{1}{2}+\frac{2/\pi}{d} - O(\frac{1}{\sqrt{k}})$ for $d$-regular graphs with girth greater than $2k+1$. However, general approximation guarantees obtained by the QAOA algorithm still remain elusive.

The above results also suggest that in order to realize some kind of quantum advantage, more information beyond the standard QAOA algorithm might be needed. Recent work has considered modifications and variations of the QAOA algorithm itself, as in this work, which we discuss next. 

The closest works related to this paper are those by Tate et al. \cite{TFHMG20} and Egger et al. \cite{egger2020warm}, where the authors have explored multiple approaches for warm-starting QAOA. Tate et al. \cite{TFHMG20} considered warm-starting QAOA using $2$ and $3$-\dimensional Burer-Monteiro locally optimal solutions for Max-Cut; however, their method plateaus even at low circuit depths of $p=1$ for some initializations, and it is unable to improve the cut quality for some instances. For the Max-Cut problem, Egger et al. \cite{egger2020warm} constructed the initial quantum state by (non-trivially) mapping a single specific cut $(S, V \setminus S)$ (obtained via the Goemans-Williamson algorithm or possibly other means) to an initial quantum state. Egger et al. also modified the mixing Hamiltonian so that at depth $p=1$ (with the right choice of QAOA variational parameters), the cut $(S, V \setminus S)$ is recovered; however, there is no evidence to suggest that such an approach will converge to the optimal solution for depth $p \to \infty$. They further proposed a different continuous warm-started QAOA for Quadratic Unconstrained Binary Optimization (QUBO) problems, which enjoys convergence guarantees. For certain classes of QUBO's, the binary variables can be relaxed to be in the interval $[0,1]$ to obtain a convex quadratic program whose optimal solution can then be mapped to the Bloch sphere. Our mixer is a generalization of the mixer used in this particular approach by Egger et al.; however, the initialization scheme is quite different. In particular, their continuous warm-started QAOA approach cannot be directly applied to Max-Cut as the corresponding relaxed quadratic program is not convex.

Recent work by Cain et al. \cite{CFGRT22} further explored convergence properties of warm-starts, when augmented with standard mixers. They showed using a perturbative approach that standard QAOA with single-cut warm-starts (i.e., each qubit is initialized at $\ket{0}$ or $\ket{1}$) does not show any improvement in the approximation ratio, even when the circuit depth is increased. This result is interesting in the context of our work, since we find that (1) using custom mixers, one can guarantee convergence as long as the initialization is not at a single-cut, and (2) warm-starts that are perturbations of single-cut initializations (e.g., each qubit is initialized at $R_Y(\theta^*)\ket{0}$ or $R_Y(\pi-\theta^*)$ for some small $\theta^*$ where $R_Y(\theta)$ is a single-qubit rotation about the $y$-axis by angle $\theta$) converge very slowly, in computations, even with custom mixers. Our work therefore complements the existing work and clarifies what kind of warm-starts are useful. 

Additionally, other works have explored different kind of modifications of the QAOA ansatz. Farhi et al. \cite{FGGN17} considered having separate variational parameters for each vertex and edge. Hadfield et al. \cite{HWORVB17} and Wang et al. \cite{WRDR20} considered versions of QAOA that are suitable for combinatorial optimization problems with both hard constraints (that must be satisfied) and soft constraints (for which we want to minimize violations). Zhu et al. \cite{ZTBMBE20} modify QAOA such that the ansatz is expanded in an iterative fashion with the mixing Hamiltonian being allowed to change between iterations. Bravyi et al. \cite{BKKT19} proposed a recursive QAOA approach that decreases the instance size at each iteration; Egger et al. \cite{egger2020warm} also consider a similar recursive version of their approaches. B\"artschi et al. \cite{bartschi2020grover} and Jiang et al. \cite{jiang2017near} consider modifications of QAOA inspired by Grover's (quantum) algorithm \cite{grover1996fast} for fast database search. For the scope of this work, we do not consider these alternate approaches; however, it may serve as an interesting direction for future work as QAOA-warmest can likely be used in conjunction with these other approaches.

\subsection{Background}

\label{sec:background}
\subsubsection{The Quantum Approximate Optimization Algorithm}
\label{sec:QAOA}
First, we review QAOA in the context of the Max-Cut problem. QAOA is performed on $n$ qubits (with $n$ being the number of vertices in the input graph) with the final measurement being a bitstring of length $n$ corresponding to a cut $(S,V \setminus S)$ with $S = \{i : i\text{th bit is } 0\}$. 

For depth-$p$ QAOA, the quantum circuit alternates between applying a cost Hamiltonian $H_C = \frac{1}{2} \sum_{(i,j)\in E} w_{ij} (1 - \sigma_i^z \sigma_j^z)$ and a mixing Hamiltonian $H_B = \sum_{i \in [n]} \sigma_i^x$. Here, $\sigma^x,\sigma^y,\sigma^z$ are the standard Pauli operators and $\sigma_i^k$ is the Pauli operator applied to qubit $i$ for $k\in \{x,y,z\}$. The cost and mixing Hamiltonians are applied a total of $p$ times to generate a variational state,
\begin{align}
&\ket{\psi_p(\gamma, \beta)} \label{eqn:qaoaCircuit}\\
& =  e^{-i\beta_p H_B} e^{-i\gamma_p H_C} \! \cdots e^{-i\beta_1 H_B} e^{-i\gamma_1 H_C} \! \ket{ s_0 }, \notag
\end{align}
where $\ket{s_0}$ is the initial state and $\gamma = (\gamma_1,\dots,\gamma_p), \beta = (\beta_1,\dots,\beta_p)$ are variational parameters to be optimized. For standard QAOA, the initial state is given by $\ket{s_0} = \ket{+}^{\otimes n}$ which corresponds to an equal superposition of all $2^n$ possible cuts in the graph.

Finally, sampling from $\ket{\psi_p(\gamma, \beta)}$ yields a bit-string corresponding to a cut in the graph. We let $F_p(\gamma,\beta)$ denote the expected cut value from sampling $\ket{\psi_p(\gamma, \beta)}$, i.e.,
$$F_p(\gamma,\beta) = \bra{ \psi_p(\gamma,\beta) } H_C \ket{\psi_p(\gamma, \beta)}.$$
If, for each $p$, we choose $\gamma,\beta$ optimally, then the expected value of the cut tends to the Max-Cut as $p \to \infty$ \cite{FGG14}.

In practice, the optimal choice of $\gamma,\beta$ is unknown in advance and thus a hybrid classical-quantum hybrid loop is often used to find values of $\gamma$ and $\beta$ that yield high expectation values. The initialization of $\gamma$ and $\beta$ in these optimization routines has been investigated \cite{SMKS22, SS21, ZWCPL20, SLLOH22, GLLAS21}; however, for simplicity, we initialize $\gamma$ and $\beta$ near the origin for our experiments as discussed in Section \ref{sec:experiments}. 

\subsubsection{Classical Optimization Algorithms}
\label{subsec:classicalAlgoPrelims}
We next review some classical optimization algorithms for Max-Cut. It is useful to reformulate the Max-Cut problem  as follows:
\begin{equation}\text{Max-Cut}(G) = \frac{1}{4}\max_{x \in \{\pm 1\}^n} \sum_{(i,j) \in E}w_{ij}(x_i - x_j)^2. \label{eqn:maxcutFormulation}\end{equation}
Given a solution $x$ to the maximization above, this yields the cut $(S,V \setminus S)$ with $S =\{i : x_i = 1\}$. For positive integer $k$, the above formulation can then be relaxed as follows:

\begin{align}
	\text{max} & \quad \frac{1}{4}\sum_{(i,j) \in E}w_{ij}\Vert x_i - x_j\Vert^2  \label{eqn:rank-k-relaxation}\\
	\text{subject to} &\quad  \Vert x_i\Vert = 1,&& \hspace{-2cm} \forall i \in V, \nonumber \\
	&\quad x_i \in \mathbb{R}^{k},&& \hspace{-2cm} \forall i \in V.\nonumber
\end{align}

When $k=n$, the problem can be reformulated as a convex semidefinite program (SDP) which is the formulation used by the seminal Goemans-Williamson (GW) algorithm \cite{GW95}, which yields a 0.878-approximation to Max-Cut for graphs with non-negative edge weights.

Burer and Monteiro proposed solving the relaxation for $1 < k < n$ (which we denote by BM-MC$_k$), using parametric forms for points on the $k-1$-dimensional sphere \cite{BM03}. This leads to a program that is no longer convex and thus one can not expect to easily find the global optima. However, high-quality local optima can be found by utilizing first and second order optimization methods. In general, the Burer-Monteiro technique has been found to work well in practice, even when $k=2$ \cite{BMZ01}; however, in theory, Mei et al. \cite{MMMO17} show that hyperplane rounding of a locally optimal BM-MC$_k$ solution achieves a $0.878(1 - \frac{1}{k-1})$ fraction of the optimal cut, yielding a $0$-approximation and a $0.439$-approximation for dimensions $k=2$ and $k=3$ respectively.

For ease of notation, given a (feasible) BM-MC$_k$ solution $x$, we let $$\text{HP}(x) = \sum_{(i,j) \in E} \frac{1}{\pi}\arccos(x_i \cdot x_j),$$
denote the expected cut value obtained from performing hyperplane rounding on $x$ \cite{GW95} and we let $$\text{BM-MC}_k(x) =  \sum_{(i,j) \in E} \frac{1}{4}\Vert x_i - x_j \Vert^2,$$ denote the BM-MC$_k$ objective at $x$. Lastly, we say that the solution $x$ is $\kappa$-approximate if $\text{HP}(x) \geq \kappa \text{Max-Cut}(G)$ and similarly, $x$ is considered $\kappa$-close if $\text{BM-MC}_k(x) \geq \kappa \text{Max-Cut}(G)$.

\subsubsection{Approximation Ratio}
\label{sec:approxRatio}


For this work, we adopt the same definition and notation for the (instance-specific) approximation ratio (AR) as discussed in Tate et al. \cite{TFHMG20}. Specifically, for a graph $G$ and algorithm $\mathcal{A}$, the (instance-specific) AR is given by,
\begin{equation}\label{eq:approxRatio}\alpha_{\mathcal{A},G} = \frac{\mathcal{E}_{\mathcal{A},G} - \text{{\sc Min-Cut(G)}}}{\text{\mc}(G) -\text{{\sc Min-Cut(G)}}};\end{equation}
where $\mathcal{E}_{\mathcal{A},G}$ is the expected cut value (either from hyperplane rounding of a classical solution or quantum measurement of a quantum state) and $\text{{\sc Min-Cut(G)}}$ is the (possibly trivial) minimum cut in the graph $G$. This definition is chosen as a means of ``normalizing" the approximation ratio to lie in the interval $[0,1]$, even in the case of graphs with both positive and negative edge weights. For positive-weighted graphs, we have that $\text{{\sc Min-Cut(G)}} = 0$, in which case, the definition of $\alpha_{\mathcal{A},G}$ reduces to the ``typical" definition of approximation ratio often seen in the classical optimization literature. The worst-case AR $\alpha_{\mathcal{A}}$ for an algorithm $\mathcal{A}$ is defined as the worst instance-specific AR across all instances, i.e., $\alpha_{\mathcal{A}} = \min_{G} \alpha_{\mathcal{A},G}$; alternatively, we say such an algorithm $\mathcal{A}$ provides an $\alpha_{\mathcal{A}}$-approximation.

It should be noted that in our numerical simulations, we can calculate $\mathcal{E}_{\mathcal{A},G}$ exactly. Please refer to \cite{TFHMG20} for more details.

\section{Custom Mixers}
\label{sec:customMixers}

In this section, we provide a general framework that shows, for \emph{any} separable initial state, how to construct a custom mixing Hamiltonian for QAOA 
so that the warm-start is the most excited state of the mixer. This property will be useful in showing convergence results for QAOA with custom mixers. We refer to such variants of QAOA as QAOA-warmest (compared to QAOA-warm, proposed by Tate et al. \cite{TFHMG20}, which uses the standard mixer, $H_B = \sum_{i \in [n]} \sigma_i^x$, together with a separable quantum initial state). 




Consider a separable state $\ket{s_0}$ on $n$ qubits; if the $j$th qubit's Cartesian coordinates on the Bloch sphere are given by $(x_j, y_j, z_j)$, then the corresponding custom mixing Hamiltonian $H_B$ is given by,
\begin{align}
H_B =\bigoplus_{j=1}^n H_{B,j},\label{eqcustom}
\end{align}
where
$H_{B,j} = x_j\sigma^x + y_j\sigma^y + z_j\sigma^z$. 

Geometrically, the custom mixer rotates qubits about their original position on the Bloch sphere (details included in Appendix \ref{sec:customMixersConstruction}). Note that the  standard mixers in QAOA \cite{FGG14} are, therefore, a special case of our custom mixers since each qubit is initialized at $\ket{+}$, i.e., the the $x$-axis (with Cartesian coordinates $(x_j,y_j,z_j) = (1,0,0)$) and the unitary operator $e^{-i\beta_k H_B}$ for the mixer corresponds to rotations (by $2\beta_k$) about the $x$-axis. When the initial state is composed by qubits restricted to  the $xz$-plane with $x > 0$, then this custom mixer recovers the one considered by Egger et al \cite{egger2020warm}.

\label{sec:descOfCustMixer}

\subsection{Convergence to Max-Cut}\label{sec:convergence-for-separable-initial-states}
In this section, we show that the expected cut obtained by QAOA-warmest converges to Max-Cut as the circuit depth goes to infinity.

\begin{theorem}
\label{thm:convergenceGeneral}
Let $\ket{s_0}$ be any separable initial state whose qubits do not lie at the poles of the Bloch sphere. Running QAOA with a warm-start $\ket{s_0}$, its corresponding custom mixer \eqref{eqcustom}, with the choice of optimal variational parameters, yields a distribution of cuts whose expected value reaches Max-Cut as the circuit depth $p$ tends to infinity, i.e., $$\lim_{p \to \infty} \max_{\gamma,\beta} F_p(\gamma,\beta) = \text{Max-Cut}(G).$$ 
\end{theorem}

We will first show that Theorem \ref{thm:convergenceGeneral} holds when the warm-start is initialized in the $xz$-plane of the Bloch sphere, with $x>0$. We will then show the main result by showing equivalence of the distribution of cuts obtained by QAOA-warmest when a phase is added to each qubit (i.e., the initial state is not restricted to 0 phase).

\begin{prop}
\label{thm:convergenceSpecial}
Let $\ket{s_0}$ be any separable initial state such that all qubits lie at the intersection of the Bloch sphere and the $xz$-plane with positive $x$-coordinate. Running QAOA with initial state $\ket{s_0}$ and its corresponding custom mixer yields that $$\lim_{p \to \infty} \max_{\gamma,\beta} F_p(\gamma,\beta) = \text{Max-Cut}(G),$$ i.e., the expected cut value of QAOA-warmest with optimal variational parameters will yield the optimal cut value as the circuit depth $p$ tends to infinity.
\end{prop}

Egger et al. \cite{egger2020warm} use the same custom mixers\footnote{Although \cite{egger2020warm} use the same construction of the mixer, they use significantly different warm-starts, which ultimately play a role in the rate of convergence of the overall approach.} for warm-starts restricted to the $xz$-plane with $x>0$, for convex quadratic programs, and state that convergence holds, without proof. While straightforward calculations show that the initial state is the highest-energy eigenstate of the mixer (a condition needed in order to apply the adiabatic theorem and guarantee convergence), a complete proof requires careful inspection of the eigenvalues of the time-dependent Hamiltonian $H(t) = (1-t/T)H_B + (t/T)H_C$. We show that there is a non-zero gap between the largest and the second largest eigenvalues of the time-varying Hamiltonian, using the Perron-Frobenius theorem for irreducible stoquastic Hermitian matrices. These details can be found in Appendix \ref{sec:xz_convergence}.



This analysis can not be directly applied for arbitrary separable states since the corresponding mixers do not necessarily have real entries, and therefore, the Perron-Frobenius theorem cannot be applied. However, instead of calculating the eigenvalue gaps directly, we show that the phase in the warm-start is not reflected in the distribution of cuts obtained using QAOA-warmest, which would then complete the proof of Theorem \ref{thm:convergenceGeneral}.


\begin{prop}
\label{thm:convergence}
Let $\ket{s_0}$ be any separable initial state and let $H_B$ be its corresponding custom mixer. Let $\ket{s_0'}$ be the state $\ket{s_0}$ where each qubit's phase is set to 0, and let $H_B'$ be the corresponding custom mixer. Then, for any fixed choice of variational parameters $(\gamma,\beta)$, the distribution of cuts obtained from QAOA-warmest with initial state $\ket{s_0}$ and mixer $H_B$ is the same as the distribution of cuts from obtained QAOA-warmest with initial state $\ket{s_0'}$ and mixer $H_B'$.
\end{prop}

Proposition \ref{thm:convergence} can be proved by first decomposing the general custom mixer (corresponding to $\ket{s_0}$) in terms of the custom mixer corresponding to the initial state with the phase removed (i.e. $\ket{s_0'}$). Using this decomposition, we show that any effect caused by the change in rotation direction (due to switching from the custom mixer for $\ket{s_0'}$ to the custom mixer for $\ket{s_0}$) exactly cancels out with the effect of introducing a phase to $\ket{s_0'}$ to obtain $\ket{s_0}$. The details of the proof can be found in Appendix \ref{sec:proofs}.

The convergence given by Theorem \ref{thm:convergenceGeneral} for QAOA-warmest is especially interesting considering that many previous warm-started QAOA approaches lack such guarantees. The QAOA-warm approach by Tate et al. \cite{TFHMG20} considered arbitrary separable states but with the standard mixer; they showed examples where QAOA-warm plateaued and did not converge to the Max-Cut. Cain et al. \cite{CFGRT22} considered the case where the initialization is a single bitstring/cut with the standard mixer, and showed that such an initialization also does not converge to Max-Cut. One of the approaches by Egger et al. consider a mixer that is neither the standard mixer nor the custom mixer approach presented above. Their mixer is used in conjunction with what we call a perturbed single-cut initialization (see Section \ref{sec:warmstarts}) which recovers a \emph{particular} cut (obtained via GW or other means) at depth 1. However, this initialization comes with no guarantees on convergence, and our results also do not apply to these different mixers. 

Although Theorem \ref{thm:convergenceGeneral} applies to any warm-start with aligned custom mixers, we will show that there is a significant difference in the {\it rate} of convergence to Max-Cut which depends on the type of warm-start chosen. Theoretically characterizing this rate of convergence still remains an open question, but we show that there is a stark observable difference in rate of convergence through numerical simulations (See Table \ref{tab:convergenceAgg} for a summary, and Section \ref{sec:QAOAWarmestCompare} for more details). We discuss the choice of warm-starts next.

\section{Warm-Starts}
\label{sec:warmstarts}

We now discuss the notion of warm-starting the quantum circuit for QAOA by biasing the initial quantum state $\ket{s_0}$ in Equation (\ref{eqn:qaoaCircuit}) to certain cuts, as opposed to taking an equal superposition of all cuts as in standard QAOA. Many initialization schemes have been used for warm-starting QAOA (see Table \ref{tab:summaryTable} for a summary); we focus on two that produce classical solutions which can easily be mapped to a separable quantum state in a way that roughly approximates the corresponding classical distribution of cuts.

Both approaches consider the $k$-dimensional relaxation (\ref{eqn:rank-k-relaxation}) for Max-Cut. When $k = n$, this is the GW semidefinite program. Recall that the Goemans-Williamson algorithm \cite{GW95} rounds an optimal solution $u_1, \ldots, u_n \in \R^{n}$ to this SDP to a cut $v_1, \ldots, v_n \in \{-1, 1\}$ using a random hyperplane. Since we wish to produce a separable quantum state, we instead round $u_1, \ldots, u_n \in \R^{n}$ to vectors $v_1, \ldots, v_n \in \R^{k}$, $k \in \{2, 3\}$ and map these to the Bloch sphere for each of the $n$ qubits.

Alternately, we can solve the relaxation itself for $k \in \{2, 3\}$ \cite{BM03}, and map the locally-optimal solution vectors (called BM-MC solutions) directly to the corresponding Bloch spheres. While BM-MC solutions do not enjoy even the trivial 0.5-approximation guarantees \cite{MMMO17}, they are much faster to compute in practice and therefore scale better for larger graphs: e.g., Burer et al. \cite{BMZ01} show that GW took over 1.5 days to complete on a 20,000 node instance, whereas a $k = 2$ BM-MC solution was found in a little over a second; repeated runs of BM-MC$_2$ over the course of a couple minutes on the same graph yielded cuts that were at least as good as those obtained by GW \cite{BMZ01}. 

\begin{table*}[htbp]
\centering
\resizebox{\linewidth}{!}{%
\begin{tabular}{|l|l|l|l|l|l|}
\hline
Initializations   & Mixers    & Citation        & \begin{tabular}[c]{l}Worst case AR\\ at $p = 0$\end{tabular}                 & \begin{tabular}[c]{l}Converges to\\ Max-Cut?\end{tabular} & Comment                    \\ \hline
\multirow{2}{*}{Equal}    & Standard mixer      & \multirow{2}{*}{\cite{FGG14}}      & \multirow{2}{*}{0.5}            & \multirow{2}{*}{Yes}          & \multirow{2}{*}{\begin{tabular}[c]{l}Standard and\\ custom mixers are\\ equivalent for this\\ initialization\end{tabular}} \\ \cline{2-2}
    & Custom mixer        &                 &                &              &           \\ \hline
\multirow{3}{*}{BM-MC$_k$}                 & \begin{tabular}[c]{l}Standard mixer\\ (QAOA-warm)\end{tabular}   & {\cite{TFHMG20}}      & \begin{tabular}[c]{l}0 $(k = 2)$\\ 0.333 $(k = 3)$\end{tabular}              & No           &           \\ \cline{2-6} 
    & \begin{tabular}[c]{l}{\bf Custom mixers}\\ {\bf (QAOA-warmest)}\end{tabular} & {\bf This work}       & \begin{tabular}[c]{l}{\bf 0 $(k = 2)$}\\ {\bf 0.333 $(k = 3)$}\end{tabular}              & {\bf Yes}          & \begin{tabular}[c]{l}{\bf Converges quickly}\\ {\bf (see Section \ref{sec:QAOAWarmestCompare})}\end{tabular}            \\ \hline
\multirow{4}{*}{{\bf Projected GW}}              & \begin{tabular}[c]{l}{\bf Standard mixer}\\ {\bf(QAOA-warm)}\end{tabular}   & {\bf This work}       & \begin{tabular}[c]{l}{\bf 0.658 $(k = 2)$,}\\ {\bf 0.585 $(k = 3)$,}\\ {\bf (Corollary \ref{thm:preserveHyperplaneRatioQuantum})}\end{tabular} & {\bf No}           &           \\ \cline{2-6} 
    & \begin{tabular}[c]{l}{\bf Custom mixers}\\ {\bf (QAOA-warmest)}\end{tabular} & {\bf This work}       & \begin{tabular}[c]{l}{\bf 0.658 $(k = 2)$,}\\ {\bf 0.585 $(k = 3)$,}\\ {\bf (Corollary \ref{thm:preserveHyperplaneRatioQuantum})}\end{tabular} & {\bf Yes}               & \begin{tabular}[c]{l}{\bf Converges quickly}\\ {\bf (see Section \ref{sec:QAOAWarmestCompare})}\end{tabular}            \\ \hline
\multirow{6}{*}{\begin{tabular}[c]{l}Perturbed \\ Single Cut (with\\ regularization\\ angle $\theta^*$)\end{tabular}} & \begin{tabular}[c]{l}Standard mixer\\ (QAOA-warm)\end{tabular}   & {\cite{CFGRT22}} $(\theta^* = 0)$ case & \begin{tabular}[c]{l}$0.878 \cos^{2n}(\theta^*/2)$,\\ (Observation \ref{thm:singleCutAR})\end{tabular}   & No           &           \\ \cline{2-6} 
    & \begin{tabular}[c]{l}Custom mixers\\ (QAOA-warmest)\end{tabular} & Appendix I of {\cite{egger2020warm}}         & \begin{tabular}[c]{l}$0.878 \cos^{2n}(\theta^*/2)$,\\ (Observation \ref{thm:singleCutAR})\end{tabular}   & Yes          & \begin{tabular}[c]{l}Converges slowly for\\ small $\theta^*$\\ (see Section \ref{sec:QAOAWarmestCompare})\end{tabular}       \\ \cline{2-6} 
    & Other custom mixers                  & Section 2.3 of {\cite{egger2020warm}}       & 0.878          & No           & \begin{tabular}[c]{l}AR result occurs at \\ $(\gamma_1, \beta_1) = (0, \pi/2)$\end{tabular}               \\ \hline
\end{tabular}
}
\caption{\footnotesize A summary of results for different variants of QAOA (for Max-Cut) based on various combinations of initializations (equal superposition, BM-MC$_k$, projected GW, and perturbed single-cut initializations) and mixing Hamiltonian (standard, custom mixers (Section \ref{sec:customMixers}), and the mixer proposed by Egger et al. \cite{egger2020warm} for perturbed single-cut initializations). For various combinations, we state what is known regarding the convergence and worst-case approximation ratio (for depths $p\geq 0$) of the corresponding QAOA variant for graphs with non-negative edge weights.}
\label{tab:summaryTable}
\end{table*}

Lastly, we compare the two warm-start techniques above to perturbed single-cut initializations, another warm-start technique that has been considered in previous literature \cite{egger2020warm, CFGRT22}. We show that, theoretically, such initializations give better depth-0 guarantees for QAOA (compared to the two warm-start techniques previously discussed); however, we later show (Section \ref{sec:experiments}) that such initializations yield a comparatively much slower rate of convergence with increased circuit depth.

\subsection{SDP-based Relaxations}
\label{sec:GWRelaxations}
Recall that a solution to the Goemans-Williamson (GW) SDP relaxation consists of $n$ unit vectors $\{u_i \in \R^n: i \in V\}$, and their rounding algorithm uses a random hyperplane to obtain an approximation for the Max-Cut on the given graph $G$. We propose to create a warm-start to QAOA by instead rounding each of these vectors to $\R^k$ (with $k \in \{2, 3\}$) and then mapping the rounded vectors to the Bloch sphere.\footnote{An iterative rounding of the SDP solution was independently shown by Parekh and Thompson, a couple of months before our update on arXiv, for a more general setting of the Quantum Max-Cut \cite{parekh2022optimal}.}

Specifically, given $u \in \R^n$ for some positive integer $n$, and a linear subspace $A$ of $\R^n$, let $\Pi_A(u)$ denote the (Euclidean) projection of $u$ on $A$. Given $\Pi_A(u) \neq 0$, define
\[
    \Lambda_A(u) = \frac{\Pi_A(u)}{\|\Pi_A(u)\|_2},
\]
as the \emph{unit-scale} projection of $u$ on $A$. This corresponds to normalizing $\Pi_A(u)$ so it is a unit vector.

To motivate warm-starts using solutions for the Goemans-Williamson SDP, consider the following two-step process for rounding an optimal SDP solution $\{u_i: i \in [n]\}$ to a cut:
\begin{itemize}
    \item Choose a uniformly random linear subspace $A$ of $\R^n$ of dimension $k \in \{2, 3\}$, and consider the unit-scale projections $\Lambda_A(u_i) \in \R^k, i \in [n]$.
    \item Use Goemans-Williams hyperplane rounding on vectors $\Lambda_A(u_i), i \in [n]$ to get a (random) cut $M'$
\end{itemize}

That is, round the vectors $u_i \in \R^n$, $i \in [n]$ to a random $k$-dimensional subspace first, and then subsequently use the Goemans-Williamson hyperplane rounding on these $k$-dimensional vectors.

The following theorem shows that this two-step rounding is equivalent to the Goemans-Williamson hyperplane rounding on vectors $u_i$, $i \in [n]$; we include the proof in Appendix \ref{sec:proofs}.

\begin{theorem}\label{thm: subspace-hyperplane-rounding}
    Suppose we are given unit vectors $u_1, \ldots, u_n \in \R^n$ that form an optimal solution to the SDP relaxation for Max-Cut on some graph $G= (V,E)$ with $n$ vertices and non-negative weights on the edges. Suppose the GW random hyperplane rounding on $u_1, \ldots, u_n$ obtains a (random) cut $M$ of value $X$, and the two-step rounding described above produces a (random) cut $M'$ of value $Y$. Then,
    \begin{enumerate}
        \item The random variables $X = Y$. In particular, $\mathbb{E}(X) = \mathbb{E}(Y)$ and therefore, in expectation, $M^\prime$ provides a $0.878$-approximation to Max-Cut on $G$.
        \item Furthermore, the two-step rounding procedures produces a cut of value $(0.878 - \epsilon)$ times the Max-Cut value \emph{with high probability} if performed independently $\frac{\log n}{\epsilon}$ times for any constant $\epsilon \in (0, 1/2)$.
        Further, if $A_1, \ldots, A_{\frac{\log n}{\epsilon}}$ are the intermediate $k$-dimensional subspaces in these $\frac{\log n}{\epsilon}$ runs, there is at least some $A_i$ (with high probability) such that performing the hyperplane rounding on $A_i$ produces a (random) cut of average value at least $(0.878 - \epsilon)$ times the Max-Cut.
    \end{enumerate}
\end{theorem}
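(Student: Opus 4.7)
My plan is to build a coupling of the two rounding procedures from a single uniformly random unit vector in $\mathbb{R}^n$. The key algebraic observation is that whenever a unit vector $r$ lies in a linear subspace $A$, we have $r \cdot u = r \cdot \Pi_A(u)$ for every $u \in \mathbb{R}^n$, because $r$ is orthogonal to $A^\perp$; and since $\Lambda_A(u) = \Pi_A(u)/\|\Pi_A(u)\|_2$ differs from $\Pi_A(u)$ only by a positive rescaling, the sign is preserved, giving $\mathrm{sign}(r \cdot u_i) = \mathrm{sign}(r \cdot \Lambda_A(u_i))$ for every vertex $i$. Thus if $r$ is used as the hyperplane normal both in the one-step GW rounding on $\{u_i\}$ and as the hyperplane inside $A$ in the two-step procedure on $\{\Lambda_A(u_i)\}$, the two cuts coincide.

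To realize both procedures simultaneously from one draw, I would sample $A$ uniformly from the Grassmannian $\mathrm{Gr}(k,n)$ and then $r$ uniformly from the unit sphere of $A$. By $O(n)$-invariance of this construction, the marginal law of $r$ on $S^{n-1}$ must be rotation-invariant and hence uniform. Standard GW rounding with this $r$ yields a cut of value $X$, while the pair $(A,r)$ realizes the two-step rounding and yields $Y$; the sign identity above forces $X = Y$ as random variables, so $\mathbb{E}[X] = \mathbb{E}[Y]$, and the $0.878$-approximation in expectation for $Y$ is inherited directly from the classical Goemans-Williamson analysis. This establishes Part 1.

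For Part 2, I would study $Z(A) := \mathbb{E}[Y \mid A]$, the expected two-step cut value with the intermediate subspace frozen at $A$. From Part 1, $\mathbb{E}_A[Z(A)] \geq 0.878\,\text{Max-Cut}(G)$, and $Z(A) \leq \text{Max-Cut}(G)$ pointwise. Applying reverse Markov to the nonnegative quantity $\text{Max-Cut}(G) - Z(A)$ yields $\Pr_A[Z(A) \geq (0.878 - \epsilon)\,\text{Max-Cut}(G)] \geq \epsilon/(0.122 + \epsilon) = \Omega(\epsilon)$ for any constant $\epsilon \in (0, 1/2)$. Taking $N = \log(n)/\epsilon$ independent subspaces and using independence, the probability that none of the $A_i$ attains this bound is at most $(1 - \Omega(\epsilon))^{\log(n)/\epsilon} = n^{-\Omega(1)}$, which is the ``some $A_i$'' claim. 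The companion statement about the value delivered by the $N$ independent runs then follows from Hoeffding's inequality applied to the i.i.d. cut values, each bounded in $[0,\text{Max-Cut}(G)]$ and with mean at least $0.878\,\text{Max-Cut}(G)$.

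The main obstacle is making the distributional identity in the coupling fully rigorous: namely, that sampling a uniform $A \sim \mathrm{Gr}(k,n)$ together with a uniform unit vector inside $A$ really marginalizes to the uniform measure on $S^{n-1}$. The cleanest route I would take is a Gaussian construction: draw a standard Gaussian $n \times k$ matrix, Gram-Schmidt its columns to obtain an orthonormal basis of a uniformly random $A$, and then combine them with coefficients drawn uniformly from $S^{k-1}$; rotational equivariance of this construction plus a direct density computation identifies the induced law of $r$ as uniform on $S^{n-1}$. Once this coupling lemma is in place, Part 1 collapses to the sign identity plus GW's guarantee, and Part 2 reduces to a standard probability-amplification argument.
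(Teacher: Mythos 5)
Your proposal is correct and follows essentially the same route as the paper: the sign-preservation identity $\mathrm{sign}(r\cdot u_i)=\mathrm{sign}(r\cdot \Lambda_A(u_i))$ for $r\in A$ is exactly the paper's composition lemma for unit-scale projections, the Grassmannian coupling (random $A$, then uniform $r$ in $A$, marginalizing to uniform $r$ on $S^{n-1}$) is the paper's Lemma on composing random choices, and Part 2 is handled in both cases by a reverse-Markov bound on the conditional cut value plus amplification over $\frac{\log n}{\epsilon}$ independent subspaces. Your Gaussian construction actually supplies the rigor the paper omits for the marginalization step, and your Hoeffding variant for the first claim of Part 2 is a harmless (slightly weaker in the exponent) substitute for the paper's Markov argument.
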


The last part of Theorem \ref{thm: subspace-hyperplane-rounding} illustrates that we can obtain a high-quality $k$-\dimensional projected GW solution in regards to hyperplane rounding; however, it is natural to ask if any kind of guarantee can be preserved when mapping the solution to a quantum state. By adapting a theorem by Tate et al. \cite{TFHMG20}, we can answer the question in the affirmative as seen in Corollary \ref{thm:preserveHyperplaneRatioQuantum} below.

\begin{corollary}
\label{thm:preserveHyperplaneRatioQuantum}
Let $G$ be a graph with non-negative edge weights and let $x$ be a corresponding $\kappa$-approximate projected GW solution in $\mathbb{R}^3$ with respect to hyperplane rounding.\footnote{The theorem also holds more generally for any feasible BM-MC$_2$ or BM-MC$_3$ solution.} Let $R_U(x)$ denote random uniform rotation applied to $x$, i.e., a global rotation where a uniformly selected point on the sphere gets mapped to $(0, 0, 1)$. Then initialization of QAOA with $R_U(x)$ has an (worst-case) approximation ratio of $\frac{2}{3}\kappa$ at $p = 0$, i.e., only using quantum sampling with initial state creation and no algorithmic depth for QAOA. Similarly, if $x$ is a $\kappa$-approximate projected GW solution in $\mathbb{R}^2$, initialization of QAOA with $R_U(x)$ is a $\frac{3}{4}\kappa$-approximate solution at $p=0$.

If $x$ is chosen such that it is $\kappa$-approximate with $\kappa = 0.878 - \varepsilon$ (such an $x$ is easily found via Theorem \ref{thm: subspace-hyperplane-rounding}), then, for small $\varepsilon$, this yields (worst-case) approximation ratios (for depth-0 QAOA-warmest) of $\frac{3}{4}(0.878 - \varepsilon) \approx 0.658$ and $\frac{2}{3}(0.878 - \varepsilon) \approx 0.585$  for 2-dimensional and 3-dimensional projections respectively.
\end{corollary}

A proof of Corollary\footnote{Note that Theorem 2 in \cite{TFHMG20} works with the assumption that $x$ is $\kappa$-close whereas Corollary \ref{thm:preserveHyperplaneRatioQuantum} assumes that $x$ is $\kappa$-approximate (see Section \ref{subsec:classicalAlgoPrelims}).} \ref{thm:preserveHyperplaneRatioQuantum} can be found in Appendix \ref{sec:proofs}. This motivates us to use the projected vectors $\Lambda_A(u_i), i \in [n]$ to warm-start QAOA. We then use the same scheme as proposed by Tate et al. \cite{TFHMG20} to rotate and map the solution $x^* = \{\Lambda_A(u_i)\}_{i \in [n]}$ to the Bloch sphere. More specifically, we first perform either (1) a random vertex-at-top rotation of $x^*$ (a global rotation to $x^*$ so that a random vertex $v$ coincides with (0,0,1)) or (2) a uniformly random rotation $R_U$, and apply the natural mapping from the rotated solution to the Bloch sphere to obtain a separable, unentangled state $\ket{s_0}$ \cite{TFHMG20}. Figure \ref{fig: gw-warm-starts-schematic} illustrates the two-step rounding procedure and this warm-start.

\begin{figure*}[htbp]
    \centering
    \resizebox{0.9\linewidth}{!}{%
    \begin{tikzpicture}
        \tikzset{vertex/.style = {shape=circle,draw,minimum size=30pt}}
        \tikzset{edge/.style = {->,> = latex'}}
        
        \node[vertex] (a) at  (-2,5) {$\R^n$};
        \node[vertex] (b) at  (3,5) {$\R^k$};
        \node[vertex] (c) at  (6,5) {$\R^1$};
        \node[vertex] (d) at  (3,2) {};
        \node[vertex] (e) at  (6,2) {Cut};
        
        \node (f) at (0.5, 5) {{\Huge$\ldots$}};
        \node (g) at (2.2, 3.1) {\small Rotation};
        \node at (3.3, 3.05) {\scriptsize \color{blue} III};
        \node (h) at (4.5, 1.7) {QAOA};
        \node at (4.5, 2.2) {\scriptsize \color{blue} III};
        \node (i) at (3, 2.2) {\scriptsize Initial};
        \node (j) at (3, 1.8) {\scriptsize state};
        \node at (2,6.6) {{\scriptsize Goemans-Williamson hyperplane rounding}};
        \node at (2,6.1) {{\scriptsize I}};
        \node at (0.5,3.9) {{\scriptsize $k$-dimensional rounding}};
        \node at (0.5,4.3) {{\scriptsize II, {\color{blue} III}}};
        \node at (4.5,4.1) {{\scriptsize \begin{tabular}{c}
            Random hyperplane \\
            rounding
        \end{tabular}}};
        \node at (4.5,4.6) {{\scriptsize II}};
        \node at (-3.2,5) {\scriptsize \begin{tabular}{c}
            SDP  \\
            solution
        \end{tabular}};
        \node at (7.2,5) {\scriptsize \begin{tabular}{c}
            Random  \\
            Cut 
        \end{tabular}};
        
        \draw[edge, blue] (a)  to[bend right] (b);
        \draw[edge] (b)  to[bend right] (c);
        \draw[edge] (a)  to[bend left] (c);
        \draw[edge, blue] (b)  to (d);
        \draw[edge, blue] (d)  to (e);
    \end{tikzpicture}
    }
    \caption{\footnotesize A schematic for GW warm-starts. There are three procedures to obtain a cut from an SDP solution. The first is to use Goemans-Williamson hyperplane rounding (on the top labelled I). The second (labelled II) is to do a two-step rounding through an intermediate state in $\R^k, k \in \{2, 3\}$. We prove that this two-step rounding procedure is equivalent to Goemans-Willimson hyperplane rounding in Theorem \ref{thm: subspace-hyperplane-rounding}. The third procedure is our proposed warm-start of QAOA using the SDP solution (highlighted in blue, labelled III). This procedure involves rounding the SDP solution to $\R^k$ first, then rotating this solution using uniform or vertex-at-top rotations and mapping to the Bloch sphere to get an initial state for QAOA, and finally running QAOA on this initial state.}
    \label{fig: gw-warm-starts-schematic}
\end{figure*}
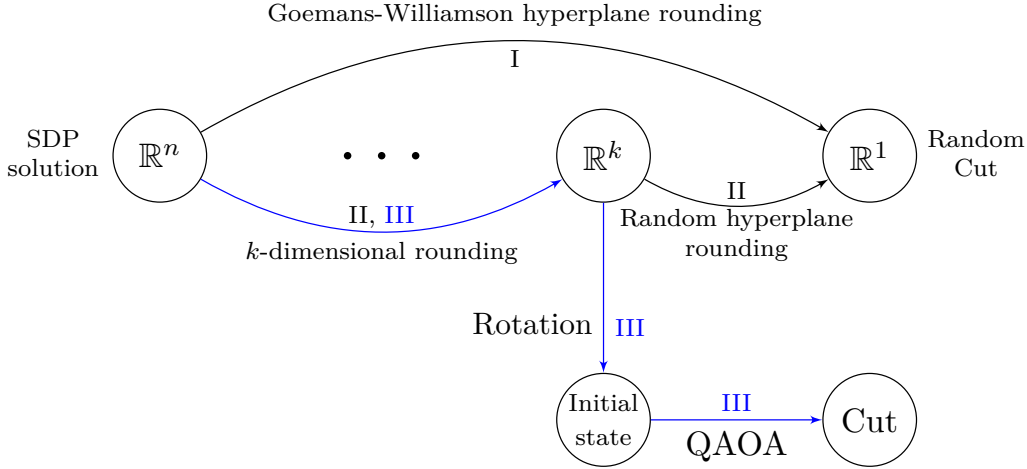

\subsection{Burer-Monteiro's Relaxation} For faster computational performance, we find a warm-start by using a locally optimal $k$-\dimensional solution $x^*$ (with $k \in \{2,3\}$) obtained by the Burer-Monteiro relaxation for Max-Cut on a $(k-1)$-sphere. Similar to projected GW solutions, we use the same rotation and quantum-mapping scheme to obtain an unentangled initial quantum state.

Theorem 2 of Tate et al. \cite{TFHMG20} shows that the same approximation ratios of $\frac{3}{4}\kappa$ and $\frac{2}{3}\kappa$ (for $k = 2$ and $k = 3$ solutions respectively) can be achieved for depth-0 QAOA-warmest if the solution is instead $\kappa$-close, i.e., $\text{BM-MC}_k(x^*) \geq \kappa \text{Max-Cut}(G)$. Using the same analysis as Goemans and Williamson \cite{GW95}, it is straightforward to show that a $\kappa$-close solution must also be $0.878\kappa$-approximate (i.e. $\text{HP}(x) \geq 0.878\kappa \text{Max-Cut}(G))$. For locally optimal\footnote{When $x^*$ is a globally optimal BM-MC$_k$ solution, it immediately follows that $x^*$ is $1$-close and $0.878$-approximate and hence {(worst-case)} approximation ratios of $\frac{3}{4}$ and $\frac{2}{3}$ (for $k = 2$ and $k = 3$ solutions respectively) are achieved for depth-0 QAOA-warmest; however, since the Burer-Monteiro relaxation is non-convex, finding such globally optimal solutions becomes intractable, especially as the number of nodes increases.} BM-MC$_k$ solutions, Mei et al. \cite{MMMO17} shows that such solutions are $\kappa$-close (and hence $0.878\kappa$-approximate) with $\kappa = 1-\frac{1}{k-1}$; thus, hyperplane rounding of such solutions yields (worst-case) approximation ratios of $0.878(0) = 0$ and $0.878\left(\frac{1}{2}\right) = 0.439$ while depth-0 QAOA-warmest can obtain {(worst-case)} approximation ratios of $\frac{3}{4}\left(0\right) = 0$ and $\frac{2}{3}\left(\frac{1}{2}\right) = \frac{1}{3}$ for $k = 2$ and $k = 3$ solutions respectively. 

Experiments in Appendix \ref{sec:GW_BMMC_Scaling} compare the (instance-specific) approximation ratios achieved by hyperplane rounding of both types of warm-start initializations discussed; in particular, projected GW SDP solutions do well (compared to $n$-\dimensional GW SDP solutions) but approximate BM-MC$_k$ solutions degrade in performance as the number of nodes increases.

\subsection{Perturbed Single Cut Initializations}
\label{sec:singleCutInitialization}
For the purposes of comparison in our numerical simulations (Section \ref{sec:experiments}), we briefly review one more type of warm-start which we refer to as perturbed single-cut initializations that other researchers \cite{egger2020warm, CFGRT22} have used for QAOA; more details regarding this approach can be found in (Appendix \ref{sec:singleCutAppendix}). Given a regularization angle $\theta^* \in [0,\pi/2]$ and a cut $(S, V \setminus S)$, one can initialize the initial quantum state so that qubits lie along the $xz$-plane of the Bloch sphere with vertices in $S$ and $V\setminus S$ being initialized at an angle $\theta^*$ away from the north and south poles of the Bloch sphere respectively; such a regularization angle aims to circumvent issues regarding reachability\cite{egger2020warm}.\footnote{It can be shown that if the initial qubit position lies on the poles of the Bloch sphere, e.g. if the initial qubit position is $\ket{0}$ (north pole), then it is guaranteed that that qubit will be measured to be $\ket{0}$ as the end of the QAOA algorithm if custom mixers (as described in Section \ref{sec:customMixers}) are used.}


Though not stated in the related works, if the warm-starts are initialized close to single-cuts, then they retain the classical approximation factors of the respective single-cut:

\begin{obs}
    \label{thm:singleCutAR}
    Let $G = (V, E)$ be a graph, and suppose a cut $(S, V \setminus S)$ is an $\alpha$-approximation to Max-Cut$(G)$, and is used to initialize a warm-start using regularization angle of $\theta^* \in [0,\pi/2]$. Then, a quantum measurement (i.e. depth-0 QAOA) of this state yields an approximation ratio of at least
    $\alpha \cdot \cos(\theta^*/2)^{2|V|}$.
\end{obs}

It is easy to see that such warm-starts approach an approximation ratio (at $p=0$) of 0.878, when initialized randomly with the distribution of cuts obtained using the Goemans-Williamson algorithm, as the regularization angle $\theta^*$ approaches zero. We discuss a proof in the Appendix \ref{sec:proofs} to show this. Although Proposition \ref{thm:singleCutAR} suggests it may be the superior warm-start technique for small regularization angle $\theta^*$; convergence is either lacking or slow empirically depending on the mixer used (see Table \ref{tab:summaryTable}).

\section{Numerical Simulations and Experiments}
\label{sec:experiments}

In this section, we demonstrate the importance of using a suitable warm-start and show that with such warm-starts, QAOA-warmest outperforms the Goemans-Williamson Max-Cut algorithm as well as standard QAOA \cite{FGG14} and QAOA-warm \cite{TFHMG20}. In particular, we show that QAOA-warmest does at least as well as these other algorithms at depths $p \geq 4$ for nearly every instance in our instance library. We consider comparisons with respect to a recent warm-starts approach of Egger et al. \cite{egger2020warm} in Appendix \ref{sec:additionalExperiments}.

As discussed earlier (Section \ref{sec:warmstarts}), for positive-weighted graphs, perturbed single-cut initializations have better depth-0 guarantees compared to the projected GW warm-starts that we propose. While possibly more advantageous at extremely low depths ($p=0,1$), we show that QAOA with perturbed single-cut initializations empirically converge to an optimal cut very slowly with increased circuit depth; for small enough regularization angle $\theta^*$, the convergence towards an optimal cut is not even perceivable at the circuit depths tested. Meanwhile, with suitable warm-starts, the convergence is much quicker: for over $98\%$ of instances tested, we found that depth-8 QAOA-warmest with BM-MC$_2$ initializations yields nearly optimal cuts.

Lastly, we consider the effects of noise on QAOA and its variants in the context of actual quantum devices (i.e. the IBM-Q Guadalupe and Quantinuum H1-1 devices). We show that for QAOA and its variants, the noise from these devices flattens the landscape without significantly altering the location of local extrema. Additionally, we show that even with noise, QAOA-warmest (with suitable warm-start) maintains a significant fraction of its expected solution quality, which suggests it may be useful for near-term NISQ devices (potentially with some noise mitigation) \cite{P18}. 

\subsection{Simulation Details}
 For our simulations, we use the CI-QuBe library\footnote{\url{https://github.com/swati1729/CI-QuBe}} \cite{CI-QuBe2021} which contains graphs up to 11 nodes using a variety of random graph models (Erd\H{o}s-R\'enyi, Barabasi Albert, Dual of Barabasi-Albert, Watts-Strogatz, Newman-Watts-Strogatz, and random regular graphs) and edge weight distributions. These instances, which we refer to as $\mathcal{G}$, have a varied distribution of various graph properties, which is important when testing heuristics and algorithms for solving this problem.
 
In our simulations, for each instance, we first find five locally approximate solutions to BM-MC$_2$ and keep the best (in terms of the BM-MC$_2$ objective value). We do the same for BM-MC$_3$. Similarly, for each instance, we solve the GW SDP, perform 5 projections to random 2-dimensional subspaces, and keep the best (in terms of the BM-MC$_2$ objective); this process is repeated (using the same GW SDP solution) with projections to 3-dimensional subspaces.  Next, for both the best BM-MC$_2$ and best BM-MC$_3$ solution, and for both of the best projected GW solutions (in 2 and 3 dimensions), we perform 5 different vertex-at-top rotations and 5 different uniform rotations, yielding 40 different initial warm-started quantum states per instance. We run QAOA-warm and QAOA-warmest using all 40 of these warm-started states and, for each combination of \dimension and rotation scheme, record which one performed the best in terms of (instance-specific) approximation ratio (as defined in Section \ref{sec:approxRatio}). Finally, we run standard QAOA on the instance.

For each run for each variant of QAOA, we initialize the variational parameters $\gamma$ and $\beta$ close to zero\footnote{For standard QAOA, many optimizers will immediately terminate if initialized exactly at the origin due to the presence of a saddle point. Instead, each variational parameter $(\gamma_1,\dots,\gamma_p,\beta_1,\dots,\beta_p)$ is initialized by sampling uniformly from the interval $[-0.0001,0.0001]$.} and each run terminates when the difference in successive values of $F_p(\gamma,\beta)$ in the optimization loop is less than $\bar{W} \times 10^{-6}$ where $\bar{W}$ is the sum of the absolute values of the edge weights.

To simplify the results, the figures and tables in this section will only consider runs of QAOA-warmest that use BM-MC$_2$ initializations with vertex-at-top rotations. This choice is due to runtime considerations and to allow for easier comparisons with previous related literature \cite{TFHMG20, egger2020warm}; more details on the results and the choice of this decision can be found in Appendix \ref{sec:additionalExperiments}.

Additionally, for conciseness, in this section we will use ``approximation ratio" to mean the instance-specific approximation ratio as described in Section \ref{sec:approxRatio}.

\begin{table*}[htbp]
\resizebox{\linewidth}{!}{%
\begin{tabular}{|c|c?c|c|c|c?c|c|c?c|c|c?c?}
        \Xhline{2\arrayrulewidth}& 1st Best & \multicolumn{4}{c?}{QAOA-warmest} & \multicolumn{3}{c?}{Standard QAOA} & \multicolumn{3}{c?}{GW} & \multirow{2}{*}{Tie}\\\cline{2-12}
     &2nd Best &  \textbf{*} & Standard & Warm & GW & Warmest & Warm & GW & Warm & Warmest & Standard & \\\Xhline{3\arrayrulewidth}
   \multirow{4}{*}{\makecell{Positive\\Weighted \\ Graphs}}&  p=1 & \textbf{90.3 \%} & 0.69\% & 18.85\% & 15.22\% & 0.0\% & 0.0\% & 0.0\% & 0.17\% & 9.51\% & 0.0\% & 55.53\%\\\cline{2-13}
    & p=2 & \textbf{98.1\%} & 0.69\% & 20.24\% & 25.08\% & 0.0\% & 0.0\% & 0.0\% & 0.0\% & 1.9\% & 0.0\% & 52.07\%\\\cline{2-13}
    & p=4 & \textbf{100\%} & 8.65\% & 17.64\% & 20.58\% & 0.0\% & 0.0\% & 0.0\% & 0.0\% & 0.0\% & 0.0\% & 53.11\%\\\cline{2-13}
    & p=8 & \textbf{100\%} & 25.77\% & 5.01\% & 2.94\% & 0.0\% & 0.0\% & 0.0\% & 0.0\% & 0.0\% & 0.0\% & 66.26\%\\\Xhline{2\arrayrulewidth}
    \multirow{4}{*}{All Graphs} & p=1 & \textbf{90.6\%} & 0.35\% & 17.96\% & 17.43\% & 0.0\% & 0.0\% & 0.0\% & 0.53\% & 8.84\% & 0.0\% & 54.86\%\\\cline{2-13}
    & p=2 & \textbf{98.1\%} & 0.44\% & 20.17\% & 24.86\% & 0.0\% & 0.0\% & 0.0\% & 0.26\% & 1.68\% & 0.0\% & 52.56\%\\\cline{2-13}
    & p=4 & \textbf{99.6\%} &9.11\% & 19.2\% & 18.93\% & 0.0\% & 0.0\% & 0.0\% & 0.08\% & 0.26\% & 0.0\% & 52.38\%\\\cline{2-13}
    & p=8 & \textbf{99.6\%} & 27.16\% & 7.96\% & 3.18\% & 0.17\% & 0.0\% & 0.0\% & 0.26\% & 0.0\% & 0.0\% & 61.23\%\\\Xhline{2\arrayrulewidth}
\end{tabular}
}
\caption{\footnotesize For each Max-Cut algorithm (Goemans-Williamson, standard QAOA, QAOA-warmest, and QAOA-warm), we report the percentage of instances for which it did the best and second-best (in terms of approximation ratio). Both QAOA-warm and QAOA-warmest use BM-MC$_2$ warm-starts. 
There is a tie (last column) if the top two algorithms have approximation ratios that differ by no more than 0.01. QAOA-warmest is a part of every tie. Each instance is either accounted for in ``Tie" or the other columns. For the column labeled \textbf{*}, we report, for each circuit depth, the percentage of instances for which QAOA-warmest was within 0.01 approximation ratio of the best algorithm.}
\label{fig:pieChartTable}

\end{table*}

\begin{figure*}[t]
    \centering
    \includegraphics[scale=0.55]{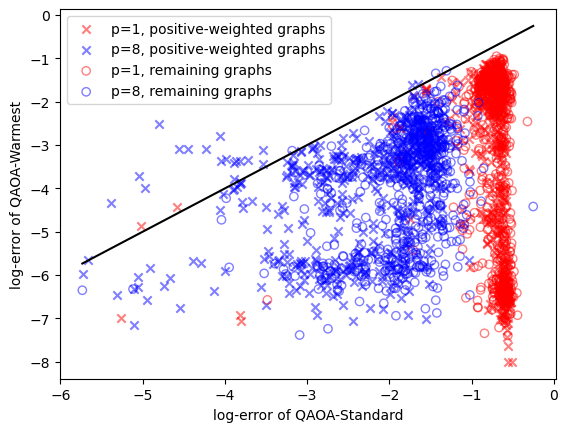}\includegraphics[scale=0.55]{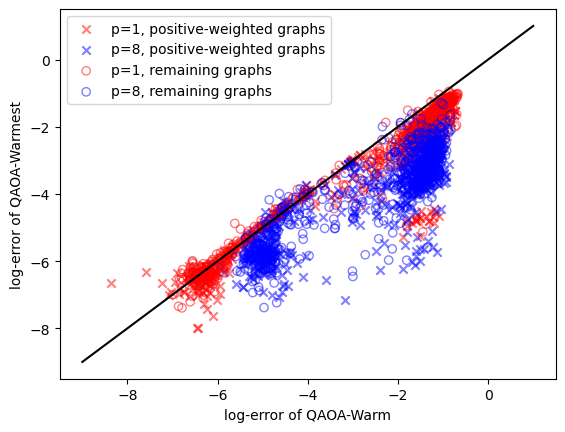}
    \caption{\footnotesize For both plots, we compare the log-error of QAOA-warmest to both QAOA-warm (right) and standard QAOA (left). BM-MC$_2$ warm-starts are used for both approaches. Each marker in the plot corresponds to a combination of instance (from our graph ensemble $\mathcal{G}$) and circuit depth (either $p=1$ or $p=8$) with the shape of the marker being used to denote if the instance has only positive edge weights or not. Points below the black line correspond to instances where QAOA-warmest performs better than the other algorithm being compared.}
    \label{fig:scatter}
\end{figure*}

\subsection{Comparing QAOA-warmest to Other Methods}
\label{sec:QAOAWarmestCompare}
In Table \ref{fig:pieChartTable}, we show the proportion of graphs where each Max-Cut algorithm (GW and variants of QAOA) performs the best for varying values of depth $p=1,2,4,8$. We observe that for nearly all instances, QAOA-warmest beats or performs as well as every other QAOA variant considered and eventually performs at least as well as GW as the circuit depth increases. We note that at $p=8$, QAOA-warmest beats GW on all but three instances but this is easily rectified with a suitable vertex-at-top rotation. Also at $p=8$, QAOA-warmest outperforms standard QAOA on all but two instances but the gap in  approximation ratio is less than 0.02. More information regarding these five instances can be found in Appendix \ref{sec:additionalExperiments}. 

We next report the improvement in  approximation ratios when considering standard QAOA, QAOA-warm, and QAOA-warmest with circuit depths $p=1,8$. For convenience, for any Max-Cut algorithm, we define the  approximation error (AE) by $\text{AE} = 1 - \text{AR}$ where AR is the  approximation ratio. Additionally, we will refer to $\log_{10}(\text{AE})$ as the log-error. Figure \ref{fig:scatter} gives a comparison of log-errors achieved for various instances. All points below the $x=y$ solid line indicate instances where QAOA-warmest beats either standard QAOA or QAOA-warm. Note that due to the plots being log-scaled, being below -2 on each axis corresponds to having an  approximation ratio of at least 0.99. For both plots, we see that higher  approximation ratios can be achieved for positive-weighted graphs (cross-marks) and that QAOA-warmest performs significantly better for most instances. When comparing QAOA-warmest and standard QAOA at various circuit depths (red v/s blue), we see that the performance for both standard QAOA and QAOA-warmest improves at $p=8$; however, this phenomenon is not that apparent for QAOA-warm (which is known to plateau in performance with increased circuit depth for small instances).

\begin{figure}[htbp]
    \centering
    \includegraphics[width=\columnwidth]{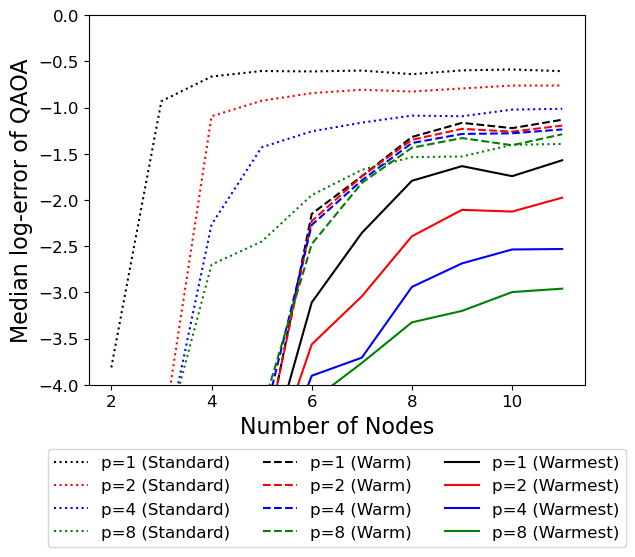}
    \caption{\footnotesize Trends in median log-error of standard QAOA (dotted), QAOA-warm (dashed), and QAOA-warmest (solid) as one varies the number of nodes and circuit depths; the median is taken across graphs in instance library $\mathcal{G}$. BM-MC$_2$ warm-starts are used for both QAOA-warm and QAOA-warmest.}
    \label{fig:trends}
\end{figure}

Next, we show the trend in  approximation quality with increase in the number of nodes $n$ and the depth of the circuit $p$, in Figure \ref{fig:trends}. We see that, across all node sizes, that circuit depth plays an important role in how good an  approximation ratio one can expect to achieve using QAOA-warmest. It is clear that QAOA-warmest has superior (median) performance compared to the other algorithms for every combination of circuit depth and node-size. We remark that in contrast, an increased circuit depth resulted in only a marginal improvement in the  approximation ratio for QAOA-warm, bolstering our claim that custom mixers are crucial to the improvement in performance of QAOA.

In Figure \ref{fig:convergenceComparison}, we compare the convergence rates of standard QAOA and QAOA-warmest with various initilializations: BM-MC$_k$, {perturbed} single-cut initializations (as described in Section \ref{sec:singleCutInitialization}), and uniform random initializations\footnote{Here, a uniform random initialization refers to a separable state that is randomly created by (independently) picking a position on the surface of the Bloch sphere uniformly at random for each qubit, and then tensorizing the qubits. Although the phase of each qubit does not effect the expected cut value obtained from QAOA-warmest (as discussed in Section \ref{sec:customMixers}), it should be noted that the distribution of initializations obtained from sampling from the surface of the Bloch sphere and then removing the phases is different than the distribution of initializations obtained from sampling uniformly from the portion of the great-circle formed by the intersection of the Bloch sphere surface with the $xz$-plane with positive $x$-coordinate.  Nonetheless, we have verified (via numerical simulation) that QAOA-warmest performs similarly between the two randomization schemes.} with custom mixers for a random 10-node instance in our graph library. Consistent with what is seen in the other figures, we see that standard QAOA, QAOA-warmest, and uniform random initializations  quickly achieve high  approximation ratios at relatively low circuit depths, with the BM-MC$_2$ initialization doing the best amongst the three across all circuit depths tested. On the other hand, perturbed single-cut initializations do not converge as quickly; in particular, when $\theta^*$ is small ($\theta^* \in \{0.1, 0.01\}$) hardly any improvement in the  approximation ratio is observed at all. For larger regularization angles ($\theta^* = 0.5$), we do see worse performance at low depths ($p=0,1$) as well as a noticeable increase in performance with increased circuit depth; however, the amount of this increase is small compared to achieved by QAOA-warmest which begins to outperform the perturbed single-cut initialization (with $\theta^*=0.5$) for $p \geq 2$. We find similar qualitative results for most other instances in our graph ensemble $\mathcal{G}$.

Table \ref{tab:convergenceAgg} provides a more aggregated view of the convergence of QAOA-warmest with different choices of initializations across the entire instance library $\mathcal{G}$. For each combination of initialization method and circuit depth, the table states the percentage of instances in the library which achieved an instance-specific approximation ratio of 99.0\% of higher. The data for the perturbed single-cut initializations were obtained as follows: for each instance, we obtained an optimal solution to the GW SDP relaxation, we performed 100 hyperplane roundings on the optimal SDP solution to obtain 100 cuts, we discarded all cuts whose value is more than $0.98 \cdot \text{\sc Max-Cut}(G)$, and we used the best remaining cut to create a perturbed single-cut initialization with regularization angle $\theta^* = 0.1$ (as described in Section \ref{sec:singleCutInitialization}); the discarding of cuts with very high values were done in order to ensure that, in the case of a high instance-specific approximation ratio, such a ratio can be partly attributed to the quantum circuit and not just the initial cut itself. When using the BM-MC$_2$ initializations (with vertex-at-top rotations), there are steady improvements with increased circuit depths with 42.3\% of the instances achieving an instance-specific AR of 99.0\% at depth-0; this percentage increases to 98.1\% at depth-8. With the standard QAOA initialization, $\ket{+}^\otimes$, none of the instances achieve an instance-specific AR of 99.0\% or more; it is not until depth $p=8$ that we see a considerable fraction of the graphs (39.4\% respectively) achieving such an AR. As for the perturbed single-cut initialization with small regularization angle, we find that nearly none of the instances achieve an instance-specific AR of 99.0\% with the exception of a few instances at depth $p=8$.

\begin{table}[!t]
\centering
\resizebox{\linewidth}{!}{%
\begin{tabular}{|c|l|l|l|l|l|}
\hline
\multicolumn{1}{|l|}{}   & $p=0$    & $p=1$    & $p=2$    & $p=4$    & $p=8$    \\ \hline
BM-MC$_2$  & 42.3\% & 57.8\% & 75.0\% & 91.9\% & 98.1\%   \\ \hline
\begin{tabular}[c]{@{}c@{}}Standard\\ Initialization\\ $\ket{+}^{\otimes n}$\end{tabular}   & 0\%    & 0.6\%  & 2.4\%  & 8.7\% & 39.4\% \\ \hline
\begin{tabular}[c]{@{}c@{}}Single Cut\\ Initialization\\ with $\theta^* = 0.1$\end{tabular} & 0\%    & 0\%    & 0\%    & 0\%    & 0.7\%  \\ \hline
\end{tabular}
}
\caption{\footnotesize \label{tab:convergenceAgg} The percentage of instances for which QAOA-warmest achieves an instance-specific AR of 99.0\% for each combination of circuit depth and  initialization method  (standard initialization, BM-MC$_2$ initialization, perturbed single-cut initialization with $\theta^* = 0.1$).}
\end{table}

\begin{figure}[!t]
    \centering
    \includegraphics[scale=.55]{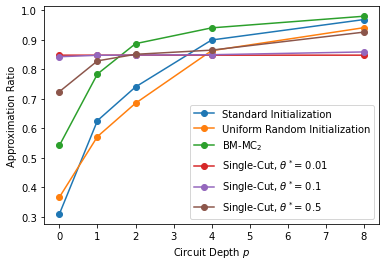}
    \caption{\footnotesize Instance specific approximation ratios achieved by QAOA-warmest with various types of initializations: standard initialization (equivalent to standard QAOA), BM-MC$_2$ initialization, perturbed single-cut initialization, and uniform random initializations) for a randomly selected 10-node instance. For QAOA-warmest, we used a BM-MC$_2$ initialization with a vertex-at-top rotation; here we intentionally chose the worst vertex (i.e. the one with worst AR at depth-0) to better illustrate the convergence rate. For the perturbed single-cut initialization, we chose a cut that obtains an instance-specific approximation ratio of 0.848 and created initial quantum states using regularization angles of $\theta^* = 0.01, 0.1,$ and $ 0.5$ radians. For the uniform random initializations, five such initializations were created and only the best one was kept (i.e. the one with best AR at depth-0).} 
    \label{fig:convergenceComparison}
\end{figure}

\begin{figure}[!t]
    \centering
    \includegraphics[trim=80 30 100 100, clip,scale=0.2]{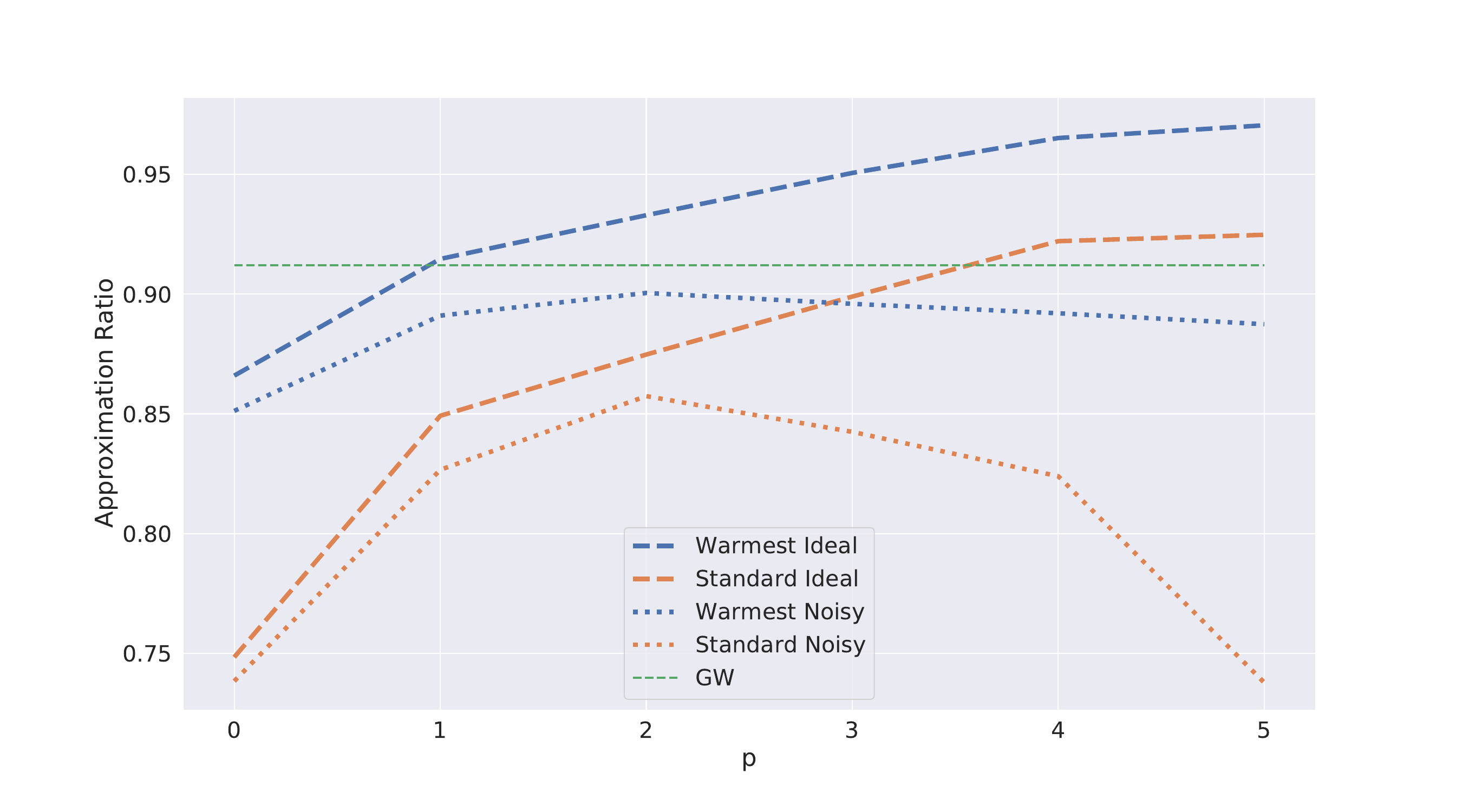}
    \caption{\footnotesize Performance of QAOA-warmest (with BM-MC$_2$ warm-starts) and standard QAOA as a function of QAOA depth for an ideal (dashed) and noisy simulation (dotted). For the chosen 20-node graph, GW acheives an  approximation ratio of 0.912, while in the ideal case, QAOA-warmest outperforms GW for $p\geq 2$ while standard QAOA requires $p>4$. The noise simulation is based on calibration data from IBM-Q's Guadalupe device.}
    \label{fig:Karlov_20}
\end{figure}

\begin{figure}[!t]
    \centering
    \includegraphics[scale=0.6]{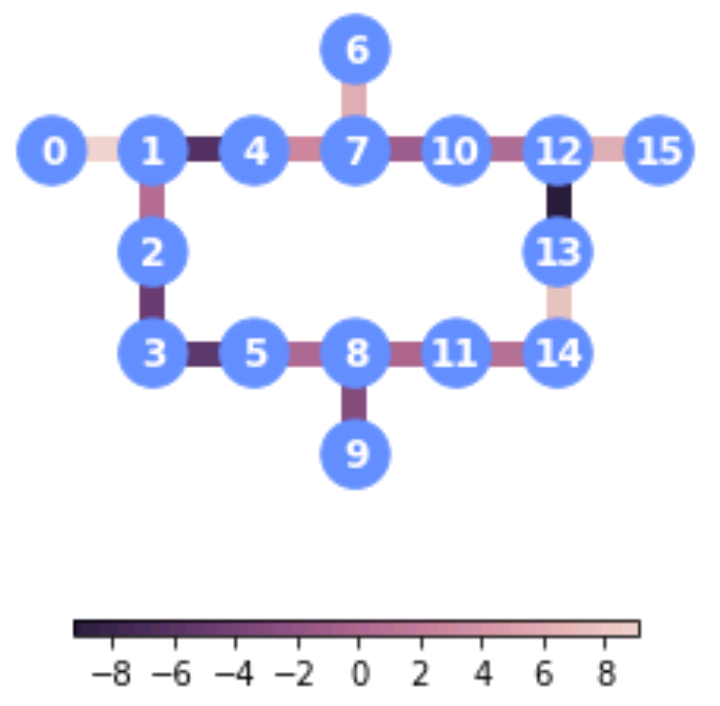}
    \caption{\footnotesize IBMQ Guadalupe device which shows the physical connectivity of qubits. We choose a native graph which matches this connectivity and random weights as indicated by color.}
    \label{fig:Guadalupe_device}
\end{figure}

\subsection{QAOA-warm With Noise}
In addition to the theoretical (noise-less) behavior of QAOA-warmest, we also demonstrate its performance with several example cases using noise models and experiments on IBM-Q hardware. For both the ideal and noisy simulation, we use IBM's Qiskit software package~\cite{qiskit}. In the case of the noisy simulation, we exercise the capability of Qiskit to pull calibration data directly from the Guadalupe device and use it to construct a noise model for use in the simulator. In principle, this combination of actual hardware calibration and noise simulation should predict the behavior of the device. However, the noise models themselves have inherent assumptions that the noise itself is uncorrelated and only directly models effects such as single and two-qubit gate errors, finite qubit lifetime and dephasing time, and readout noise. While these serve as a good starting point to model the noise in a quantum device, as shown in the Figure  \ref{fig:Guadalupe_comparison}, there is significant disagreement between the noise simulation and the actual hardware results. This disagreement is mainly attributed to the assumptions mentioned earlier, specifically the assumption of uncorrelated noise, where physical hardware experience significant crosstalk. For an example demonstration of how typical noise models are utilized, see Appendix~\ref{sec:app_noise}. 

We show in Figure~\ref{fig:Karlov_20} the performance of QAOA-warmest and standard QAOA on an instance generated via a construction by Karloff \cite{K99}; this unweighted graph is chosen due to the fact that it is a small graph that achieves a GW approximation ratio of 0.912 (see Appendix \ref{sec:twentyNodeGraph}) which is close to the lower bound of 0.878 provided by the GW algorithm. In contrast, both QAOA-warmest and standard QAOA are able to outperform this approximation ratio, under ideal, noiseless conditions. However, note that QAOA-warmest outperforms standard QAOA for all QAOA depths and outperforms GW after $p>1$. We also consider a noise model utilizing Qiskit's built in modules \cite{qiskit} and use calibration data in order to simulate IBM's Guadalupe device. We note that QAOA-warmest outperforms standard QAOA for all noisy simulations, using the same fixed noise model. 

In addition to this device-focused noise simulation, we also run QAOA-warmest on a native hardware graph matching IBM's Guadalupe device. In general, the connectivity of the graph and its matching to physical qubit hardware connectivity plays a key role in performance due to the overhead of inserting swap operations in order to compensate for limited connectivity \cite{harrigan2021quantum}. Therefore, the simplest graph is a so-called native graph, which is a graph with exactly the same connectivity as the underlying physical qubit device. This graph is shown in Figure \ref{fig:Guadalupe_device}. We assign randomly chosen weights to each edge chosen from a uniform distribution $[-10,10]$. Finding the Max-Cut solution to this graph can still be done by brute force and for a fixed choice of randomly weighted edges, we find the Max-Cut value to be approximately 33.96209.

We show the results of QAOA-warmest and standard QAOA in an ideal simulation and on hardware in {Figure} \ref{fig:Guadalupe_results}. The color scale is shared across all plots,  showing that QAOA-warmest is able to find larger cut values as compared to standard QAOA, both in simulation and on actual hardware. For hardware results, we apply the efficient SPAM noise mitigation strategy based on a CTMP strategy~\cite{Bravyi21,barron2020measurement}.

\begin{figure*}[!t]
    \centering
    \rotatebox{90}{\hspace{1.1cm} Warmest \hspace{1.5cm} Standard}\hspace{0.1cm}\rotatebox{90}{\hspace{1.8cm} $\beta$ \hspace{2.8cm} $\beta$}\hspace{0cm}
    \vspace{-0.1cm}
    \includegraphics[scale=0.30]{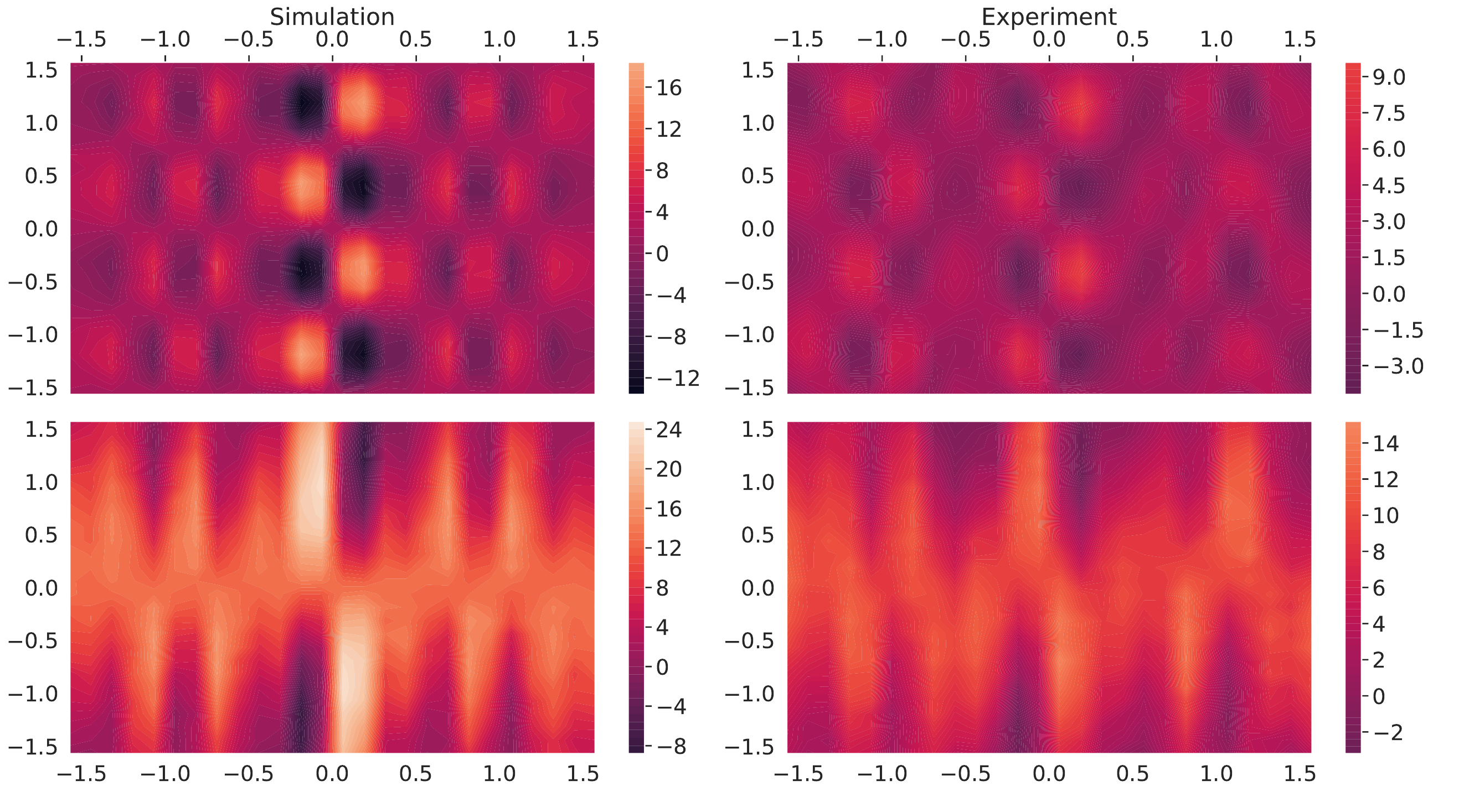}
    \vspace{-0.5cm}
    $$\gamma \hspace{7cm} \gamma$$
    \caption{\footnotesize Performance of QAOA-warmest (with BM-MC$_2$ warm-starts) compared to standard QAOA in an ideal simulation and on IBM-Guadalupe hardware. Each subfigure is a scan of $p=1$ parameters $\beta$ vs $\gamma$, brighter regions indicating values which result in a larger cut. All figures share the same absolute color scale.}
    \label{fig:Guadalupe_results}
\end{figure*}

\begin{figure*}[!t]
    \centering
    \includegraphics[width=\linewidth]{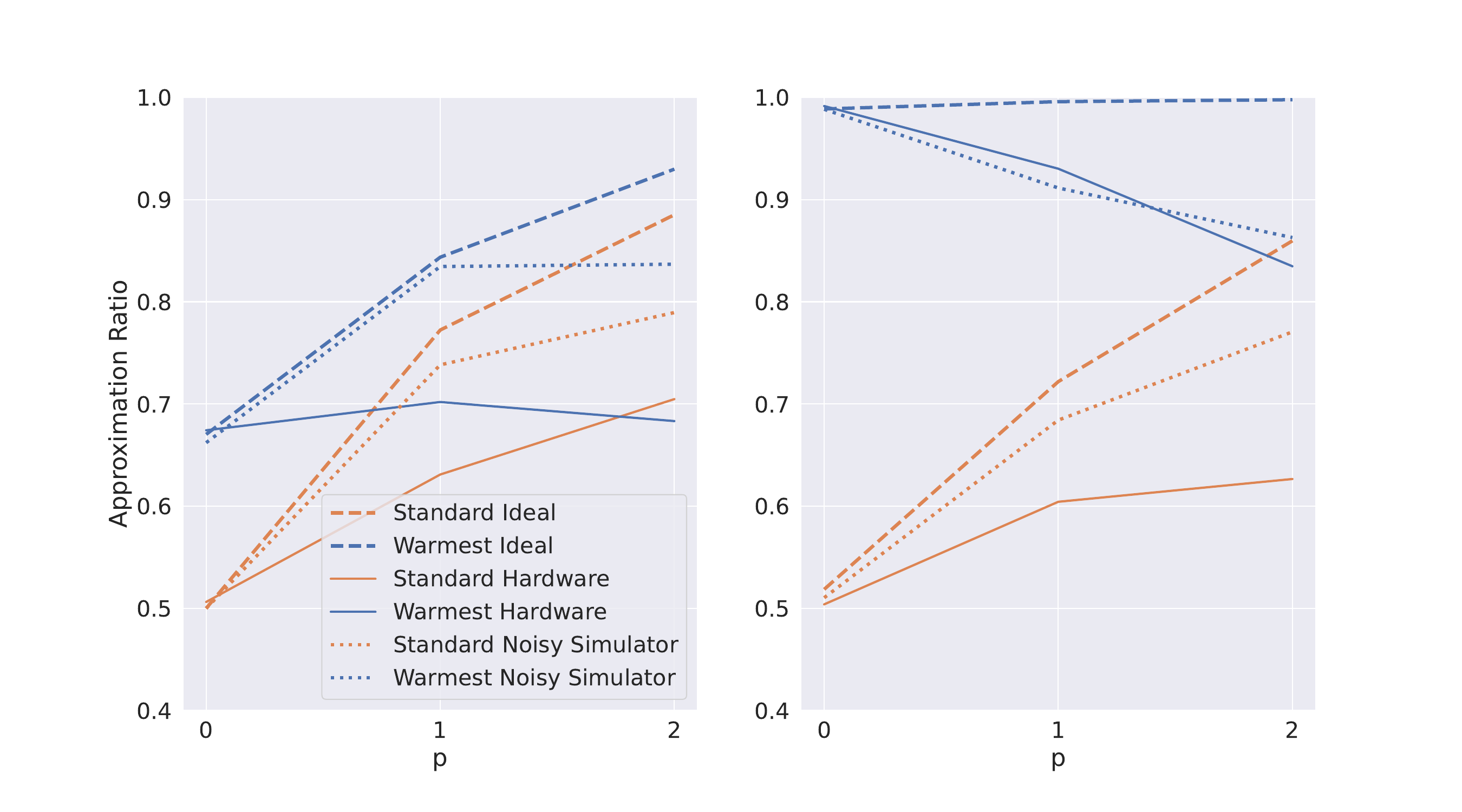}
    \caption{\footnotesize Performance of QAOA-warmest (with BM-MC$_2$ warm-starts) compared to standard QAOA in an ideal simulation (dashed), noisy simulation (dotted) and on IBM Guadalupe hardware (solid). Each subplot considers a different native hardware graph with randomly selected weights as well as a different choice of initialization procedures. (Left) For a random initialization of the classically informed QAOA-warmest start rotation. (Right) For an efficiently selected, optimal choice for the classically informed QAOA-warmest start rotation.}
    \label{fig:Guadalupe_comparison}
\end{figure*}

In order to demonstrate the scaling of QAOA-warmest, we also show results for depths $p=0,1,2$ in ideal simulation, noisy simulation, and on hardware, as shown in Figure \ref{fig:Guadalupe_comparison}. We define $p=0$ to simply mean the preparation and measurement of the initial state.\footnote{The warm-starts come from $k = 2$ or $k = 3$ solutions whereas as the GW algorithm uses $n$-\dimensional solutions. Moreover, the way cuts are determined are different (hyperplane rounding vs quantum measurement) so we should expect there to be a difference in approximation ratios.} In the case of QAOA-warmest, this directly demonstrates the ability of the QPU to create and measure the classically suggested cut.

In addition, Figure \ref{fig:Guadalupe_comparison} shows results for two different choices of the state initialization for QAOA-warmest. The left plot shows the result of applying a uniform rotation in the classical preprocessing stage whereas the right shows the result of using the best vertex-at-top rotation amongst the 16 possible vertices, i.e., the rotation that gives the largest approximation ratio at $p=0$. These two plots clearly show the importance of initializing the initial quantum state in an optimal way. Another important point shown in these plots is that small scale QAOA problems on 16 nodes, are nearly exactly solved when a suitable vertex-at-top rotation is chosen. When the best vertex-at-top rotation is used, the use of QAOA actually shows a decrease in solution quality on hardware. This is due to the inherent noise on the device and the fact that the solution quality is nearly optimal in the initial state. The presence of noisy two-qubit gates in further layers of the algorithm (32 CNOT gates per layer), overwhelm the small benefit of the algorithm itself for these small problems. A remaining goal then is to find native graphs on hardware for large systems, while also offering sufficiently low error rates, in order to demonstrate improved solutions with an optimally chosen initial quantum state and increased algorithmic depth ($p>0$).

Finally, we show results for QAOA-warmest run on Quantinuum hardware in Figure \ref{fig:quantinuum}. This 20-ion linear trap allows for arbitrary qubit connectivity and thus has no overhead associated with mapping a specific graph to the hardware. In this case, we again consider the 20-node Karloff instance graphs used in Figure \ref{fig:Karlov_20}, but here we use the GW warmest start initialization. Notably we utilize a uniform rotation which gives a large initial approximation ratio (at $p=0$) and while hardware cannot improve on this initial state, the degradation is small considering that each $p$ layer requires 90 two-qubit ZZ interactions (among many other single qubit operations). We also note the close agreement of the noisy simulator to the actual hardware results. In order to reduce the cost of these hardware runs, we only consider a single objective function evaluation (with 1000 shots), using noiseless simulations to find the optimal $\gamma_i,\beta_i$ at each $p$ depth. Even with these considerations, we see that the GW Warmest initialization outperforms the average GW performance on hardware up to $p\leq2$. These results indicate that current quantum hardware is very close to demonstrating improvement over the Goemans-Williamson algorithm for Max-Cut on known hard instance graphs when using the QAOA-warmest initialization procedure, and it already outperforms the average performance of GW on this particular graph.

\begin{figure}[!t]
    \centering
    \includegraphics[trim=80 30 100 100, clip,scale=0.2]{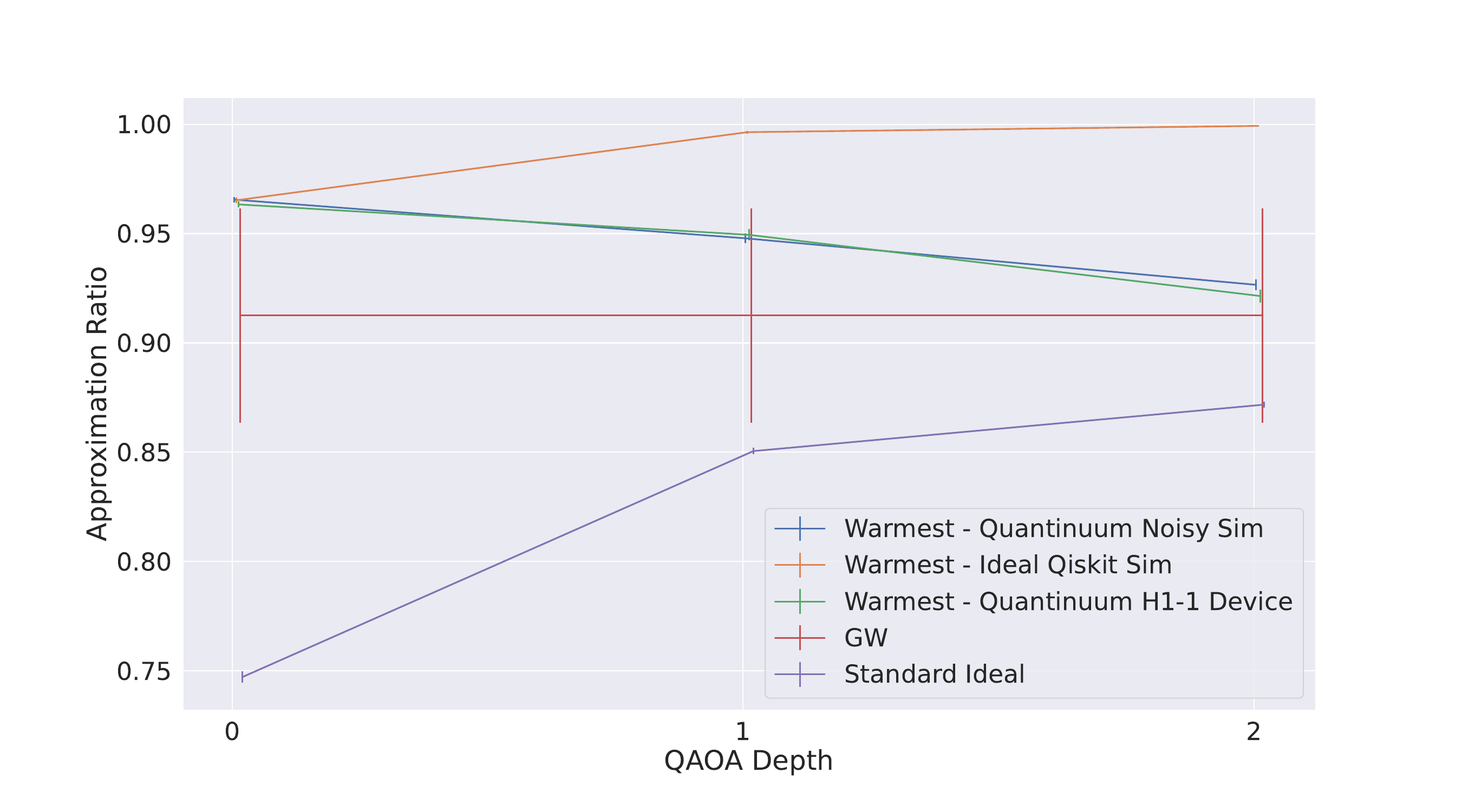}
    \caption{\footnotesize QAOA-warmest performance on Quantinuum simulators and hardware. The 20-node Karloff instance considered here is directly mapped to the fully-connected Quantinuum 20-ion hardware. In contrast to Figure \ref{fig:Karlov_20}, here we use a GW warmest start and find that this particular initialization outperforms GW on average for $p\leq2$.}
    \label{fig:quantinuum}
\end{figure}

\begin{figure}[!t]
\centering
    \includegraphics[scale=0.45]{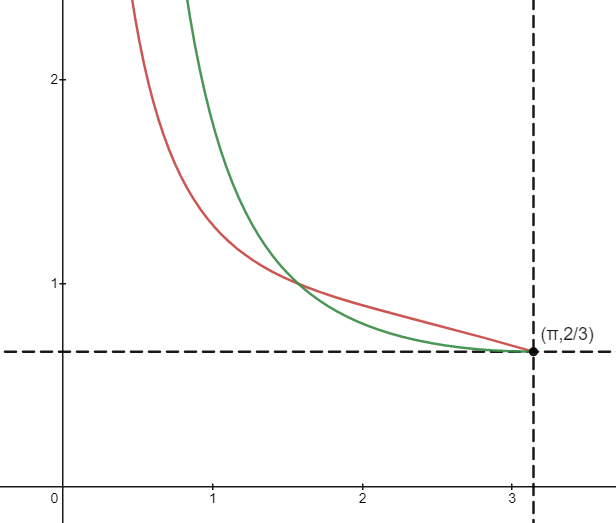}
 \caption{\footnotesize In red, the function $\frac{ \frac{1}{2}\left(1 - \frac{\cos \theta}{k}\right)}{\theta/\pi}$ that is minimized in the proof of Corollary \ref{thm:preserveHyperplaneRatioQuantum} with $k=3$. In green, a similar function $\frac{ \frac{1}{2}\left(1 - \frac{\cos \theta}{k}\right)}{\frac{1}{2}(1-\cos(\theta)}$ that is minimized in the proof Theorem 2 of \cite{TFHMG20} (where the BM-MC$_k$ objective is compared to the maximum cut instead of the expected cut value from hyperplane rounding) with $k=3$. Over the interval $[0,\pi]$, both achieve a minimum value of $2/3$ at $\theta = \pi$. The corresponding plots for $k=2$ are similar but instead both functions reach a minimum value of $3/4$ at $\theta = \pi$.}
 \label{fig:preserveHyperplaneRatioQuantum}
\end{figure}

\begin{figure*}[!t]
    \centering
    \includegraphics[scale=0.25]{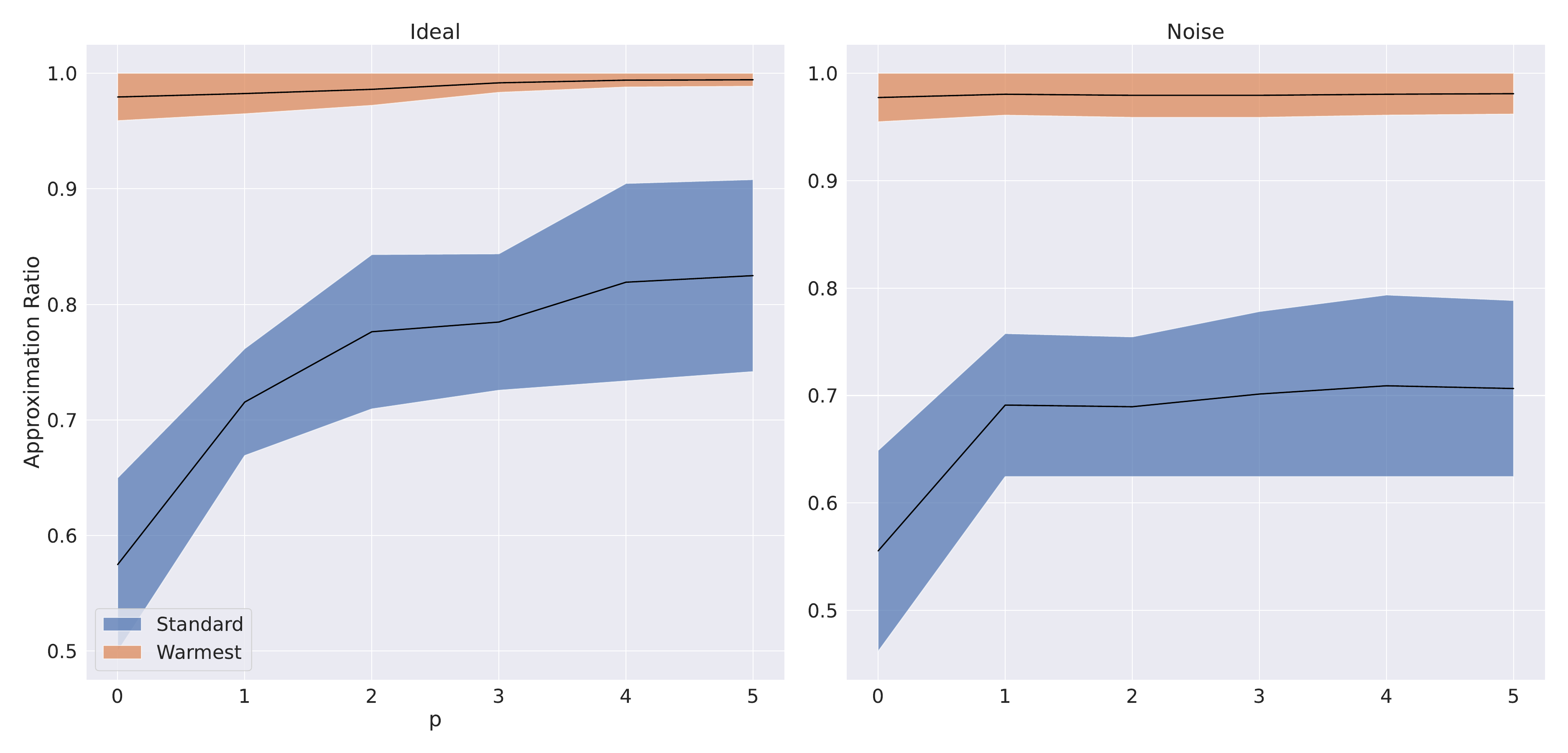}
    \caption{\footnotesize Comparison between standard QAOA mixer to using BM-MC$_2$ warm-starts with custom mixers . We show the noiseless (left) and noisy (right) case. In both cases, the custom mixer significantly outperforms the standard mixer. Shaded regions indicate the distribution of results for 20 randomly chosen 8 node graphs with positive and negative weights.}
    \label{fig:neg_weights}
\end{figure*}

\begin{figure}[!t]
    \centering
    \includegraphics[scale=0.5]{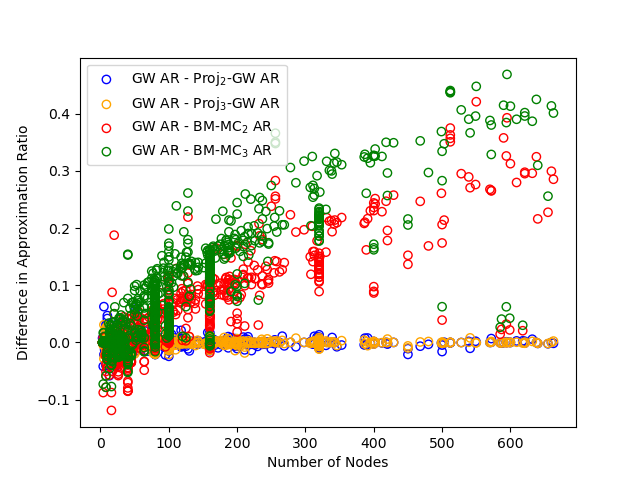}
    \caption{\footnotesize Difference in approximation ratio between $n$-\dimensional GW hyperplane rounding and hyperplane rounding of various warm-start initializations ($k$-\dimensional projected GW SDP solutions (Proj$_k$-GW) and approximate BM-MC$_k$ solutions for $k=2,3$) as the number of nodes varies. For each instance and each \dimension $k$, we obtained 5 random projections and 5 approximate BM-MC$_k$ solutions, and then kept the best one (of 5) in regards to the BM-MC$_k$ objective. Each circle in the figure corresponds to an instance from (a subset of) the MQLib library \cite{DGS18}; see Appendix \ref{sec:GW_BMMC_Scaling} for details. }
    \label{fig:GW_BMMC_Scaling}
\end{figure}

\begin{figure*}[htbp]
    \centering
    \includegraphics[scale=0.5]{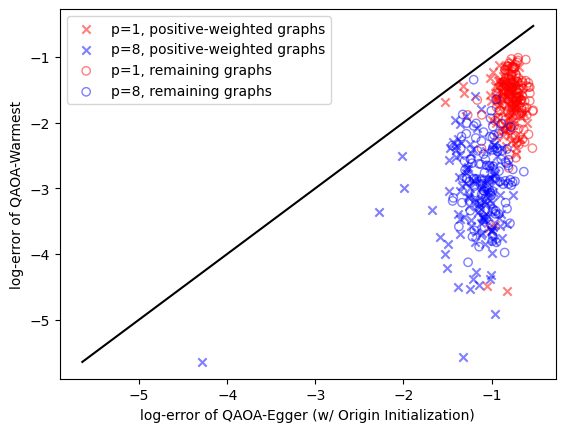}\includegraphics[scale=0.5]{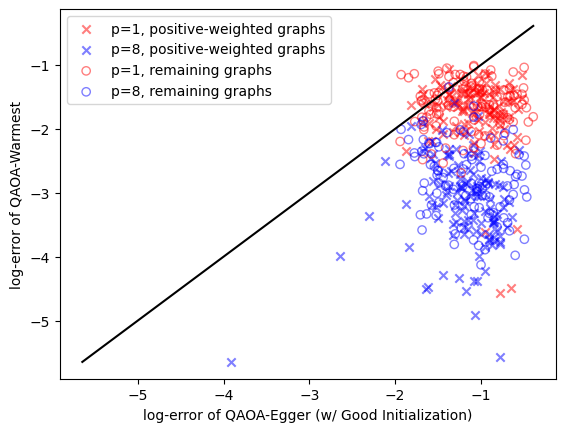}
    \caption{\footnotesize For both plots, we compare the log-error  of QAOA-warmest (with BM-MC$_2$ warm-starts) to the variant of QAOA proposed by Egger et al. \cite{egger2020warm} for Max-Cut. For Egger et al.'s approach, we consider two different initializations: initializing the variational parameters to the origin (left) and initializing the parameters in a way that recovers the cut used to initialize the quantum state (right). Each marker in the plot corresponds to a combination of instance (from our graph ensemble $\mathcal{G}$) and circuit depth (either $p=1$ or $p=8$) with the shape of the marker being used to denote if the instance has only positive edge weights or not. Points below the black line correspond to instances where QAOA-warmest performs better than the other algorithm being compared.}
    \label{fig:eggerComparison}
\end{figure*}

\begin{table*}[htbp]
\centering
\resizebox{\linewidth}{!}{%
\begin{tabular}{c}
\begin{tabular}{|c?c|c|c|c|}
\multicolumn{5}{c}{\textbf{Instances Where Depth-8 Standard QAOA Performs Best}}\\
\hline Instance ID & QAOA-warmest & QAOA-warm & Standard QAOA & GW \\
\hline 778 & 0.9550 & 0.8980 & 0.9635 & 0.9504\\
\hline 1820 & 0.9483 & 0.9078 & 0.9508 & 0.9429\\\hline
\end{tabular}

\\\\\\
\begin{tabular}{|c?c|c|c|c?c|}
\multicolumn{6}{c}{\textbf{Instances Where GW Performs Best}}\\
\hline Instance ID & QAOA-warmest & QAOA-warm & Standard QAOA & GW  & QAOA-warmest (modified)\\
\hline 1698 & 0.9899 & 0.9950 & 0.9677 & 0.9968 & 0.9998  \\
\hline 1889 & 0.9867 & 0.9886 & 0.9461 & 0.9899 & 0.9995\\
\hline 2010 & 0.9762 & 0.9797 & 0.9330 & 0.9948 & 0.9992 \\\hline
\end{tabular}
\end{tabular}
}
\caption{\footnotesize These tables reports the approximation ratios achieved for the five instances (amongst those in our instance library $\mathcal{G}$) for which depth-8 QAOA-warmest did not obtain the best approximation ratio when compared to depth-8 QAOA-warm, depth-8 standard QAOA, and GW. The top and bottom tables are for instances in which standard QAOA and GW performed the best respectively. 
The instances in the bottom table have the property that there exists exactly one negative edge weight whose magnitude is much larger than the other edge weights. For the bottom table, in the last column, we also include the approximation ratio for QAOA-warmest in the case where a more suitable vertex-at-top rotation is used; i.e., we take one of the vertices incident to the large-magnitude negative edge and rotate it to the top.}
\label{fig:problematicInstances}
\end{table*}

\section{Discussion}
\label{sec:discussion}
Our experimental results suggest that our QAOA-warmest method combined with initializations obtained by classical means can outperform both the standard QAOA and the Goemans-Williamson algorithm at relatively shallow circuit depths. Conversely, not all initializations on the Bloch sphere are useful; in particular random initializations under-perform compared to classically obtained initializations. Moreover, adversarial initializations could be chosen if one wanted QAOA to perform poorly (i.e. by putting qubits near the poles of the Bloch sphere that correspond to the minimum cut). Overall, finding a suitable initialization is needed in order to see success in QAOA-warmest. In the case of classically-inspired initializations (e.g. Burer-Monteiro Max-Cut relaxations or projected GW SDP solutions) which are (classically) invariant under global rotations, this also includes picking a suitable rotation scheme before embedding the solution into a quantum state.

According to a paper by Farhi, Gamarnik, and Gutmann \cite{FGG20}, QAOA needs to ``see the whole graph" (i.e. have a high enough circuit depth) in order to achieve desireable results. Their results rely on the fact that local changes in the graph (e.g. modifying an edge weight) give uncorrelated results in regards to measured qubits that are sufficiently far away from such a local change. In other words, standard QAOA cannot distinguish between graphs whose local subgraph-structure is identical. It should be noted that the circuit used in QAOA-warmest also suffers from such a locality property; however, if we consider the entirety of the QAOA-warmest procedure, including the preprocessing stage of computing warm-starts, then this procedure can possibly distinguish between graphs with identical local subgraph structure since the initial state is sensitive to the global structure of the graph (when using BM-MC$_k$ relaxations or projected GW SDP solutions). This suggests that certain negative theoretical results seen for standard QAOA may not necessarily hold for QAOA-warmest since the distinguishability arguments used would no longer apply.


The approximation guarantees for our warm-starts at $p=0$ and convergence to Max-Cut (under adiabatic limit) combined with superior empirical performance provide strong evidence for quantum advantage of this approach at low circuit depths compared to existing classical methods, especially the Goemans-Williamson approximation. An interesting open question would be to quantify the approximation bounds obtained by QAOA-warmest for finite circuit depth greater than zero. 

\begin{table}[htbp]
    \centering
    \resizebox{\linewidth}{!}{%
    \begin{tabular}{|c|c|c|c|c|}
    \Xhline{3\arrayrulewidth}
            Number & Cores & & & \\
          of  &  Per & CPU & RAM & Speed \\ 
          Servers & Server & & & \\
          \Xhline{3\arrayrulewidth}
          1& 12 & E5-2630 & 128 GB & 2.30 GHz\\ \hline
          1 & 32 & Opteron 6274 & 128 GB & 2.20 GHz\\ \hline
          4 & 12 & E5-2630 & 128 GB & 2.30 GHz\\ \hline
          2 & 12 & X5660 & 72 GB & 2.80 GHz\\ \hline
          2 & 12 & X5660 & 148 GB & 2.80 GHz\\ \hline
          4 & 12 & E5645 & 96 GB & 2.40 GHz \\ \Xhline{3\arrayrulewidth}
    \end{tabular}
    }
    \caption{\footnotesize This table details the specifications of the server groups in the high performance computing cluster at Georgia Institute of Technology that were used in our numerical experiments.}
    \label{tab:cluster}
\end{table}

\section{Methods}
\label{sec:methods}
Numerical simulations were performed using both custom and pre-packaged codes in the Tensorflow \cite{tensorflow2015-whitepaper} and Qiskit \cite{qiskit} software packages. Numerical experiments in Section \ref{sec:QAOAWarmestCompare} were performed on the high performance computing cluster at the School of Industrial and Systems Engineering at Georgia Institute of Technology. Jobs were sent to various servers in the cluster as they became available; a listing of the servers and their specifications can be found in Table \ref{tab:cluster}.

Classical optimization was performed using standard optimizers available in {\sc python}, including ADAM \cite{KB14}, L-BFGS-B \cite{bfgsRef}, and COBYLA \cite{cobylaRef}. For hardware results, we first describe the usage of IBM's Guadalupe device along with Qiskit software and the COBYLA optimizer. The Guadalupe device is a 16 qubit superconducting hardware with a heavy hexagonal connectivity. This device typically has average single qubit gate errors of $3.7204\times 10^{-4}$, two-qubit gate errors of $1.075\times 10^{-2}$ and measurement error of $1.776\times 10^{-2}$, representing a quantum device of comparable quality to the state of the art. For the QAOA-Warmest runs shown in Fig~\ref{fig:Guadalupe_comparison}, the run time on hardware can be estimated by the {\bf{schedule}} method of Qiskit for each corresponding circuit. The run times for $p=0,1,2$ are $7182~ \textrm{ns},14968~\textrm{ns},17379~\textrm{ns}$, respectively. Standard QAOA runs have comparable run times as they differ only by single qubit gates as compared to QAOA-Warmest. For these hardware runs, we seed the classical optimization process with ideal parameters ($\gamma^*,\beta^*$) found in simulation and perform 20 optimization steps each with 8192 shots on hardware. Noisy simulations using hardware-informed noise models were performed with 200 optimization steps and 3000 shots. These simulations were performed on GTRI's Icehammer cluster using a node with a Xeon-Gold6242R processor with 80 cores and 376 GB of memory. Secondly, for our results on Quantinuum hardware, we only evaluate a single point in parameter space (at $\gamma^*,\beta^*$) at each $p$ depth with 1000 shots. This choice was motivated in order to reduce the cost of hardware runs and do not represent any device or fundamental limitation. The Quantinuum H1-1 20 qubit device reports average single qubit gate errors of $5 \times 10^{-5}$, two qubit gate errors of $3 \times 10^{-3}$ and SPAM errors of $3 \times 10^{-3}$. Noisy simulation of the Quantinuum device were also performed through the cloud, provided by the Quantinuum service.

\section{Data Availability}
\label{sec:dataAvail}
The graph instances used in the numerical simulation are accessible at the following github repository: \url{https://github.com/swati1729/CI-QuBe}. Additional experimental data is available via request from the corresponding author.

\section{Author Contributions}
Reuben Tate developed the theory for custom mixers, with guidance and helpful discussions with the team including Jai Moondra, Bryan Gard, Greg Mohler and Swati Gupta. Swati Gupta introduced the idea of iterative rounding of Goemans-Williamson SDP solution to lower dimensions. Jai Moondra developed theoretical guarantees for projected GW warm-starts with guidance from the team. All authors contributed to the design of the numerical simulations, and these were implemented by Reuben Tate.  Bryan Gard designed and implemented the experiments on both IBM-Q and Quantinuum hardware. All authors contributed to the editing and writing of the manuscript to enable the best presentation of results.

\section{Competing Interests}
The authors declare that there are no competing financial or non-financial interests.

\section*{Acknowledgement}
A part of this work was done when authors R. Tate and S. Gupta were at the Georgia Institute of Technology. This material is based upon work supported by the Defense Advanced Research Projects Agency (DARPA) under Contract No. HR001120C0046. This research used resources of the Oak Ridge Leadership Computing Facility at the Oak Ridge National Laboratory, which is supported by the Office of Science of the U.S. Department of Energy under Contract No. DE-AC05-00OR22725. We acknowledge the use of IBM Quantum services for this work. The views expressed are those of the authors and do not reflect the official policy or position of IBM or the IBM Quantum team. The authors would also like to thank Hassan Mortagy for his careful comments and feedback on an initial version of this work.

\bibliographystyle{quantum}
\bibliography{references}

\appendix

\section{Proofs}
\label{sec:proofs}

\paragraph{Proof of Proposition} \ref{thm:convergence}
\begin{proof}
Let $\ket{s_0} = \bigotimes_{j=1}^n \ket{s_{0,j}}$ with $\ket{s_{0,j}} = \cos(\theta_j/2)\ket{0}+e^{i\phi_j}\sin(\theta_j/2)\ket{1}$ be an arbitrary separable initial state. As a consequence of Proposition \ref{thm:convergenceSpecial}, it suffices to show that QAOA-warmest with this initial state $\ket{s_0}$ yields the same expected cut value (at the same variational parameters) as another separable initial state $\ket{s_0}'$ where each qubit of $\ket{s_0}'$ lies in the $xz$-plane of the Bloch sphere with positive $x$-coordinate.

We consider the state $\ket{s_0}' = \bigotimes_{j=1}^n \ket{s_{0,j}}'$ where $\ket{s_{0,j}}' = \cos(\theta_j/2)\ket{0}+\sin(\theta_j/2)\ket{1}$. Geometrically, going from $\ket{s_0}$ to $\ket{s_0}'$ has the effect of dropping the phase for all qubits so that they lie in the $xz$-plane of the Bloch sphere with positive $x$-coordinate (assuming that none of the qubits are at the poles).

It suffices to show that we can drop the phase for single qubit of $\ket{s_0}$ (say qubit $k$) without changing the expected cut value; the argument can then be easily repeated for the remaining qubits to show that $\ket{s_0}$ and $\ket{s_0}'$ yield identical expected cut values. In this case, we consider the initial state  $\ket{\widehat{s_0}} = \bigotimes_{j=1}^n \ket{\widehat{s_{0,j}}}$ where $\ket{\widehat{s_{0,k}}} = \ket{s_{0,k}}'$ and $\ket{\widehat{s_{0,j}}} = \ket{s_{0,j}}$ for $j \neq k$ (i.e. only the position of qubit $k$ is modified). Letting $R_{x,k}(\theta),R_{y,k}(\theta),R_{z,k}(\theta)$ represent the standard rotation operator of the $k$th qubit (about axes $x,y,z$ respectively) about the Bloch sphere by angle $\theta$, we can also write
\begin{equation}
\label{eq:zeroPhaseModification}
    \ket{\widehat{s_0}} = R_{z,k}(-\phi_k)\ket{s_0},
\end{equation}
i.e., $\ket{\widehat{s_0}}$ can be obtained from $\ket{s_0}$ by rotating around the $z$-axis (of the Bloch sphere) by the appropriate amount.

Let $H_B$ and $\widehat{H_B}$ be the corresponding custom mixers for $\ket{s_0}$ and $\ket{\widehat{s_0}}$ respectively. Let $U_B(\beta_\ell) = \exp(-i\beta_\ell H_B)$ and $\widehat{U_B}(\beta_\ell) = \exp(-i\beta_\ell \widehat{H_B})$. For convenience, let $U_C(\gamma_\ell) = \exp(-i\gamma_\ell H_C)$. We can write $U_B(\beta_\ell) = U_{B,\neq k}(\beta_\ell)U_{B,k}(\beta_\ell)$ where $U_{B,k}(\beta_\ell)$ is the portion of $U_B(\beta_\ell)$ that acts on qubit $k$ and $U_{B,\neq k}(\beta_\ell)$ is the portion that acts on the remaining qubits; we can similarly write $\widehat{U_B}(\beta_\ell) = U_{B,\neq k}(\beta_\ell)\widehat{U_{B,k}}(\beta_\ell)$ (the part of the mixer that does not affect the $k$th qubit remains the same). Geometrically, the operation $U_{B,k}(\beta_\ell)$ corresponds to rotating qubit $k$ around its original position on the Bloch sphere by angle $2\beta_\ell$ so,
\begin{align*}U_{B,k} = & \Big[R_{z,k}(\phi_k)R_{y,k}(\theta_j)R_{z,k}(2\beta_\ell)\\
&R_{y,k}(-\theta_j)R_{z,k}(-\phi_j)\Big].\end{align*}
The equation above yields the following key relation between $U_{B,k}$ and $\widehat{U_{B,k}}$:
\begin{align}
\label{eq:zeroPhaseMixingRelation}
    &R_{z,k}(-\phi_k)U_{B,k}R_{z,k}(\phi_k)\nonumber \\
    = &R_{y,k}(\theta_j)R_{z,k}(2\beta_\ell)R_{y,k}(-\theta_j) = \widehat{U_{B,k}}.
\end{align}

For convenience, we will let 
$$U(\mathbf{\gamma,\beta}) = \prod_{\ell=1}^p \Big[ U_B(\beta_\ell)U_C(\gamma_\ell) \Big],$$
and
$$\widehat{U}(\mathbf{\gamma,\beta}) = \prod_{\ell=1}^p \Big[ \widehat{U_B}(\beta_\ell)U_C(\gamma_\ell) \Big],$$
i.e., $U_B$ and $\widehat{U_B}$ correspond to the QAOA-warmest circuit (excluding the initial state) for $\ket{s_0}$ and $\ket{\widehat{s_0}}$ respectively.

The claim amounts to showing (up to some global phase) the following:
\begin{align*}
&\bra{s_0}U(\gamma,\beta)^\dagger H_C U(\gamma,\beta)\ket{s_0} \\
=& \bra{\widehat{s_0}}\widehat{U}(\gamma,\beta)^\dagger H_C \widehat{U}(\gamma,\beta)\ket{\widehat{s_0}},
\end{align*}

for any circuit depth $p$ and any variational parameters $\gamma = (\gamma_1,\dots,\gamma_p)$ and $\beta = (\beta_1,\dots,\beta_p)$; and in particular, QAOA-warmest gives the same expected cut value for both $\ket{s_0}$ and $\ket{\widehat{s_0}}$.
 
First we observe that,
\begin{align*}
    & U(\gamma,\beta)\ket{s_0}\\
    =&\prod_{\ell=1}^p \Big[ U_B(\beta_\ell)U_C(\gamma_\ell) \Big] \ket{s_0}\\
    =&\prod_{\ell=1}^p \Big[ U_{B,\neq k}(\beta_\ell)U_{B,k}(\beta_\ell)U_C(\gamma_\ell) \Big] \ket{s_0}\\
    =& \prod_{\ell=1}^p \Big[ U_{B,\neq k}(\beta_\ell)R_{z,k}(\phi_k)\\
    & R_{z,k}(-\phi_k)U_{B,k}(\beta_\ell)R_{z,k}(\phi_k) \\
    & R_{z,k}(-\phi_k)U_C(\gamma_\ell) \Big] \ket{s_0} \\
    =& \prod_{\ell=1}^p \Big[ U_{B,\neq k}(\beta_\ell) R_{z,k}(\phi_k)\widehat{U_{B,k}}(\beta_\ell)\\
    & R_{z,k}(-\phi_k)U_C(\gamma_\ell) \Big] \ket{s_0} \tag{by Equation \ref{eq:zeroPhaseMixingRelation}}
\end{align*}

\begin{align*}
    =& \prod_{\ell=1}^p \Big[ R_{z,k}(\phi_k)U_{B,\neq k}(\beta_\ell)\widehat{U_{B,k}}(\beta_\ell)\\
    &U_C(\gamma_\ell)R_{z,k}(-\phi_k) \Big] \ket{s_0} \tag{commutativity}\\
    =& \prod_{\ell=1}^p \Big[ R_{z,k}(\phi_k)\widehat{U_{B}}(\beta_\ell)U_C(\gamma_\ell)R_{z,k}(-\phi_k) \Big] \ket{s_0} \tag{combine $\widehat{U_{B,k}}$ and $U_{B,\neq k}$}\\
    =& R_{z,k}(\phi_k) \prod_{\ell=1}^p \Big[ \widehat{U_B}(\beta_\ell)U_C(\gamma_\ell) \Big] R_{z,k}(-\phi_k)\ket{s_0} \tag{telescoping}\\
    =& R_{z,k}(\phi_k) \prod_{\ell=1}^p \Big[ \widehat{U_B}(\beta_\ell)U_C(\gamma_\ell) \Big] \ket{\widehat{s_0}} \tag{by Equation \ref{eq:zeroPhaseModification}}\\
    =& R_{z,k}(\phi_k) \widehat{U}(\gamma,\beta)\ket{\widehat{s_0}}.
\end{align*}

We now finally show that QAOA-warmest initialized with $\ket{s_0}$ and $\ket{\widehat{s_0}}$ yield the same value; in particular the extraneous $R_z(\phi_k)$ term from the previous calculations will not effect the measurement due to commutativity with the cost Hamiltonian:

\begin{align*}
& \bra{s_0}U(\gamma,\beta)^\dagger H_C U(\gamma,\beta)\ket{s_0} \\
= & \bra{\widehat{s_0}}\widehat{U}(\gamma,\beta)^\dagger R_z(\phi_k)^\dagger H_C R_z(\phi_k)\widehat{U}(\gamma,\beta)\ket{\widehat{s_0}} \\
= & \bra{\widehat{s_0}}\widehat{U}(\gamma,\beta)^\dagger H_C \widehat{U}(\gamma,\beta)\ket{\widehat{s_0}}, \\
\end{align*}

where the last equality follows since $H_C$ commutes with  $R_z(\phi_k)$. This completes the proof.

\end{proof}

\paragraph{Proof of Theorem \ref{thm: subspace-hyperplane-rounding}.}\label{sec:sdp-warm-start-proof}

\smallskip

Given a subspace $A$ of $\R^n$, if $A = \spn(v)$ for some unit vector $v \in \R$, we abuse the notation and denote $\Pi_v(u) = \Pi_{\spn(v)}(u)$ and $\Lambda_v(u) = \Lambda_{\spn(v)}(u)$. We need two lemmas before we prove the theorem.

\begin{lemma}\label{lem: unit-scale-projections-can-be-composed}
    Let $u, v$ be unit vectors in $\R^n$ and let $A$ denote a linear subspace of $\R^n$ of dimension $k$ such that $v \in A$. If $\Pi_A(u) \neq 0$ and $\Pi_v(u) \neq 0$, then
    \[
        \Lambda_v(u) = \Lambda_v\big(\Lambda_A(u)\big).
    \]
    
    That is, unit-scale projection of $u$ on $v$ is equivalent to first unit-scale projecting $u$ to $A$ and projecting this projection $\Lambda_A(u)$ on $v$.
\end{lemma}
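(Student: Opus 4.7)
}
The plan is to reduce everything to the well-known fact that orthogonal projections onto nested subspaces compose, namely $\Pi_v \circ \Pi_A = \Pi_v$ whenever $\mathrm{span}(v) \subseteq A$, and then track how the scalar normalizations cancel. First I would establish this composition identity: write $u = \Pi_A(u) + w$ where $w \perp A$, so $w \perp v$ in particular (since $v \in A$); applying $\Pi_v$ gives $\Pi_v(u) = \Pi_v(\Pi_A(u)) + \Pi_v(w) = \Pi_v(\Pi_A(u))$.

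Next I would compute $\Pi_v(\Lambda_A(u))$ directly from the definition. Since $\Lambda_A(u) = \Pi_A(u)/\|\Pi_A(u)\|_2$ and $\Pi_v$ is linear,
\begin{equation*}
\Pi_v\bigl(\Lambda_A(u)\bigr) \;=\; \frac{\Pi_v\bigl(\Pi_A(u)\bigr)}{\|\Pi_A(u)\|_2} \;=\; \frac{\Pi_v(u)}{\|\Pi_A(u)\|_2},
\end{equation*}
where the second equality uses the composition identity from the previous step. Taking norms, $\bigl\|\Pi_v(\Lambda_A(u))\bigr\|_2 = \|\Pi_v(u)\|_2 / \|\Pi_A(u)\|_2$, which is nonzero by the hypothesis $\Pi_v(u) \neq 0$ (so $\Lambda_v(\Lambda_A(u))$ is well defined).

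Finally, dividing the displayed vector by its norm, the factor $\|\Pi_A(u)\|_2$ cancels:
\begin{equation*}
\Lambda_v\bigl(\Lambda_A(u)\bigr) \;=\; \frac{\Pi_v(u)/\|\Pi_A(u)\|_2}{\|\Pi_v(u)\|_2/\|\Pi_A(u)\|_2} \;=\; \frac{\Pi_v(u)}{\|\Pi_v(u)\|_2} \;=\; \Lambda_v(u),
\end{equation*}
which is the desired identity. There is no real obstacle here; the only subtlety is to confirm that the nonzero conditions imposed in the hypothesis are exactly what is needed to make both normalizations legitimate, and the composition step $\Pi_v \circ \Pi_A = \Pi_v$ is immediate from the orthogonal decomposition together with $v \in A$.
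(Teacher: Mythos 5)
Your proof is correct, and it takes a cleaner, coordinate-free route than the paper's. The paper proves the lemma by fixing an orthonormal basis $\{v_1,\dots,v_k\}$ of $A$, expanding $\Pi_A(u)=\sum_i \alpha_i v_i$ and $v=\sum_i \beta_i v_i$, and computing $\Lambda_v\big(\Lambda_A(u)\big)$ explicitly as $\tfrac{\sum_i \alpha_i\beta_i}{\left|\sum_i \alpha_i\beta_i\right|}\,v$, which is then recognized as $\Lambda_v(u)$; the quantity $\sum_i \alpha_i\beta_i = u^\top v$ appearing there is exactly the coefficient in $\Pi_v(u)$, so the basis computation is implicitly verifying the same tower property you state. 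You instead isolate that property up front, $\Pi_v\circ\Pi_A=\Pi_v$ for $\mathrm{span}(v)\subseteq A$, via the orthogonal decomposition $u=\Pi_A(u)+w$ with $w\perp A$ (hence $w\perp v$), and then exploit linearity of $\Pi_v$ so that the positive scalar $\|\Pi_A(u)\|_2$ cancels under normalization. What your version buys is conceptual transparency and independence of any choice of basis, and it makes explicit where each nondegeneracy hypothesis is used ($\Pi_A(u)\neq 0$ to define $\Lambda_A(u)$, $\Pi_v(u)\neq 0$ to normalize at the end); what the paper's version buys is a fully self-contained elementary computation that also exhibits the sign factor $\tfrac{u^\top v}{|u^\top v|}$ governing which of $\pm v$ the composition lands on. Both arguments are complete and equivalent in content.
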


\begin{proof}
    Let $\{v_1, \ldots, v_k\}$ be an orthonormal basis for $A$. Let $\alpha_i = u^\top v_i$. Then
    \[
        \Pi_A(u) = \sum_{i \in [k]} \alpha_i v_i, \quad \Lambda_A(u) = \frac{\sum_{i \in [k]} \alpha_i v_i}{\sqrt{\sum_{i \in [k]} \alpha_i^2}}
    \]
    Since $v \in A$, write $v = \sum_{i \in [k]} \beta_i v_i$. Then we have
    \[
        \Pi_v\big(\Lambda_A(u)\big) = \frac{\sum_{i \in [k]} \alpha_i \beta_i}{\sqrt{\sum_{i \in [k]} \alpha_i^2}} \: v, 
    \]
    so that
    \begin{align*}
        \Lambda_v\big(\Lambda_A(u)\big) &= \frac{\Pi_v\big(\Lambda_A(u)\big)}{\| \Pi_v\big(\Lambda_A(u)\big)\|_2} \\
        &= \frac{\sum_{i \in [k]} \alpha_i \beta_i}{\left| \sum_{i \in [k]} \alpha_i \beta_i\right|} \: v \\ 
        &= \frac{\Pi_v(u)}{\|\Pi_v(u)\|_2} \: v \\
        &= \Lambda_v(u).
    \end{align*}
\end{proof}

Note that the above lemma is deterministic statement; we have not used any randomness so far.

Let us consider what happens if we select a linear subspace $A$ of $\R^n$ of dimension $k$ uniformly randomly from $\R^n$ (one way to ensure it is chosen uniformly randomly is to select unit vectors $v_i \in R^n, i \in [k]$ recursively so that $v_i$ is chosen uniformly randomly in the space orthogonal to $v_1, \ldots, v_{i - 1}$). Once we have $A$, let us select a vector $v \in A$ uniformly randomly again. Is this equivalent to choosing a vector $v \in \R^n$ uniformly randomly? By symmetry, it is, since the former experiment is not biased in favor of any direction. We omit the formal proof and state it as a lemma here:

\begin{lemma}\label{lem: choosing-random-vector-can-be-composed}
    Let $E$ denote the experiment of choosing a unit vector $v$ chosen uniformly randomly from $\R^n$. Let $E'$ denote the experiment of choosing a linear subspace $A$ of dimension $k$ uniformly randomly from $\R^n$, and then choosing a unit vector $v'$ uniformly randomly from $A$. Then $E' = E$, i.e., they correspond to the same probability space.
\end{lemma}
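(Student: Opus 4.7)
The plan is to prove the two experiments induce the same distribution on the unit sphere $S^{n-1} \subset \R^n$ by invoking the uniqueness of the rotationally invariant probability measure on $S^{n-1}$. The experiment $E$ produces, by definition, the uniform (Haar) measure $\mu$ on $S^{n-1}$, which is the unique Borel probability measure on $S^{n-1}$ invariant under the action of the orthogonal group $O(n)$. So it suffices to show that the output distribution of $E'$ is also $O(n)$-invariant; uniqueness then forces equality.

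To show $O(n)$-invariance of $E'$, I would fix an arbitrary $Q \in O(n)$ and argue in two steps. First, the distribution over $k$-dimensional linear subspaces used in $E'$ is itself $O(n)$-invariant: one can describe it by picking an orthonormal frame $v_1, \dots, v_k$ recursively, with each $v_i$ uniform on the unit sphere of the orthogonal complement of $\spn(v_1, \dots, v_{i-1})$, and the distribution of such a frame (and hence of the subspace $A$ it spans) is preserved under left-multiplication by $Q$. Second, conditioned on $A$, the vector $v'$ is uniform on the unit sphere of $A$; applying $Q$ sends this conditional distribution to the uniform distribution on the unit sphere of $QA$. Putting the two steps together, the joint distribution of $(A, v')$ is invariant under $(A, v') \mapsto (QA, Qv')$, and in particular the marginal distribution of $v'$ is invariant under $Q$.

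Since $Q \in O(n)$ was arbitrary, the law of $v'$ is an $O(n)$-invariant Borel probability measure on $S^{n-1}$, and by the uniqueness of such a measure it must equal $\mu$, the distribution of $v$ under $E$. Thus $E$ and $E'$ define the same probability space on unit vectors, as claimed.

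The main obstacle is not conceptual but notational: one must be careful to define the ``uniformly random $k$-dimensional subspace'' rigorously (e.g.\ via the normalized Haar measure on the Grassmannian $\mathrm{Gr}(k,n)$, or equivalently via the recursive orthonormal-frame construction) and then invoke the standard uniqueness-of-Haar-measure fact. Given the informal statement of the lemma in the paper, a short symmetry argument along the lines above should suffice, without grinding through explicit parametrizations.
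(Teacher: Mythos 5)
Your argument is correct and is essentially the same symmetry reasoning the paper relies on: the paper explicitly omits a formal proof, asserting only that the two-step experiment ``is not biased in favor of any direction.'' Your write-up -- showing the law of $v'$ is $O(n)$-invariant (invariance of the subspace distribution, then of the conditional uniform law on the sphere of $A$) and invoking uniqueness of the rotation-invariant probability measure on $S^{n-1}$ -- is the standard rigorous version of exactly that omitted symmetry argument.
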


We are ready for the proof of Theorem \ref{thm: subspace-hyperplane-rounding}.

\begin{proof}
    Let $U_n = \{v \in \R^n: \|v\|_2 = 1\}$ be the set of unit vectors in $\R^n$. Recall that for a given probability space, a random variable is a real-valued function on the sample space, or that $X, Y: U_n \to \R$. From Lemma \ref{lem: choosing-random-vector-can-be-composed}, the two experiments  correspond to the same probability space. Therefore, it is enough to show that $X(v) = Y(v)$ for all $v \in U_n$.

    One key observation is that rounding on a random hyperplane is equivalent to unit-scale projecting to a uniformly random vector $v$: indeed, let $v$ be the vector normal to the uniform hyperplane, then any unit vector $u$ is rounded to $1$ if $u \cdot v > 0$ and to $-1$ if $u \cdot v < 0$. That is, $u$ is rounded to $\frac{u \cdot v}{|u \cdot v|} = \Lambda_v(u)^\top v$.
    
    Therefore, 
    \begin{align*}
        X(v) = \!\!\!\!\! \sum_{ij \in E(G)} w_{ij} \: \mathbf{1} \Big[\Lambda_v(u_i)^\top v \cdot \Lambda_v(u_j)^\top v < 0 \Big].
    \end{align*}
    
    Similarly, for a given $A$ such that $v \in A$, we have $Y(v)$ is equal to:
    \begin{align*}
        \sum_{ij \in E(G)} & w_{ij} \mathbf{1}\Big[\Lambda_v\big(\Lambda_A(u_i)\big)^\top v \\
        &\cdot \Lambda_v\big(\Lambda_A(u_j)\big)^\top v < 0 \Big].
    \end{align*}
    
    Since $\dim(A) = k$ is a constant and $A$ is chosen uniformly randomly, $\Pi_A(u_i) \neq 0$ for each $i$ with probability $1$. From Lemma \ref{lem: unit-scale-projections-can-be-composed}, we have $\Lambda_v\big(\Lambda_A(u)\big) = \Lambda_v(u)$ for all unit vectors $u$ and for all $A$.
    
    Therefore, we have,
    \begin{align*}
        &Y(v) \\
        &= \sum_{ij \in E(G)} w_{ij} \: \mathbf{1}\Big[\Lambda_v(u_i)^\top v \cdot \Lambda_v(u_j)^\top v < 0 \Big] \\
        &= X(v).
    \end{align*}
    
    Since $X$ and $Y$ have the same distribution, the same approximation guarantee holds for $X, Y$. This proves part $1$ of the theorem.
    
    We prove part $2$ next. Let $C$ denote the maximum cut value on graph $G$, and denote $\alpha = 0.878$ for convenience. Then, part $1$ shows that $\E Y \ge \alpha C$. We first show that $\Pr\left(Y > (1 - \epsilon)\E Y\right) \ge \alpha \epsilon $ using Markov inequality:
    \begin{align*}
        &\Pr\left(Y > (1 - \epsilon)\E Y\right)\\
        &= 1 - \Pr\left(Y \le (1 - \epsilon)\E Y\right) \\
        &= 1 - \Pr\left(C - Y \ge C - (1 - \epsilon) \E Y\right) \\
        &\ge 1 - \frac{\E\left(C - Y\right)}{C - (1 -  \epsilon) \E Y} \\
        &= \frac{\epsilon \E Y}{C - (1 -  \epsilon) \E Y} \\
        &\ge \frac{\epsilon \E Y}{\frac{\E Y}{\alpha} - (1 -  \epsilon) \E Y} \\
        &= \frac{\alpha \epsilon}{1 - \alpha(1 - \epsilon)} \\
        &\ge \alpha \epsilon.
    \end{align*}
    
    Suppose that $\frac{\log n}{\epsilon}$ independent cuts are produced by applying the two-step rounding procedure $\log n$ times. Then the probability that all of these cuts have value less than $(1 - \epsilon)\E Y$ is at most
    \[
        (1 - \alpha \epsilon)^{\frac{\log n}{\epsilon}} \le \left(e^{-\alpha \epsilon}\right)^{\frac{\log n }{\epsilon}} = \frac{1}{n^{\alpha}},
    \]
    where we have used the standard inequality $\exp(-x) \ge 1 - x$. Since $\alpha \in (0, 1)$, this probability goes to $0$ as $n$ goes to $\infty$.
    
    We prove the second claim of part $2$. Given a $k$-dimensional subspace $A$ of $\R^n$, let $w_A$ denote the average cut value after Goemans-Williamson hyperplane rounding is performed on $A$, i.e.
    \[
        w_A = \frac{\int_{v \in U_A} (\text{cut value along } v) \: dv }{\int_{v \in U_A} \: dv},
    \]
    where $U_A$ is the set of all unit vectors in $A$. Notice that
    \[
        \E Y = \frac{\int_A w_A \: dA}{\int_A \: dA}.
    \]
    
    For the first step of the two-step rounding procedure (i.e., the step selecting a random subspace of dimension $k$), let $Z$ denote the random variable that takes value $w_A$ when subspace $A$ is selected. We need to show that for $\frac{\log n}{\epsilon}$ i.i.d. random variables $Z_1, \ldots, Z_{\frac{\log n}{\epsilon}}$, there is at least some $Z_i$ such that $Z_i \ge (1 - \epsilon) \E Y$. Since the random subspace is selected uniformly randomly, we have that
    \[
        \E Z = \frac{\int_A w_A \: dA}{\int_A \: dA} = \E Y.
    \]
    A similar Markov inequality analysis on $Z$ then gives the result.
    
\end{proof}

\medskip

\paragraph{Proof of Corollary \ref{thm:preserveHyperplaneRatioQuantum}}

\begin{proof}
	Let $F_0^\prime = F_0^\prime(\gamma,\beta)$ be the expected value of {\sc Max-Cut} obtained by quantum sampling (i.e., QAOA for $p=0$). Then,
	\begin{align*}
		&\frac{F_0^\prime}{\text{\mc}(G)}\\
		&\ge \kappa \cdot \frac{F_0^\prime}{\text{HP}(x)}
		\tag{$\text{since \text{HP}}(x) \ge \kappa \text{\mc}(G)$} \\
		&\ge \kappa \min_{(i,j) \in E} \frac{w_{ij}\mathbb{E}[\mathbf{1}[i \text{ \& } j \text{ have different spins}]]}{\frac{w_{ij}}{\pi}\arccos(x_i \cdot x_j)} \tag{{$ \frac{a+c}{b+d} \geq \min(\frac{a}{b}, \frac{c}{d})$ for $a,b,c,d \geq 0$}}\\
      &= \kappa \min_{(i,j) \in E} \frac{\mathbb{E}[\mathbf{1}[i \text{ \& } j \text{ have different spins}]]}{\frac{1}{\pi}\arccos(x_i \cdot x_j)}.\\
	\end{align*}
	
	For $i,j \in [n]$, let $\theta_{ij}$ denote the angle between $x_i$ and $x_j$. We can write $\mathbb{E}[\mathbf{1}[i \text{ and } j \text{ have different spins}]] = f_k(\theta_{ij})$, that is, this expectation is solely a function of the angle between the adjacent vertices and the \dimension  $k$ considered. In particular, in Theorem 2 of Tate et al. \cite{TFHMG20}, they show for $k \in \{2,3\}$ that
	$$f_k(\theta_{ij}) = \frac{1}{2}\left(1 - \frac{\cos \theta_{ij}}{k}\right).$$

	Additionally, $\arccos(x_i \cdot x_j) = \theta_{ij}$. Using these substitutions, we have
		\begin{align*}
		&\frac{F_0^\prime}{\text{\mc}(G)}\\
		&\ge \kappa \min_{(i,j) \in E} \frac{\mathbb{E}[\mathbf{1}[i \text{ and } j \text{ have different spins}]]}{\frac{1}{\pi}\arccos(x_i \cdot x_j)}\\
		&= \kappa \min_{(i,j) \in E} \frac{f_k(\theta_{ij})}{\theta_{ij}/\pi}\\
		&\ge \kappa \min_{\theta \in [0,\pi]} \frac{f_k(\theta)}{\theta/\pi}\\
		&= \kappa \min_{\theta \in [0,\pi]} \frac{ \frac{1}{2}\left(1 - \frac{\cos \theta}{k}\right)}{\theta/\pi}.
	\end{align*}
	
For $k \in \{2,3\}$, it is straightforward to verify that the minimum in the last line above is achieved at $\theta = \pi$ (see Figure \ref{fig:preserveHyperplaneRatioQuantum}) which leads to the ratio $\frac{F_0^\prime}{\text{\mc}(G)}$ being at least $\frac{3}{4}\kappa$ and $\frac{2}{3}\kappa$ respectively for $k=2,3$.
\end{proof}

\paragraph{Proof of Observation} \ref{thm:singleCutAR}

\begin{proof}We prove something more general than what is stated in Proposition \ref{thm:singleCutAR}. Instead of considering a \emph{particular} cut where hyperplane-rounding yields an approximation ratio of $\alpha$, we will instead consider a \emph{random} cut (obtained via hyperplane rounding where the hyperplane is selected uniformly at random). We will show that in expectation (considering both the randomness of the hyperplane rounding and the randomness of the quantum sampling), that the expected cut value is at least $0.878\cos^{2|V|}(\theta^*/2)$.

Let $X$ denote the random variable corresponding to the cut value obtained from the cut obtained by quantum measurement of a perturbed single-cut initialization obtained by GW hyperplane rounding as described in Section \ref{sec:singleCutInitialization}. For any $S \subseteq V$, let $\text{GW}(S)$ denote the event  that the cut $(S, V \setminus S)$ was obtained from the GW hyperplane rounding step. Similarly, let $\text{QM}(S)$ denote the event that quantum measurement of the initial state resulted in the cut $(S, V \setminus S)$.

First, observe that if the cut $(S, V \setminus S)$ is used to initialize the quantum state with regularization angle $\theta^*$, then the probability of quantum measurement getting the same cut is $\cos^{2|V|}(\theta^*/2)$; this is because each vertex independently has probability $\cos^2(\theta^*/2)$ of remaining on the same side of the cut used to initialize the quantum state. Using this fact, we find that,

\begin{align*}
    & \E[X]\\
    =& \sum_{S \subseteq V} \E[X \mid \text{GW}(S)]\Pr(\text{GW}(S)) \\
    =& \sum_{S \subseteq V}\Bigg(\sum_{T \subseteq V}\E[X \mid \text{GW}(S), \text{QM}(T)]\\
    & \cdot \Pr(\text{QM}(T) \mid \text{GW}(S)) \cdot \Pr(\text{GW}(S))\Bigg)\\
    \geq& \sum_{S \subseteq V}\Bigg(\E[X \mid \text{GW}(S), \text{QM}(S)]\\
    & \cdot \Pr(\text{QM}(S) \mid \text{GW}(S)) \cdot \Pr(\text{GW}(S))\Bigg)\\
    =& \sum_{S \subseteq V}\Bigg(\text{cut}(S) \cdot \cos^{2|V|}(\theta^*/2) \cdot \Pr(\text{GW}(S))\Bigg)\\
    =& \cos^{2|V|}(\theta^*/2) \sum_{S \subseteq V}\Bigg(\text{cut}(S)  \Pr(\text{GW}(S))\Bigg)\\
    \geq & \cos^{2|V|}(\theta^*/2) \cdot 0.878 \text{Max-Cut}(G),\\
\end{align*}
and thus $\frac{\E[X]}{\text{Max-Cut}(G)} \geq 0.878\cos^{2|V|}(\theta^*/2)$ as desired. In the above formulas, we used the fact that $\E[X \mid \text{GW}(S), \text{QM}(S)] = \text{cut}(S)$ and that the sum $\sum_{S \subseteq V}(\text{cut}(S)  \Pr(\text{GW}(S)))$ is simply the expected cut value of the GW algorithm, which we know is at least 0.878 of the optimal cut value for graphs with non-negative weights \cite{GW95}.
\end{proof}

\section{Detailed Description of Custom Mixer}
\label{sec:customMixersConstruction}

Consider any separable state $\ket{s_0}$ on $n$ qubits on the Bloch sphere, i.e., $\ket{s_0}$ can be written in the form: 
$$\ket{s_0} = \bigotimes_{j=1}^n \ket{s_{0,j}},$$
where for $j=1,\dots,n$,
$$\ket{s_{0,j}} =  \cos(\theta_j/2)\ket{0}+e^{i\phi_j}\sin(\theta_j/2)\ket{1}.$$
Here, $\theta_j$ and $\phi_j$ can be interpreted as the polar and azimuthal angle respectively of the $j$th qubit on the Bloch sphere. 
The position of the $j$th qubit on the Bloch sphere can also be described in Cartesian coordinates $\hat{n}_j = (x_j,y_j,z_j)$ via the following transformation from spherical to Cartesian coordinates:
\begin{align*}
    x_j &= \sin \theta_j \cos \phi_j, \\
    y_j &= \sin \theta_j \sin \phi_j, \\
    z_j &= \cos \theta_j.
\end{align*}

Recall (from Section \ref{sec:customMixers}) the custom mixing Hamiltonian $H_B$ is then constructed as follows:
$$H_B =\bigoplus_{j=1}^n H_{B,j},$$
where
$H_{B,j} = x_j\sigma^x + y_j\sigma^y + z_j\sigma^z$. 
To develop a geometrical understanding of the custom mixer, consider the operator $R_{\hat{n},j}(\alpha)$ that rotates the $j$th qubit by angle $\alpha$ about the $\hat{n}$-axis for some unit vector $\hat{n} = (x,y,z)$; such as operation can be written as:
$$R_{\hat{n},j}(\alpha) = \exp(-i\frac{\alpha}{2} (x\sigma^x_j+y\sigma^y_j+z\sigma^z_j)).$$ Recall that for the $k$th of the $p$ stages of the QAOA circuit (where $p$ is the circuit depth), one applies the unitary operator $e^{-i\beta_k H_B}$ with $\beta_k$ being a variational parameter (to be optimized); this operator, $e^{-i\beta_k H_B}$, can be written as $\prod_{j=1}^n R_{\hat{n}_j, {j}}(2\beta_k)$, i.e., in the $k$th stage of the QAOA circuit, one independently rotates the $j$th qubit about the axis determined by its original position by angle $2\beta_k$.

\section{Convergence of Custom Mixers in xz-plane}
\label{sec:xz_convergence}

In order to prove Proposition \ref{thm:convergenceSpecial} using the adiabatic theorem, we need to show that (1) $\|s_0\rangle$ is indeed the highest-energy eigenstate of the corresponding custom mixer $H_B$, and (2) the difference between the largest and the second-largest eigenvalues of $H(t) = (1 - t/T)H_B + (t/T) H_C$ is strictly positive. We divide this section into two parts to prove these two statements.


\subsection{Eigenstates of Custom Mixers}
We show first that $|s_0\rangle$ is the highest energy eigenstate of the corresponding custom mixer for a single qubit, and then generalize to the Kronecker sums of matrices (Proposition \ref{prop:eigenKroneckerSum}).

\begin{lemma}
\label{thm:groundState} Let $\ket{s} = \cos(\theta/2)\ket{0} + e^{i\phi}\sin(\theta/2)\ket{1}$ be a single-qubit quantum state and let $\hat{n} = (x,y,z)$ be the Cartesian coordinates of that qubit on the Bloch sphere. Let $U =  x\sigma^x +y\sigma^y +z\sigma^z$. Then $\ket{s}$ is the most-excited eigenstate of $U$.
\end{lemma}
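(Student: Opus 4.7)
The plan is a direct verification. First I would write $U$ explicitly as a $2\times 2$ matrix in the computational basis by substituting the definitions of the Pauli operators, giving
\[
U \;=\; \begin{pmatrix} z & x - iy \\ x + iy & -z \end{pmatrix}.
\]
Then I would substitute the spherical-coordinate parameterization $x = \sin\theta\cos\phi$, $y = \sin\theta\sin\phi$, $z = \cos\theta$, using the identity $\cos\phi \pm i\sin\phi = e^{\pm i\phi}$ to simplify the off-diagonal entries to $e^{\mp i\phi}\sin\theta$.

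Next I would identify the spectrum of $U$. Since $\hat{n}$ lies on the Bloch sphere, $x^2 + y^2 + z^2 = 1$, and a short calculation (using $\sigma^i\sigma^j + \sigma^j\sigma^i = 2\delta_{ij}I$) shows $U^2 = (x^2+y^2+z^2)\, I = I$. Because $U$ is also Hermitian and traceless, its eigenvalues must be exactly $+1$ and $-1$; in particular, $+1$ is the largest eigenvalue, so whichever eigenspace contains $\ket{s}$ corresponds to the most-excited state if and only if $U\ket{s} = +\ket{s}$.

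The remaining step is to verify $U\ket{s} = \ket{s}$ by direct matrix-vector multiplication. Writing $\ket{s} = (\cos(\theta/2),\, e^{i\phi}\sin(\theta/2))^T$, the top component of $U\ket{s}$ is $\cos\theta\cos(\theta/2) + \sin\theta\sin(\theta/2)$ (the phase $e^{i\phi}$ cancels in the second term), and the bottom component is $e^{i\phi}[\sin\theta\cos(\theta/2) - \cos\theta\sin(\theta/2)]$. Applying the half-angle identities $\cos\theta = \cos^2(\theta/2) - \sin^2(\theta/2)$ and $\sin\theta = 2\sin(\theta/2)\cos(\theta/2)$, both components collapse (via $\cos^2(\theta/2) + \sin^2(\theta/2) = 1$) to exactly $\cos(\theta/2)$ and $e^{i\phi}\sin(\theta/2)$ respectively, so $U\ket{s} = \ket{s}$.

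There is no real obstacle here — the argument is a computation. The only thing to be careful about is bookkeeping the phase $e^{i\phi}$ when multiplying the off-diagonal entries of $U$ against the components of $\ket{s}$, and applying the half-angle identities in the right direction so that the trig simplifies cleanly. Once $U\ket{s} = \ket{s}$ is established and $U$ has spectrum $\{+1,-1\}$, the lemma follows immediately.
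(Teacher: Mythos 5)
Your proof is correct and follows essentially the same route as the paper: a direct computation with the explicit $2\times 2$ matrix $U = \begin{pmatrix} \cos\theta & e^{-i\phi}\sin\theta \\ e^{i\phi}\sin\theta & -\cos\theta \end{pmatrix}$, verifying that $\ket{s}$ is a $+1$-eigenvector. The only cosmetic difference is that you pin down the spectrum $\{+1,-1\}$ via $U^2 = I$, $\mathrm{Tr}\,U = 0$, and Hermiticity, whereas the paper exhibits the full diagonalization including the orthogonal $-1$-eigenvector; both settle the claim.
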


\begin{proof}
We have the following relationship between the Cartesian and spherical coordinates: $x = \cos \phi \sin \theta, y = \sin \phi \sin \theta, z = \cos \theta$. Thus, the matrix $U = x\sigma^x+y\sigma^y+z\sigma^z$ is given by $$U = \begin{bmatrix} \cos \theta & \sin \theta e^{-i\phi} \\ \sin \theta e^{i\phi} & -\cos \theta\end{bmatrix}.$$ One can show that the matrix can be diagonalized as $U = PDP^{-1}$ where $P = [v_1 \  v_2], D = \text{diag}(1,-1), v_1 = \cos(\theta/2) \ket{0} + \sin(\theta/2) e^{i\phi}\ket{1}, v_2 = -\sin(\theta/2) \ket{0} + \cos(\theta/2) e^{i\phi}\ket{1}$. Thus, $v_1 =\cos(\theta/2) \ket{0} + \sin(\theta/2) e^{i\phi}\ket{1}$ is the highest-energy eigenstate of $U$.
\end{proof}

We can then formulate the most-excited eigenstate of $U$ using the following relation between eigenvalues of matrices involved in a Kronecker sum and the resultant matrix.

\begin{theorem}
\label{thm:eigenKroneckerSum}
(Theorem 13.16 in \cite{laub}) Let $A \in \mathbb{C}^{n \times n}$  have eigenvalues $\lambda_1,\dots,\lambda_n$ and let $B \in \mathbb{C}^{m \times m}$ have eigenvalues $\mu_1,\dots,\mu_m$. Then the Kronecker sum $A \oplus B$ has $mn$ eigenvalues given by $\{\lambda_i + \mu_j : i \in [n], j \in [m]\}$. Moreover, if $x_1, \dots, x_p$ ($p \leq n$) are linearly independent eigenvectors of $A$ corresponding to $\lambda_1,\dots,\lambda_n$ and $z_1,\dots, z_q$ ($q \leq m$) are linearly independent eigenvectors of $B$ corresponding to $\mu_1,\dots, \mu_q$, then, for all $i \in [p]$ and $j \in [q]$, we have that $x_i \otimes z_j$ are linearly independent eigenvectors of $A \oplus B$ corresponding to $\lambda_i + \mu_j$.
\end{theorem}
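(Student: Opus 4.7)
The plan is to prove the theorem by combining the mixed-product property of the Kronecker product with Schur triangularization. The starting point is the definition $A \oplus B = A \otimes I_m + I_n \otimes B$, which reduces the question to understanding the spectra of $A \otimes I_m$ and $I_n \otimes B$ and how they interact inside the tensor product space $\mathbb{C}^n \otimes \mathbb{C}^m$.

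First, I would establish the eigenvector statement directly. Given $A x = \lambda x$ and $B z = \mu z$, the mixed-product identity $(P \otimes Q)(R \otimes S) = (PR) \otimes (QS)$ yields
$$(A \otimes I_m)(x \otimes z) = (Ax) \otimes z = \lambda (x \otimes z),$$
and symmetrically $(I_n \otimes B)(x \otimes z) = \mu (x \otimes z)$. Summing these shows $(A \oplus B)(x \otimes z) = (\lambda + \mu)(x \otimes z)$, producing a candidate eigenpair for every pair $(i,j)$. Linear independence of the resulting vectors $\{x_i \otimes z_j\}$ then follows from a standard fact about tensor products: if $\{x_i\}_{i \in [p]}$ and $\{z_j\}_{j \in [q]}$ are each linearly independent, so is their Kronecker product family, which I would verify by expanding a hypothetical linear dependence $\sum_{i,j} c_{ij}\, x_i \otimes z_j = 0$, regrouping as $\sum_i x_i \otimes \bigl(\sum_j c_{ij} z_j\bigr) = 0$, and applying independence twice.

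To show these are \emph{all} the eigenvalues of $A \oplus B$ (with multiplicity), I would invoke Schur's theorem to write $A = U T_A U^*$ and $B = V T_B V^*$ with $T_A, T_B$ upper triangular containing the eigenvalues on their diagonals. Setting $W = U \otimes V$ and using the mixed-product identity together with $W^* W = I$,
$$W^*(A \oplus B) W = (T_A \otimes I_m) + (I_n \otimes T_B),$$
which is again upper triangular with diagonal entries exactly $\{\lambda_i + \mu_j : i \in [n],\, j \in [m]\}$. Since similarity preserves spectrum (with multiplicity), this pins down the full list of $mn$ eigenvalues of $A \oplus B$.

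The main obstacle I anticipate is handling the case where $A$ or $B$ is defective: the direct eigenvector construction only recovers as many $x_i \otimes z_j$ as there are honest eigenvectors of the factors, so it cannot by itself account for all $mn$ eigenvalues with algebraic multiplicity. This is precisely why the Schur-triangularization step is necessary, and why the ``moreover'' clause in the theorem is phrased only in terms of the $p \leq n$ and $q \leq m$ linearly independent eigenvectors actually available. For the application to Lemma~\ref{thm:groundState}, the matrices $H_{B,j}$ are Hermitian and hence diagonalizable, so in our setting $p = n$ and $q = m$ and the eigenvector construction is already exhaustive; the most-excited eigenstate of $H_B = \bigoplus_j H_{B,j}$ is then the tensor product of the most-excited single-qubit eigenstates supplied by the lemma.
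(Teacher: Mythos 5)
Your proof is correct. Note, though, that the paper does not prove this statement at all: it is quoted verbatim as Theorem 13.16 of Laub's textbook and used as a black box (its only role here is to feed Proposition~\ref{prop:eigenKroneckerSum}), so there is no in-paper argument to compare against. Your write-up is the standard self-contained proof and it is sound: the mixed-product computation gives $(A \oplus B)(x\otimes z)=(\lambda+\mu)(x\otimes z)$; the Schur step is exactly what is needed to get \emph{all} $mn$ eigenvalues with algebraic multiplicity even when $A$ or $B$ is defective, since $W=U\otimes V$ is unitary and $T_A\otimes I_m + I_n\otimes T_B$ is upper triangular with diagonal $\{\lambda_i+\mu_j\}$; and you correctly identify that the ``moreover'' clause only promises eigenvectors for the pairs actually available. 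The one place to tighten if you wrote this out in full is the linear-independence step: from $\sum_i x_i \otimes \bigl(\sum_j c_{ij} z_j\bigr)=0$ you cannot literally ``apply independence'' to the $x_i$ without an extra device --- either extend $\{x_i\}$ and $\{z_j\}$ to bases of $\mathbb{C}^n$ and $\mathbb{C}^m$ (so that $\{x_i\otimes z_j\}$ sits inside a basis of $\mathbb{C}^{nm}$), or apply functionals $f_k\otimes\mathrm{id}$ with $f_k(x_i)=\delta_{ki}$ to conclude each inner sum vanishes, and only then use independence of the $z_j$. Your closing remark about the Hermitian, hence diagonalizable, summands $H_{B,j}$ matches how the paper actually invokes the theorem in Proposition~\ref{prop:eigenKroneckerSum}.
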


By applying Theorem \ref{thm:eigenKroneckerSum} with the summands $H_{B,j}$ of the mixing Hamiltonian $H_B$, we get the desired result as shown in Proposition \ref{prop:eigenKroneckerSum}.

\begin{prop}
\label{prop:eigenKroneckerSum}
Let $\ket{s_0}$ be any separable initial state and let $H_B$ be its corresponding custom mixer. Then $\ket{s_0}$ is the highest-energy eigenstate of $H_B$.
\end{prop}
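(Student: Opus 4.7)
The plan is to combine Lemma \ref{thm:groundState} with Theorem \ref{thm:eigenKroneckerSum}, iterated to $n$ summands. First I would observe that each single-qubit operator $H_{B,j} = x_j\sigma^x + y_j\sigma^y + z_j\sigma^z$ is a traceless Hermitian matrix whose square is $(x_j^2 + y_j^2 + z_j^2) I = I$, so its spectrum is exactly $\{+1, -1\}$. By Lemma \ref{thm:groundState}, the corresponding single-qubit state $\ket{s_{0,j}}$ is the most-excited eigenstate, hence it is the eigenvector associated with eigenvalue $+1$.

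Next I would apply Theorem \ref{thm:eigenKroneckerSum} inductively to the Kronecker sum $H_B = \bigoplus_{j=1}^n H_{B,j}$. The base case $n=1$ is immediate from the previous step. For the inductive step, using the theorem with $A = \bigoplus_{j=1}^{n-1} H_{B,j}$ (whose eigenvalues are all sums of $\pm 1$'s by induction, with maximum $n-1$ attained by $\bigotimes_{j=1}^{n-1} \ket{s_{0,j}}$) and $B = H_{B,n}$, the eigenvalues of $H_B$ are all sums of $n$ terms each $\pm 1$, so they lie in $\{-n, -n+2, \dots, n-2, n\}$, with maximum $n$. Moreover, the tensor product structure in Theorem \ref{thm:eigenKroneckerSum} shows that the eigenvector for eigenvalue $n$ is precisely $\bigotimes_{j=1}^n \ket{s_{0,j}} = \ket{s_0}$.

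Concluding, since the largest eigenvalue of $H_B$ is $n$ and $\ket{s_0}$ realizes it, $\ket{s_0}$ is a most-excited eigenstate of $H_B$, as required. The only subtle step is verifying that Theorem \ref{thm:eigenKroneckerSum} can be iterated cleanly; this follows because the Kronecker sum is associative up to a fixed permutation of basis elements that leaves the spectrum and tensor-product eigenvector structure intact. No additional machinery is needed, so I do not anticipate any serious obstacles beyond carefully tracking this bookkeeping.
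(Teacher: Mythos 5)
Your proposal is correct and follows essentially the same route as the paper: combine Lemma \ref{thm:groundState} (each $\ket{s_{0,j}}$ is the top eigenstate of $H_{B,j}$) with Theorem \ref{thm:eigenKroneckerSum} applied to the Kronecker sum, so that the maximal eigenvalue of $H_B$ is the sum of the maximal single-qubit eigenvalues and is attained by $\bigotimes_j \ket{s_{0,j}} = \ket{s_0}$. Your explicit computation of the $\pm 1$ spectrum and the induction over summands are just more detailed bookkeeping of the same argument the paper gives.
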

\begin{proof}
    Suppose for each $j=1,\dots,n$ we have a matrix $A_j$ with real eigenvalues and suppose the largest eigenvalue of $A_j$ is $\lambda_j$ with corresponding eigenvector $v_j$. 
    As a consequence of Theorem \ref{thm:eigenKroneckerSum}, we have that the largest eigenvalue of $\bigoplus_{j=1}^n A_j$ is $\sum_{j=1}^n \lambda_j$ with one corresponding eigenvector being $\bigotimes_{j=1}^n v_j$.  Letting $A_j = H_{B,j}$ and $v_j = \ket{s_{0,j}}$ and applying Lemma \ref{thm:groundState}, we see that $\ket{s_0} = \bigotimes_{j=1}^n \ket{s_{0,j}}$ is a highest-energy eigenstate for $H_B = \bigoplus_{j=1}^n H_{B,j}$.
\end{proof}

\subsection{Eigenvalue Gap For Custom Mixers}

\label{sec:spectralGap}
It is known that if the Quantum Adiabatic Algorithm is run for 
large enough time $T$ with time-varying Hamiltonian $H(t) = (1-t/T)H_B+(t/T)H_C$ starting with the highest-energy eigenstate of $H(0) = H_B$, then one can arrive at the highest-energy eigenstate of $H(T) = H_C$, i.e. the optimal solution, provided that the gap between the largest and second-largest eigenvalue of $H(t)$ is strictly positive for all $t < T$. This translates to finding an optimal solution when running QAOA as we let the circuit depth $p$ tend to infinity. Farhi et al. showed that this eigenvalue gap was strictly positive for standard QAOA \cite{FGG14}, thus guaranteeing convergence to the optimal solution. In particular, they applied the following Perron-Frobenius theorem to irreducible stoquastic\footnote{Stoquastic matrices are square matrices with real entries so that all of the off-diagonal entries are non-negative. Let $A$ be an $n \times n$ square matrix. Construct a directed graph $G_A$ with vertex set $[n]$ where the edge $(i,j)$ is included if and only if $A_{ij} > 0$. If $G_A$ is strongly connected, then we say that $A$ is irreducible. Otherwise, we say that $A$ is reducible.} matrices. 

\begin{theorem}{\cite{Frob12}}
\label{thm:perronFrob}
 Let $A$ be an irreducible matrix whose entries are all real and non-negative. Let $r$ be the spectral radius of $A$, i.e., $r = \max\{ |\lambda|: \lambda \text{ is eigenvalue of } A\}$. Then $r$ is an eigenvalue of $A$ and furthermore, it has algebraic multiplicity of 1.
\end{theorem}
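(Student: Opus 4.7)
The plan is to establish two things in sequence: first, that $r$ is actually an eigenvalue of $A$ with a strictly positive eigenvector, and second, that the Jordan block at $r$ is $1 \times 1$. The workhorse throughout is the observation that irreducibility of $A$ is equivalent to $(I+A)^{n-1}$ being entrywise strictly positive: strong connectivity of $G_A$ means every ordered pair $(i,j)$ is joined by a directed path of length at most $n-1$, and expanding $(I+A)^{n-1}$ as a sum over walks picks up a strictly positive contribution from any such path. I will lean on this fact repeatedly.

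For Step 1, I would use the Collatz--Wielandt characterization. Define $f(x) = \min_{i : x_i > 0}\,(Ax)_i / x_i$ on the nonzero nonnegative vectors and let $\rho = \sup f$ taken over the standard simplex $\Delta$. After replacing any candidate $x$ by $(I+A)^{n-1} x$ (which is strictly positive, so the ``$x_i > 0$'' qualifier disappears), $f$ becomes continuous on a compact set and the supremum $\rho$ is attained at some $x^\star > 0$. A direct perturbation argument shows $A x^\star = \rho x^\star$ exactly: if any coordinate of $A x^\star - \rho x^\star$ were strictly positive, moving a small amount of mass in $x^\star$ toward that coordinate would strictly increase $f$, contradicting maximality. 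Finally, for any eigenpair $(\lambda,u)$ of $A$ the triangle inequality applied coordinatewise to $Au = \lambda u$ gives $A|u| \ge |\lambda|\,|u|$, so $f(|u|) \ge |\lambda|$ and hence $|\lambda| \le \rho$. Therefore $\rho = r$, and $r$ is an eigenvalue with a strictly positive eigenvector $v$.

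For Step 2, geometric multiplicity $1$ follows quickly: if $w$ is another eigenvector for $r$, choose the scalar $c$ so that $v - cw \ge 0$ with a zero coordinate; this is again a nonnegative eigenvector for $r$, but the identity $(1+r)^{n-1}(v - cw) = (I+A)^{n-1}(v - cw)$ together with $(I+A)^{n-1} > 0$ forces $v - cw > 0$ unless it is zero, so $w$ is proportional to $v$. To upgrade to algebraic multiplicity $1$, apply Step 1 to $A^\top$, which is also irreducible, to get a strictly positive left eigenvector $u^\top$ with $u^\top A = r u^\top$; then $u^\top v > 0$. If $r$ had algebraic multiplicity greater than $1$, there would exist a generalized eigenvector $g$ with $(A - rI)g = v$, giving
\[
u^\top v \;=\; u^\top (A - rI)\, g \;=\; \bigl(A^\top u - r u\bigr)^{\!\top} g \;=\; 0,
\]
contradicting $u^\top v > 0$; hence no such $g$ exists. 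The hardest step is Step 1, where the Collatz--Wielandt supremum must be shown to be attained and to coincide with $r$, and this is also the only place where irreducibility is used in an essential (nonalgebraic) way; the algebraic multiplicity argument is then a short consequence, provided one sets up Step 1 symmetrically for $A$ and $A^\top$ from the start.
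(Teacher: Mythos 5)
The paper offers no proof of this statement to compare against: it is the classical Perron--Frobenius theorem, quoted verbatim with a citation to Frobenius, so you are proving a result the authors simply import. Judged on its own terms, your argument is the standard Collatz--Wielandt (Wielandt) proof and its architecture is sound: the reformulation of irreducibility as $(I+A)^{n-1}>0$, the Collatz--Wielandt supremum producing a strictly positive eigenvector for the spectral radius, geometric multiplicity one by subtracting a multiple of $v$ to force a zero coordinate and then hitting with $(I+A)^{n-1}$, and algebraic multiplicity one via the positive left eigenvector of $A^\top$ paired against a would-be generalized eigenvector, $u^\top v = u^\top (A-rI)g = 0$. Two steps need tightening. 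First, to conclude the supremum is attained on the compact positive set $(I+A)^{n-1}\Delta$, you must also check that the replacement $x \mapsto (I+A)^{n-1}x$ does not decrease $f$; this follows from $Ax \ge f(x)\,x$ together with the fact that the entrywise nonnegative matrix $(I+A)^{n-1}$ commutes with $A$ and preserves the inequality, but it has to be stated, since otherwise the restriction could lower the supremum. Second, your perturbation argument for exactness of $Ax^\star = \rho x^\star$ is shaky as written: if $(Ax^\star-\rho x^\star)_j>0$ but the minimum ratio is attained at a coordinate $i$ with $A_{ij}=0$, then adding mass at $j$ leaves that ratio fixed at $\rho$, so $f$ need not strictly increase. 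The standard repair uses exactly the tool you set up at the start: if $z = Ax^\star - \rho x^\star \ge 0$ with $z \neq 0$, then $(I+A)^{n-1}z > 0$, i.e. $Ay - \rho y > 0$ for $y = (I+A)^{n-1}x^\star$, so $f(y) > \rho$, contradicting maximality. Finally, in the geometric-multiplicity step you should note that a complex eigenvector can be split into real and imaginary parts (each a real eigenvector of the real matrix $A$ for the real eigenvalue $r$) before choosing the scalar $c$. With those local repairs the proof is complete and correct.
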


If the eigenvalues of an $n \times n$ matrix $A$ are real (e.g. if $A$ is Hermitian), then its eigenvalues (with multiplicity) can be ordered as $\lambda_1 \geq \cdots \geq \lambda_n$; if $A$ is also irreducible and has real, non-negative entries then Theorem \ref{thm:perronFrob} ensures a gap between the two largest eigenvalues (otherwise, if $\lambda_{1} = \lambda_2$, then the algebraic multiplicity of $\lambda_1$ would be at least 2, contradicting the statement of the theorem.) This observation still holds if we relax the non-negativity condition to allow negative entries along the diagonal as seen in the following lemma.

\begin{lemma}
\label{thm:perronStoquastic}
Let $A$ be an irreducible stoquastic Hermitian matrix. Then the difference between the largest and second-largest eigenvalue of $A$ is strictly positive.
\end{lemma}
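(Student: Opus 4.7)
The plan is to reduce the claim directly to Theorem~\ref{thm:perronFrob}. The obstacle is that $A$ is only stoquastic, not entrywise non-negative: its diagonal entries may be negative, so Perron--Frobenius cannot be applied to $A$ itself. I would sidestep this by a uniform diagonal shift that does not disturb either Hermiticity, irreducibility, or the eigenvalue ordering.

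Concretely, let $c \geq -\min_i A_{ii}$ and set $A' = A + cI$. Since $A$ is Hermitian and stoquastic, the off-diagonal entries of $A'$ agree with those of $A$ and are real and non-negative; by the choice of $c$, the diagonal entries of $A'$ are also non-negative, so $A'$ is a real non-negative matrix. The directed graph $G_{A'}$ differs from $G_A$ only by possibly including self-loops at every vertex, so $G_{A'}$ is strongly connected iff $G_A$ is, and hence $A'$ is irreducible. Finally, $A'$ is Hermitian, so all its eigenvalues are real, and the map $\lambda \mapsto \lambda + c$ is a bijection between the spectra of $A$ and $A'$ preserving algebraic multiplicities and ordering.

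Next I would invoke Theorem~\ref{thm:perronFrob} on $A'$: its spectral radius $r$ is an eigenvalue of algebraic multiplicity $1$. Because $A'$ is Hermitian with real non-negative entries (and is nonzero, as irreducibility for $n \geq 2$ forces a nonzero off-diagonal entry; the case $n=1$ is vacuous), the spectral radius is attained by the largest real eigenvalue, i.e.\ $r = \lambda_1(A')$. Hence $\lambda_1(A') > \lambda_2(A')$, and shifting back by $-c$ yields $\lambda_1(A) > \lambda_2(A)$, which is the desired strictly positive gap.

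The only mildly delicate step is confirming that Perron's ``spectral radius'' eigenvalue coincides with the largest (rather than merely largest-in-absolute-value) eigenvalue after the shift; this is handled by the fact that a non-negative Hermitian matrix has a non-negative Perron eigenvalue that dominates all other eigenvalues, which is essentially immediate once $A'$ is non-negative. Everything else is bookkeeping about how a diagonal shift interacts with irreducibility and the eigenvalue ordering.
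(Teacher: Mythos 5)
Your proposal is correct and follows essentially the same route as the paper: shift $A$ by a suitable multiple of the identity to obtain a non-negative Hermitian matrix, apply Theorem~\ref{thm:perronFrob}, and transfer the spectral gap back via the eigenvalue shift. Your added checks (that the shift preserves irreducibility and that the Perron eigenvalue of the shifted Hermitian matrix is indeed its largest eigenvalue) are details the paper leaves implicit, but the argument is the same.
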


\begin{proof}
Since $A$ is stoquastic, then all of the off-diagonal elements are already non-negative; however, the diagonal elements may be negative. Observe that for large enough $k$, we have that $A+kI$ is a matrix with all non-negative entries. Note that $A+kI$ is Hermitian (since $A$ and $I$ are) and thus the eigenvalues of $A+kI$ are real. If we apply the Perron-Frobenius theorem to $A+kI$, one observes that the gap between the largest and second-largest eigenvalue is strictly positive.

One can show that the eigenvalues of $A+kI$ can be obtained by shifting all of the eigenvalues of $A$ by $k$ (i.e. of $\lambda$ is an eigenvalue of $A$, then $\lambda+k$ is an eigenvalue of $A+kI$). Moreover, the multiplicities of these shifted eigenvalues are preserved. Thus, the gap between the largest and second-largest eigenvalue of $A+kI$ (which is strictly positive) is equal to the gap between the largest and second-largest eigenvalue of $A$.
\end{proof}

If the custom mixer $H_B$ has the form $\sum_{j=1}^n (x_j\sigma^x_j+z_j\sigma^z_j)$ with $x_j \in \mathbb{R}^+$ and $z_j \in \mathbb{R}$ for $j=1,\dots,n$, then one can show that $H(t)$ is an irreducible, stoquastic matrix. Thus by Lemma \ref{thm:perronStoquastic}, the eigenvalue gap of $H(t)$ is strictly positive meaning that one can achieve the optimal solution as the circuit depth $p\to \infty$ in QAOA-warmest. Geometrically, this special case corresponds to an initial separable state whose qubits lie in the $xz$-plane on the Bloch sphere with $x > 0$. The stoquasticity and irreducibility of this special case is formalized in the following two propositions respectively. We include their proofs at the end of this section, and go on to prove our main result first.

\begin{prop}
\label{thm:specialcaseStoquastic} Let $n$ be a positive integer. For each $j=1,\dots,n$ let $x_j$ be any non-negative real number and let $z_j$ be any real number. Let $H_B = \sum_{j=1}^n (x_j\sigma^x_j+z_j\sigma^z_j)$ and let $H_C$ be the problem Hamiltonian for QAOA. Then $H(t) = (1-t/T)H_B+(t/T)H_C$ is stoquastic for all $0 \leq t \leq T$.
\end{prop}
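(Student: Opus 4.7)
The plan is to verify the two defining properties of a stoquastic matrix, namely real entries and non-negative off-diagonal entries, by examining $H_B$ and $H_C$ separately and then combining them through the convex combination $(1-t/T)H_B + (t/T)H_C$.

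First I would handle the cost Hamiltonian. Since $H_C = \tfrac{1}{2}\sum_{(i,j)\in E} w_{ij}(1 - \sigma_i^z \sigma_j^z)$ is a sum of products of $\sigma^z$ operators (and the identity), and $\sigma^z$ is a real diagonal matrix, each term of $H_C$ is real and diagonal in the computational basis. Hence $H_C$ is a real diagonal matrix, which vacuously has non-negative off-diagonal entries and is therefore stoquastic.

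Next I would analyze $H_B = \sum_{j=1}^{n} \bigl(x_j \sigma^x_j + z_j \sigma^z_j\bigr)$ term by term. The single-qubit operator $\sigma^z$ is again real and diagonal, so each $z_j \sigma^z_j$, expanded as $I^{\otimes (j-1)} \otimes \sigma^z \otimes I^{\otimes (n-j)}$, contributes only to the diagonal. The single-qubit operator $\sigma^x = \bigl(\begin{smallmatrix}0&1\\1&0\end{smallmatrix}\bigr)$ is real with zero diagonal and non-negative off-diagonal entries, so its tensor embedding $I^{\otimes(j-1)} \otimes \sigma^x \otimes I^{\otimes(n-j)}$ is real with zero diagonal and non-negative off-diagonal entries. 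Multiplying by $x_j \geq 0$ preserves this. Summing these contributions over $j$ yields an $H_B$ whose diagonal is real (coming from the $z_j$ terms) and whose off-diagonal entries are non-negative sums of terms of the form $x_j \geq 0$; hence $H_B$ is stoquastic.

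Finally, I would observe that the set of stoquastic matrices of a fixed dimension is closed under non-negative linear combinations: if $M_1, M_2$ are real with non-negative off-diagonal entries and $a, b \geq 0$, then so is $aM_1 + bM_2$. Since $1 - t/T \geq 0$ and $t/T \geq 0$ for all $0 \leq t \leq T$, the Hamiltonian $H(t) = (1-t/T)H_B + (t/T)H_C$ is a non-negative combination of two stoquastic matrices and is therefore stoquastic, completing the proof. I do not anticipate a real obstacle here; the content is entirely a bookkeeping verification once one spells out the matrix structure of the Pauli operators $\sigma^x$ and $\sigma^z$, and the sign constraints $x_j \geq 0$ and $1 - t/T, t/T \geq 0$ are exactly what is needed to ensure the off-diagonals never become negative.
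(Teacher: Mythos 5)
Your proof is correct and follows essentially the same route as the paper: you observe that $H_C$ is real diagonal, that the only off-diagonal contributions to $H_B$ come from the $x_j\sigma^x_j$ terms with $x_j \geq 0$, and that non-negative combinations (with $1-t/T,\ t/T \geq 0$) preserve stoquasticity. The only cosmetic difference is that the paper verifies the stoquasticity of $H_B$ by writing it as the Kronecker sum $\bigoplus_j (x_j\sigma^x + z_j\sigma^z)$ of $2\times 2$ stoquastic blocks and invoking Lemma \ref{thm:stoquasticKronecker}, whereas you check the $n$-qubit tensor embeddings of $\sigma^x$ and $\sigma^z$ entrywise directly, which amounts to the same bookkeeping.
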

\begin{prop}
\label{thm:specialcaseIrr}
Let $n$ be a positive integer. For each $j=1,\dots,n$ let $x_j$ be a positive real number and let $z_j$ be any real real number. Let  $H_B = \sum_{j=1}^n (x_j\sigma^x_j+z_j\sigma^z_j)$ and let $H_C$ be the problem Hamiltonian for QAOA. Then $H(t) = (1-t/T)H_B+(t/T)H_C$ is irreducible for all $0\leq t < T$. 
\end{prop}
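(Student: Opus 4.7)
The plan is to translate irreducibility of $H(t)$ into strong connectivity of the associated directed graph $G_{H(t)}$ on the computational basis $\{0,1\}^n$ (where $(b,b')$ is an edge iff the matrix element $H(t)_{b b'}$ is strictly positive), and then identify this graph with the $n$-dimensional Boolean hypercube, which is well known to be connected.

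First I would examine which terms of $H(t)$ can produce positive off-diagonal entries in the computational basis. The cost Hamiltonian $H_C = \tfrac{1}{2}\sum_{(i,j)\in E} w_{ij}(1 - \sigma^z_i \sigma^z_j)$ is diagonal in this basis, and the $z_j \sigma^z_j$ summands of $H_B$ are diagonal as well, so neither contributes any off-diagonal entry. Hence every off-diagonal entry of $H(t)$ comes from $(1-t/T)\sum_{j=1}^n x_j \sigma^x_j$. For $0 \leq t < T$ the scalar $(1-t/T)$ is strictly positive, and by hypothesis each $x_j > 0$, so each $\sigma^x_j$ appears with a strictly positive coefficient.

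Next I would use the fact that in the computational basis $\sigma^x_j$ has matrix element $1$ between $\ket{b}$ and $\ket{b'}$ precisely when $b$ and $b'$ differ only in the $j$th coordinate, and $0$ otherwise. Summing over $j$ with positive weights, one gets $H(t)_{b b'} > 0$ iff the Hamming distance between $b$ and $b'$ is exactly $1$. This identifies $G_{H(t)}$ with the $n$-dimensional Boolean hypercube (viewed as a directed graph with edges in both directions, since the $\sigma^x_j$ are Hermitian). Any two bitstrings are connected by a sequence of at most $n$ single-bit flips, so $G_{H(t)}$ is strongly connected and $H(t)$ is irreducible, as desired.

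The only subtlety is the strict inequality $t < T$: at $t = T$ the mixer coefficient $(1-t/T)$ vanishes, $H(T) = H_C$ is diagonal, and $G_{H(T)}$ has no edges, so irreducibility genuinely fails at the endpoint---this is precisely why the proposition excludes $t = T$. I do not anticipate any other difficulty; the argument is essentially a bookkeeping check that the positive off-diagonal pattern of $\sum_j x_j \sigma^x_j$ coincides with the hypercube adjacency structure, combined with a trivial verification that the cost and $\sigma^z$ contributions are diagonal.
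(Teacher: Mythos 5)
Your proof is correct. It differs from the paper's in execution, though the underlying object is the same: the paper first reduces irreducibility of $H(t)$ to that of $H_B$ (since $H_C$ is diagonal and positive scaling is harmless), writes $H_B = \bigoplus_{j=1}^n (x_j\sigma^x + z_j\sigma^z)$ as a Kronecker sum, and then invokes two cited results --- that the graph associated with a Kronecker sum is the Cartesian product of the graphs of the summands, and that a Cartesian product of connected graphs is connected --- where each $2\times 2$ summand's graph is a single edge because $x_j \neq 0$. You instead compute the off-diagonal sign pattern of $H(t)$ directly in the computational basis, observe that $H_C$ and the $z_j\sigma^z_j$ terms contribute nothing off-diagonal while $(1-t/T)x_j\sigma^x_j$ contributes a strictly positive entry exactly between bitstrings at Hamming distance one, and conclude that $G_{H(t)}$ is the Boolean hypercube, which is connected by single-bit flips. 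The end structure is identical (the hypercube is precisely the Cartesian product of $n$ copies of $K_2$), but your route is more elementary and self-contained, avoiding the external Kronecker-sum and graph-product theorems, and it makes explicit both why no cancellation can spoil positivity (the competing terms are diagonal) and why the restriction $t < T$ is needed; the paper's more modular argument, by contrast, would carry over unchanged to other separable mixers whose single-qubit graphs are connected, without redoing any matrix-element bookkeeping.
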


We can now prove the convergence for custom mixers of the special form ($\sum_{j=1}^n (x_j\sigma^x_j+z_j\sigma^z_j)$ with $x_j \in \mathbb{R}^+$ and $z_j \in \mathbb{R}$ for $j=1,\dots,n$) and their corresponding initializations {as described in Proposition \ref{thm:convergenceSpecial} in Section \ref{sec:customMixers}.}

\begin{proof}
    By construction, the corresponding custom mixers will have the form $\sum_{j=1}^n (x_j\sigma^x_j+z_j\sigma^z_j)$ with $x_j \in \mathbb{R}^+$ and $z_j \in \mathbb{R}$ for $j=1,\dots,n$. The result then follows from Proposition \ref{prop:eigenKroneckerSum}, Lemma \ref{thm:perronStoquastic} (which is applicable due to Proposition \ref{thm:specialcaseStoquastic} and Proposition \ref{thm:specialcaseIrr}), and the adiabatic theorem.
\end{proof}

Recall that for standard QAOA, we have that $H_B = \sum_{j=1}^n (1 \cdot \sigma^x_j + 0\cdot \sigma^z_j)$. Thus, the fact that standard QAOA converges to the optimal cut as $p \to \infty$ is a special case of Propositions \ref{thm:specialcaseStoquastic} and \ref{thm:specialcaseIrr}. \\

\paragraph{Proof of {Proposition} \ref{thm:specialcaseStoquastic}}

We first prove a technical lemma that is needed in order to prove the proposition.

\begin{lemma}
\label{thm:stoquasticKronecker}
If $A$ and $B$ are $n \times n$ and $m \times m$ stoquastic matrices respectively, then so is $A \oplus B$.
\end{lemma}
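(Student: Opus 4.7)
The plan is to work directly from the definition of the Kronecker sum and verify the sign condition entry-by-entry. Recall that $A \oplus B = A \otimes I_m + I_n \otimes B$, so I will index the rows and columns of $A \oplus B$ by pairs $(i,k) \in [n] \times [m]$, with the $((i,k),(j,l))$-entry given by
\[
(A \oplus B)_{(i,k),(j,l)} = A_{ij}\,\delta_{kl} + \delta_{ij}\,B_{kl}.
\]
Because $A$ and $B$ have real entries, so does $A \oplus B$, which takes care of the reality condition in the paper's definition of stoquasticity. It remains to check that every off-diagonal entry is non-negative.

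Next, I would do a case split based on which of $i \ne j$ or $k \ne l$ holds (at least one must, since the pair $(i,k) \ne (j,l)$). In the case $i \ne j, k = l$, the expression collapses to $A_{ij}$, which is non-negative because $A_{ij}$ is an off-diagonal entry of the stoquastic matrix $A$. Symmetrically, if $i = j, k \ne l$, the entry reduces to $B_{kl}$, which is non-negative by stoquasticity of $B$. Finally, if $i \ne j$ and $k \ne l$, both Kronecker deltas vanish, so the entry equals $0$. Since $0$ is non-negative and the three cases exhaust all off-diagonal positions, the off-diagonal entries of $A \oplus B$ are non-negative, completing the proof.

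This is essentially a bookkeeping argument, so there is no real obstacle; the only thing to be careful about is the indexing convention for the Kronecker product and confirming that the case analysis is exhaustive (which it is, as noted above). The lemma then plugs directly into Proposition \ref{thm:specialcaseStoquastic}, since each summand $x_j \sigma^x_j + z_j \sigma^z_j$ (written out as an operator on $n$ qubits) is a Kronecker sum of single-qubit stoquastic matrices once one verifies the single-qubit case, and $H_C$ is diagonal and hence trivially stoquastic, so the convex combination defining $H(t)$ preserves the off-diagonal sign condition.
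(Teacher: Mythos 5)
Your proof is correct and takes essentially the same route as the paper: both start from $A \oplus B = A \otimes I_m + I_n \otimes B$ and verify off-diagonal non-negativity directly, the paper by checking each summand blockwise and you by an equivalent entrywise case split on $(i,k) \neq (j,l)$. No gaps.
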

\begin{proof}
By definition, $A \oplus B = A \otimes I_m + I_n \otimes B$. Since we know the sum of stoquastic matrices is stoquastic, it suffices to show that $A \otimes I_m$ and $I_n \otimes B$ is stoquastic.

Observe that,
$$A \otimes I_m = \begin{bmatrix} 
A_{11}I_m & \cdots & A_{1n}I_m \\
\vdots & \ddots & \vdots \\
A_{n1}I_m & \cdots & A_{nn}I_m\end{bmatrix}.$$ Note that for $i \neq j$, the $ij$th block in the block matrix above is $A_{ij}I_m$, which contains only non-negative entries since $A_{ij}$ is an off-diagonal element of $A$ and $A$ is stoquastic. Now consider the entries in the $ij$th block where $i=j$. Note that if there is an off-diagonal entry of $A \otimes I_m$ that is part of the $ii$th block, then it is also an off-diagonal entry of that block but all the off-diagonal entries of the $ii$th block ($A_{ii}I$) are zero. Thus, we have shown that every off-diagonal element is non-negative, thus $A \otimes I_m$ is stoquastic.

Next, observe that
$$I_n \otimes B = \begin{bmatrix} 
1B & 0B & \cdots & 0B \\
0B & \ddots & \ddots & \vdots \\
\vdots & \ddots & \ddots & 0B \\
0B & \dots & 0B & 1B\end{bmatrix},$$
which makes it clear that the off-diagonal elements of $I_n \otimes B$ are either 0 or the off-diagonal elements of B which are non-negative (by stoquasticity of $B$) and thus $I_n \otimes B$ is stoquastic.

\end{proof}
We are now ready to prove {Proposition \ref{thm:specialcaseStoquastic}.}
\begin{proof}
By construction $H_C$ (and thus $(t/T)H_C$) is a diagonal matrix (as $\ket{b}$ is an eigenvector of $H_C$ for each $n$-length bitstring $b$). If $H_B$ were stoquastic, then $(1-t/T)H_B$ is stoquastic (as $1-t/T \geq 0$ for $0 \leq t \leq T$) and thus $H(t) = (1-t/T)H_B+(t/T)H_C$ is stoquastic (as adding a diagonal matrix to a stoquastic matrix yields a stoquastic matrix). Thus, it remains to show that $H_B$ is stoquastic.

Let $H_{B,j} =  x_j\sigma^x+z_j\sigma^z$. Expanding $\sigma^x$ and $\sigma^z$, we have that
$$H_{B,j} = \begin{bmatrix} z_j & x_j \\ x_j & -z_j\end{bmatrix},$$
which is clearly stoquastic as we assumed that $x_j \geq 0$. As $H_B = \bigoplus_{j=1}^n H_{B,j}$, the result now follows from Lemma \ref{thm:stoquasticKronecker}.
\end{proof}

\paragraph{Proof of {Proposition} \ref{thm:specialcaseIrr}}
\begin{proof}
First, we recall the definition of irreducible matrix. Let $A$ be an $n \times n$ square matrix. Construct a directed graph $G_A$ with vertex set $[n]$ where the edge $(i,j)$ is included if and only if $A_{ij} > 0$. If $G_A$ is strongly connected, then we say that $A$ is irreducible. Otherwise, we say that $A$ is reducible.

For any square matrix $M$, let $G_M$ be the corresponding directed graph as described above. Observe that $H_C$ (and hence $(t/T)H_C$) is a diagonal matrix, thus, by the definition of irreducibility, the irreducibility of $(1-t/T)H_B+(t/T)H_C$ is the same as $(1-t/T)H_B$. Similarly, scaling a matrix by a positive constant does not affect its irreducibility, so it suffices to prove the irreducibility of $H_B$.

Observe that $\sigma^x,\sigma^z$ are symmetric and thus it is not very difficult to show that $H_B$ is also symmetric. This means, for the purposes of showing irreducibility, $G_{H_B}$ is effectively an undirected graph and we just need to show that it is connected. One can write $H_B$ as $H_B = \bigoplus_{j=1}^n (x_j\sigma^x+z_j\sigma^z)$ where $\bigoplus$ denotes the Kronecker sum. According to \cite{KR05}, this means that $G_{H_B}$ can be written as the Cartesian graph product of the graphs $H_1,H_2,\dots,H_n$ where $H_j = G_{A_j}$ with $A_j = x_j\sigma^x+z_j\sigma^z$. Observe, that each of the $H_j$'s are connected if and only if $x_j \neq 0$ which is true by assumption. Since each of the $H_j$'s are connected, then it is also the case that $G_{H_B}$ is connected as well (see Theorem 1 of \cite{Spacapan08}) which finishes the proof.

\end{proof}

\section{Details on Perturbed Single-Cut Initialization}
\label{sec:singleCutAppendix}

In a perturbed single-cut initialization scheme with cut $(S, V \setminus S)$ and regularization angle $\theta^*$, the quantum state is given by,
$${\ket{s_0} = \bigotimes_{j=1}^n \ket{s_{0,j}},}$$
where,
$$\ket{s_{0,j}} = \begin{cases}
    R_Y(\theta^*)\ket{0}, & j \in S\\
    R_Y(\pi - \theta^*)\ket{0}, & j \notin S
\end{cases}$$
where $R_Y(\theta)$ is a single-qubit rotation about the $y$-axis by angle $\theta$. Geometrically, qubits lie on the $xz$-plane of the Bloch sphere (with $x>0$) so that they lie at an angle $\theta^*$ away from either the north or south pole of the Bloch sphere, depending on which side of the cut $(S, V \setminus S)$ the vertex is on.

In one of their approaches for Max-Cut, Egger et al. \cite{egger2020warm} use a mixer for QAOA that is different than both the standard mixer and the custom mixers described in Section \ref{sec:customMixers}. They show that their mixer has the property that, when a \{perturbed single-cut initialization (based on a cut $(S, V \setminus S)$) with regularization parameter $\theta^* = \pi/3$ is used, that measurement of the depth-1 QAOA with variational parameters $(\gamma_1,\beta_1) = (0,\frac{\pi}{2})$ produces exactly the cut $(S, V \setminus S)$ that was used to initialize the initial quantum state. The drawback is that, with such a mixer proposed by Egger et al., no convergence guarantees are known and experiments suggest that, unlike standard QAOA, the optimal cut value is not achieved in expectation with increased circuit depth.

Cain et al. \cite{CFGRT22} consider the case where $\theta^* = 0$ and the standard mixer is used. They find that such an approach performs very poorly; in particular, no convergence towards the optimal cut is found with increased circuit depth either.

One can also consider using a perturbed single-cut initialization together with the custom mixers proposed in Section \ref{sec:customMixers}; this idea was very briefly explored in the appendices of Egger et al.'s work \cite{egger2020warm}. From Proposition \ref{thm:singleCutAR} and Theorem \ref{thm:convergenceGeneral}, it is clear that this approach (with non-negative weighted graphs) yields an approximation ratio approaching 0.878 for $\theta^* \to 0$ for depth-0 QAOA and that such an approach convergences to the optimal cut with infinite circuit depth.

Recall (Section \ref{sec:GWRelaxations}) that QAOA with a (2-dimensional) projected GW initialization has a depth-0 approximation ratio of 0.658. One may be led to believe that, when custom mixers are used, that a perturbed single-cut initialization (with small regularization angle $\theta^*$) is the better choice due to its (theoretically) better approximation ratio at depth-0. However, as seen empirically in Section \ref{sec:experiments}, this is not the case: when $\theta^*$ is small, the convergence rate of QAOA with single-cut initializations is (empirically) incredibly slow across all instances. For small $\theta^*$, QAOA with custom mixers geometrically performs rotations around axes that are near the poles of the Bloch sphere about the qubits' initial positions; it is possible that this geometric interpretation is responsible for the slow convergence for small $\theta^*$.

 \subsection{Relation to QUBO Approach}
Egger et al. \cite{egger2020warm} consider a warm-start approach (which they call continuous warm-started QAOA) for QUBO's of the form $$\min_{y \in \{0,1\}^n} y^TMy,$$
 where $M \in \mathbb{R}^{n \times n}$ is a real-symmetric matrix.
  They then consider the relaxation $$\min_{y \in [0,1]^n} y^TMy,$$
 i.e. the binary variables are relaxed to lie in the interval $[0,1]$. For certain matrices\footnote{In particular, Egger et al. \cite{egger2020warm} consider matrices of the form $M = N+D$ where $N\in \mathbb{R}^{n \times n}$ is a (symmetric) positive-semidefinite matrix and $D\in \mathbb{R}^{n \times n}$ is a diagonal matrix.} $M$, this yields a convex quadratic program which can be easily solved to (global) optimality \cite{GG06}; in this case, Egger et al. \cite{egger2020warm} find and use the globally optimal solution $y^*$ of the relaxation to produce a separable quantum initial state. We consider this approach in the context of Max-Cut, specifically in the case of graphs with non-negative edge weights.

One can formulate Max-Cut on a graph $G=(V,E)$ with edge weights $w: E \to \mathbb{R}$ as follows \cite{DGS18}. Simply construct the QUBO matrix $M$ by setting $M_{ij} = w_{ij}$ for $i \neq j$ and $M_{ii} = -\sum_{j=1}^n w_{ij}$ for $i \in \{1,\dots,n\}$. If $x^*$ is an optimal solution to Max-Cut (using the formulation in Equation \ref{eqn:maxcutFormulation}), then there is a corresponding $y^*$ that is an optimal solution of the QUBO such that $x_i^* = 2y_i^*-1$ for $i=1,\dots,n$.

Observe that $M = -L$ where $L$ Laplacian matrix of the graph $G$ which is known to be positive-semidefinite (for graphs with non-negative edge weights) \cite{chung1997spectral}. Since $L$ is positive-semidefinite, then the function $f(x) = x^TLx$ is convex in $x$ (as the Hessian of $f$, $\nabla^2f(x) = 2L$, is positive-semidefinite). Thus, $x^TMx = -f(x)$ is generally not convex, and hence solving $\min_{x \in [0,1]^n} x^TMx$ to global optimality (as is done by Egger et al. \cite{egger2020warm}) is non-trivial in the case of Max-Cut. However, we can still consider locally optimal solutions to the relaxation. Observe,
 $$\min_{y \in [0,1]^n} y^TMy = \max_{y \in [0,1]^n} y^TLy,$$
 i.e., the QUBO relaxation amounts to maximizing a convex function over a polytope, in which case, all strictly local maxima lie on the vertices of the polytope.\footnote{To see this, suppose by means of contradiction that $y^*$ was a strict local maximum that did not lie at a vertex of the polytope. Then there exists $z \in \mathbb{R}^n$ such that $y^*-z$ and $y^*+z$ lie in the polytope such that $f(y^*) > f(y^*-z)$ and $f(y^*) > f(y^*+z)$. By convexity,
 $f(y^*) = f(\frac{1}{2}(y^*-z)+\frac{1}{2}(y^*+z)) \leq  \frac{1}{2}f(y^*-z)+\frac{1}{2}f(y^*+z) < \frac{1}{2}f(y^*) + \frac{1}{2}f(y^*) = f(y^*), $ a contradiction.} The vertices of the polytope $[0,1]^n$ correspond to cuts in the graph, thus, using the strictly locally optimal solution to the relaxation of the QUBO corresponding to Max-Cut degenerates to solutions corresponding to a single-cut; this means that, for Max-Cut (with non-negative edge-weights), this QUBO approach is (effectively) a single-cut initialization approach as described in Section \ref{sec:singleCutInitialization}.

\section{Noise Simulations}
\label{sec:app_noise}
In Figure \ref{fig:neg_weights}, we consider 20 instances of Erd\H{o}s-R\'enyi graphs, with edge probabilities of 50\% and 8 nodes. In this case, we also choose random negative and positive edges weights from a uniform distribution defined on $ [-1,1] $. We show results for the ideal, noiseless case as well as the noisy case when 3\% phase noise is present. For this noisy case, we consider only one simple source of noise, phase damping, which is present for every single qubit gate operation. This choice is made for simplicity and the following discussion is used as a example of generic noise channel modelling.

 As an example, we consider the modelling of phase noise using Kraus operators which equivalently can be descried using noise channels. Generically, this noise is due to interactions with the environment which is composed of many subsystems. Each interaction itself is weak, but the result of many such interactions, while not likely to cause energy transitions, does introduce a loss of phase coherence. We can describe this process with a set of (non-unique) Kraus operators\cite{Nielsen2011}, given by,
\[
M_0 =\sqrt{1-q}\mathbb{I},\]
\[M_1 = \sqrt{q}\ket{0}\bra{0}, M_2 = \sqrt{q}\ket{1}\bra{1} ,
\]
where $q$ is the probability of a dephasing event occurring.
These Kraus operators then have the effect on the state evolution as,
\begin{align*}
    S(\rho) =& \sum_{k=0}^2 M_k \rho M_k^\dagger\\
    =&(1-q)\rho+q\ket{0}\bra{0}\rho\ket{0}\bra{0} \\
    &+q\ket{1}\bra{1}\rho\ket{1}\bra{1}.
\end{align*}

If we associate the probability of a dephasing event occuring during a time interval $\Delta t$, then $q = \Gamma \Delta t$ with $\Gamma$ being the characteristic dephasing rate and we can write $n$ applications of the noisy channel, $S(\rho, t)$ in matrix form as,
\[
S(\rho,t) = 
\begin{pmatrix}
     \rho_{00} & e^{-\Gamma t}\rho_{01}\\
     e^{-\Gamma t}\rho_{10} & \rho_{11}
\end{pmatrix} .
\]
We can then see that this dephasing process preserves population as $\rho_{00},\rho_{11}$ are preserved but exponentially suppresses coherences at a rate defined by $\Gamma$, which also defines the dephasing time, $T_2 = 1/\Gamma$.

In addition to modelling phase noise, we also include several other noise models in the results shown in Figures \ref{fig:Karlov_20}, \ref{fig:Guadalupe_comparison}, and \ref{fig:quantinuum} of the main text. The specifications for these noise models are generated from Qiskit's {\bf{NoiseModel.from\_backend} } method while using the $``ibm\_ guadalupe"$ device as the targetted backend. In total, these noisy simulations utilize noise models that incorporate gate error probability of each gate, the gate length of all gates, the $T_1$ and $T_2$ times of all qubits, as well as the readout error probability. Each gate error consists of a {\bf{deploarizing\_error()}} followed by a {\bf{thermal\_relaxation\_error()}} error channel. One can define similar Kraus operators for these noise channels as well~\cite{Nielsen2011}. However, while these are a comprehensive treatment of quantum noise they do not accurately capture crosstalk and other correlated noise sources.

As mentioned in the main text, we apply a SPAM mitigation strategy to our hardware results from IBM. Typically these SPAM errors can vary significantly across qubits on a single device. Specifically, typical readout errors on the $ibmq\_guadalupe$ device range from $[1.07\%,12.95\%]$. Since the overall result is limited by the largest, worst case error, for simplicity, we only mitigate SPAM errors due to the fact that these errors are roughly an order of magnitude larger than gate errors on the same device as well as requiring only two additional circuit runs (independent of circuit width or depth) and do not implement any gate error mitigation strategies, which typically require significantly larger overhead~\cite{pascuzzi_computationally_2022,berg_probabilistic_2022}.
\section{Additional Numerical Results}
\label{sec:additionalExperiments}

\subsection{GW vs BM-MC Warm-Starts}

As described in Section \ref{sec:warmstarts}, we consider two approaches for generating warm-starts: projected GW solutions and locally optimal BM-MC$_k$ solutions, with the former approach having better theoretical guarantees in regard to solution quality. However, numerical simulations displayed in Figure \ref{fig:GWVsBMMC} show both approaches on the instance library $\mathcal{G}$ (graphs with at most 11 nodes) achieve similar expected cut values at depth $p=1$ QAOA-warmest; in particular, the difference in {(instance-specific)} approximation ratio is less than 0.04 for nearly all instances. This similarity in solution quality is even more pronounced at depth $p=8$.


Since both warm-start approaches yield similar results (in numerical simulations) and since the Burer-Monteiro approach scales better in regards to runtime (Section \ref{sec:warmstarts}), {the results in Section \ref{sec:experiments} } assumes that locally optimal BM-MC$_k$ solutions are used to produce the warm-starts for QAOA-warm and QAOA-warmest.

\begin{table*}[htbp]
    \centering
    \begin{tabular}{ccc}
    & depth $p=1$ & depth $p=8$\\\\
    all graphs & 
    \begin{tabular}{|c|c|c|} \hline
    &  vert.  & uniform\\\hline
    $k = 2$ & {\bf 0.9858} & 0.9758 \\ \hline
    $k = 3$ &  0.9854 & 0.9535 \\ \hline
    \end{tabular}
    &
    \begin{tabular}{|c|c|c|} \hline
    &  vert.  & uniform\\\hline
    $k = 2$ & {\bf 0.9988} & 0.9977\\ \hline
    $k = 3$ & 0.9988 & 0.9960\\ \hline
    \end{tabular}
    \\\\
    \begin{tabular}{c}positive-weight \\ graphs\end{tabular}
    &
    \begin{tabular}{|c|c|c|} \hline
    &  vert.  & uniform\\\hline
    $k = 2$ & {\bf 0.9867} & 0.9789 \\ \hline
    $k = 3$ & 0.9864 & 0.9581 \\ \hline
    \end{tabular}
    &
    \begin{tabular}{|c|c|c|} \hline
    &  vert.  & uniform\\\hline
    $k = 2$ & 0.9991 & 0.9990 \\ \hline
    $k = 3$ & {\bf 0.9993} & 0.9981 \\ \hline
    \end{tabular}
    \end{tabular}
    \caption{\footnotesize Multiple tables comparing the average {(instance-specific)} approximation ratio achieved during QAOA-warmest\ when utilizing different combinations of \dimension and rotations during the preprocessing stage. For the top row of tables, these averages were computed using all the graphs in our graph library $\mathcal{G}$ whereas for the bottom row, we restrict our attention to only those graphs in $\mathcal{G}$ with positive edge weights.}
    \label{fig:comparingRanksAndRotation}
\end{table*}

\subsection{Choice of Dimension and Rotation}
\label{sec:comparingRanksAndRotation}
Table \ref{fig:comparingRanksAndRotation} demonstrates the average {(instance-specific)} approximation ratio achieved by QAOA-warmest for various combinations of the \dimension used for BM-MC$_k$ and the rotation scheme applied to the BM-MC$_k$ solution. We find that vertex-at-top rotations perform better than uniform rotations, especially in the context of $k = 3$ solutions. The data is inconclusive in regards to if $k = 2$ or $k = 3$ solutions are better for QAOA-warmest, both are promising. Finally, we remark that for depth-8 QAOA-warmest, any choice of \dimension or rotation scheme gave at least a 0.996 average approximation ratio across the instances.

To give fair comparison against QAOA-warm \cite{TFHMG20} (and also for comparisons with Egger et al.'s approach \cite{egger2020warm}), {the results in Section \ref{sec:experiments}}  assumes that we are using $k = 2$ initializations and vertex-at-top rotations unless otherwise stated, since these were the recommended setting in \cite{TFHMG20}. 

\begin{figure}[htbp]
    \centering
    \includegraphics[scale=0.3]{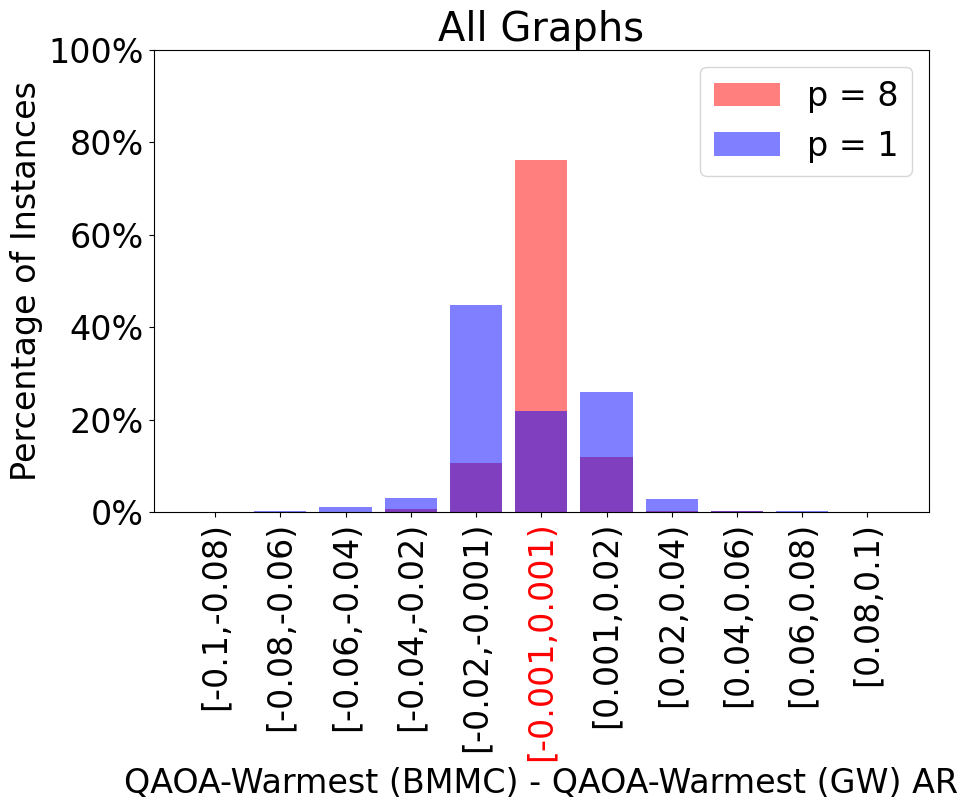}
    \caption{\footnotesize Histogram showing the difference in {(instance-specific)} approximation ratio (AR) when using QAOA-warmest (on instance library $\mathcal{G}$) with various warm-start strategies: projected GW and locally optimal BM-MC warm-starts. The blue and red bars correspond to depth $p=1$ and $p=8$ QAOA-warmest respectively with the purple regions indicating an overlap in the histograms. For both approaches, $k = 3$ solutions and vertex-at-top rotations are used to produce the figure; the results are similar if one instead uses a different combination of \dimension and/or rotation scheme.}
    \label{fig:GWVsBMMC}
\end{figure}

\subsection{GW and BM-MC Scaling}
\label{sec:GW_BMMC_Scaling}
In Section \ref{sec:warmstarts}, two methods of creating a warm-start initialization are discussed: projecting GW SDP solutions and finding approximate BM-MC$_k$ solutions. Figure \ref{fig:GW_BMMC_Scaling} uses instances from the MQLib \cite{DGS18} library\footnote{MQLib \cite{DGS18} is a diverse library of Max-Cut instances; the results of Figure \ref{fig:GW_BMMC_Scaling} may differ if different types/families of graphs are used (e.g. random Erd\H{o}s–R\'enyi graphs). For each MQLib instance, we used the exact Max-Cut solver BiqCrunch \cite{KMR17} to try to find the optimal cut. Figure \ref{fig:GW_BMMC_Scaling} uses all positive-weighted MQLib instances up to 663 nodes for which BiqCrunch was able to find the optimal cut within 24 hours. } to compare these two methods at different dimensions ($k=2,3$) with respect to the {(instance-specific)} approximation ratio they achieve with hyperplane rounding; these approximation ratios are compared against hyperplane rounding of the $n$-\dimensional GW solution. It is clear from the figure that the projecting of GW solutions preserves the approximation ratio (from hyperplane rounding); this is consistent with the results of Theorem \ref{thm: subspace-hyperplane-rounding}. On the other hand, while BM-MC$_k$ solutions preserve the approximation ratio (from hyperplane rounding) for small graphs, the gap in approximation ratios (compared to $n$-\dimensional GW hyperplane rounding) grows as the number of nodes increases.

\subsection{Interesting Instances}
In Section \ref{sec:QAOAWarmestCompare}, Table \ref{fig:pieChartTable} shows that at circuit depth $p=8$, there are five instances for which QAOA-warmest did not achieve the highest {(instance-specific)} approximation ratio compared to the other algorithms considered.

Of these five instances, standard QAOA was the best algorithm for precisely two of these (instances \#778 and \#1820). For the remaining three instances (\#1698, \#1889, \#2010), GW was the best algorithm; however, all three of these instances have the property that there is a single negative edge-weight whose magnitude is much larger than the other edge weights in the graph and additional numerical simulations show that a suitable vertex-at-top rotation (selecting the vertex that is incident to the large-magnitude negative edge weight) allows QAOA-warmest to outperform GW.

Table \ref{fig:problematicInstances} gives detailed statistics for the approximation ratios achieved by each of the Max-Cut algorithms considered for these five instances.

\subsection{Comparison with Egger et al.}
Figure \ref{fig:eggerComparison} compares the {(instance-specific)} approximation ratios achieved by QAOA-warmest and a variant of QAOA considered by Egger et al. \cite{egger2020warm}. In the context of the Max-Cut problem, Egger et al. considered an approach which takes a good starting cut $(S, V \setminus S)$ (obtained via GW or possibly other means) and uses this cut to construct an initial quantum state $\ket{s_0}$. With this modified initial quantum state and an appropriate modification of the mixing Hamiltonian, Egger et al. show that their variation of QAOA is able to recover the cut at circuit depth $p=1$, i.e., there is a choice of variational parameters $\gamma_1$ and $\beta_1$ such that the only cut obtained at those parameters is precisely $(S,V \setminus S)$.

To give a fair comparison for Egger et al.'s approach, we consider 10 cuts generated by the GW algorithm and take the best 5. Due to the size of the instances we consider, usually at least one of the best 5 cuts would be optimal and hence Egger et al.'s approach would essentially already start with an optimal solution which is not interesting. For this reason, in Figure \ref{fig:eggerComparison}, we only consider those instances (22.7\% of the instance library) for which neither QAOA-warmest or Egger et al.'s approach starts with the optimal solution.

For Egger et al.'s approach, we consider two different choices for initialization of the variational parameters: (1) near the origin and (2) the choice of parameters that recovers the value of cut used to initialize the QAOA variant (i.e. $\beta_1 = \pi/2$ with the remaining parameters being set to zero). In both cases, Figure \ref{fig:eggerComparison} demonstrates that QAOA-warmest typically has the superior performance.

There are a total of 163 instances for which the approximation ratio achieved by depth-8 QAOA-warmest beats GW (by at least 0.001) and the approximation ratio achieved by GW beats Egger et al.'s approach (with initialization  $\beta_1 = \pi/2$ with the remaining parameters being set to zero) at depth-8 (by at least 0.001). For these instances, the median gap in approximation ratio between QAOA-warmest and GW was 0.0466 and the median gap in approximation ratio between GW and Egger et al.'s approach is 0.0458.

\begin{table*}[htbp]
\centering
\begin{tabular}{|c|c|c|c|}
\hline
     Instance &  \# Nodes & \# Edges & Approx. Ratio\\\hline
     $J(6,3,1)$ & 20 & 90 & $0.9123$\\
     $J(8,4,1)$ & 70 & 560 & $0.8889$\\
     $J(10,5,1)$ & 252 & 3150 & $0.8810$\\
     $J(10,5,2)$ & 252 & 12600 & $0.9402$\\
    $J(12,6,1)$ & 924 & 16632 & $0.8787$\\
    $J(12,6,2)$ & 924 &103950 & $0.9123$\\
    \hline
\end{tabular}
\caption{\footnotesize Small ($<1000$ nodes) instances using Karloff's \cite{K99} construction. For each instance, we include the number of nodes, edges, and the theoretical expected approximation ratio one would obtain using Goemans-Williamson algorithm on that instance.}
\label{tab:karloff}
\end{table*}

\section{Twenty-Node Graph}
\label{sec:twentyNodeGraph}

In this section, we discuss why the graph considered in Figure \ref{fig:Guadalupe_device} is interesting. The graph used is a 20-node instance, where Goemans-Williamson achieves an {(instance-specific)} approximation ratio of 0.912. We briefly summarize the construction and properties of this graph. 

Recall that the worst-case approximation ratio for the Goemans-Williamson (GW) algorithm is 0.878. Karloff \cite{K99} showed that the 0.878 bound for GW is tight by constructing a family of graphs whose {(instance-specific)} approximation ratios approaches 0.878 as graph size increases. The construction for this family of instances is as follows: consider non-negative integers $b \leq t \leq m$ and let $J(m,t,b)$ denote the graph with vertex set $\binom{m}{t}$, i.e., the vertices are all $t$-element subsets of $[m]$; two distinct vertices/subsets $S$ and $T$ of $J(m,t,b)$ are adjacent if and only if they have exactly $b$ elements in common, i.e. $|S \cap T| = b$. {Karloff proved the approximation ratio for GW on some of these instances:}

\begin{theorem}{\cite{K99}}\label{Thm-Conj} Let $m$ be an even positive integer and $G = J(m,m/2,b)$. If $0 \leq b \leq m/12$, then the {(instance-specific)} approximation ratio for Goemans-Williamson on $G$ is given by $$\frac{\frac{1}{\pi}\arccos(\frac{4b}{m}-1)}{1-\frac{2b}{m}}.$$
\end{theorem}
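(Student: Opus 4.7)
The plan is to (i) write down Karloff's explicit SDP solution for $G = J(m, m/2, b)$, (ii) compute both the expected Goemans-Williamson cut value and the SDP objective at that solution, and (iii) argue that the SDP optimum coincides with Max-Cut$(G)$, so the approximation ratio becomes the ratio of the GW cut to this common value.

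First, I would assign to each vertex $S \in \binom{[m]}{m/2}$ the unit vector $u_S \in \R^m$ with $i$-th coordinate $\tfrac{1}{\sqrt{m}}(2\mathbf{1}[i \in S] - 1)$. A direct case analysis on the four parts $S \cap T$, $S \setminus T$, $T \setminus S$, and $[m] \setminus (S \cup T)$ gives $u_S \cdot u_T = (4|S \cap T| - m)/m$. In particular, for every edge $(S, T) \in E$ the inner product is the constant $\rho := 4b/m - 1 \in [-1, 0)$ (using $b \le m/12 < m/4$). Hence the expected GW cut value at this solution is $|E|\cdot \tfrac{1}{\pi}\arccos(4b/m - 1)$ and the SDP objective at this solution is $\tfrac{|E|}{2}(1 - \rho) = |E|(1 - 2b/m)$.

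Second, I would establish SDP-optimality of the Karloff embedding by symmetrization. Since $G$ is edge-transitive under $S_m$, averaging any optimal Gram matrix over $S_m$ preserves the objective value and yields a Gram matrix lying in the Bose-Mesner algebra of the Johnson association scheme; positive semidefiniteness in this algebra, expressed via the Eberlein (Hahn) polynomials that diagonalize it, then forces the edge inner product to be at least $4b/m - 1$, so SDP$(G) = |E|(1 - 2b/m)$. The matching integral lower bound Max-Cut$(G) \ge |E|(1 - 2b/m)$ can be obtained by exhibiting a concrete $\pm 1$ cut built from a fixed balanced partition of $[m]$; the hypothesis $b \le m/12$ is exactly what is needed for the required averaging argument to hit the SDP bound and close the integrality gap.

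The main obstacle is the second step, and specifically the integral cut construction matching the SDP value: without $b \le m/12$ an integrality gap opens and the ratio stated ceases to describe GW's actual performance, so verifying the bound requires careful combinatorial bookkeeping of how the proposed $\pm 1$ cut separates $m/2$-subsets with intersection exactly $b$. Once both equalities are in hand, the Goemans-Williamson approximation ratio on $G$ equals
$$\frac{\mathbb{E}[\text{GW cut}]}{\text{Max-Cut}(G)} = \frac{|E|\cdot \tfrac{1}{\pi}\arccos(4b/m - 1)}{|E|\bigl(1 - 2b/m\bigr)} = \frac{\tfrac{1}{\pi}\arccos(4b/m - 1)}{1 - 2b/m},$$
as claimed.
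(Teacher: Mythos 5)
The paper itself does not prove Theorem \ref{Thm-Conj}; it is quoted from Karloff \cite{K99}, so your proposal has to be measured against Karloff's argument. Its skeleton is indeed the right one: your step (i) is correct ($u_S \cdot u_T = (4|S\cap T|-m)/m$, so every edge has inner product $4b/m-1$, the embedding has SDP objective $|E|(1-2b/m)$, and hyperplane rounding of it has expected cut $|E|\tfrac{1}{\pi}\arccos(4b/m-1)$), and the overall plan of showing that the SDP value and $\text{Max-Cut}(G)$ both equal $|E|(1-2b/m)$ is exactly Karloff's.

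The genuine gap is in step (ii), and you have located the difficulty in the wrong place. The integral cut matching the SDP value is not an obstacle and needs no hypothesis on $b$: splitting the $m/2$-subsets by whether they contain a fixed element of $[m]$ cuts $\binom{m-1}{m/2-1}\binom{m/2-1}{b}\binom{m/2}{b} = |E|\left(1-\tfrac{2b}{m}\right)$ edges, a two-line count. What is hard is precisely the assertion you treat as routine, namely that positive semidefiniteness of the symmetrized Gram matrix in the Johnson scheme ``forces the edge inner product to be at least $4b/m-1$.'' That claim is equivalent to showing that the smallest eigenvalue of the adjacency matrix of $J(m,m/2,b)$ is $\beta_1 = \binom{m/2}{b}^2\left(\tfrac{4b}{m}-1\right)$, i.e.\ that $\beta_1$ minimizes the Eberlein eigenvalues $\beta_0,\dots,\beta_{m/2}$; this is where the hypothesis $0 \le b \le m/12$ enters in \cite{K99}, it is the delicate part of Karloff's proof, and for general $b < m/4$ it was only his Conjecture~\ref{thm:conjecture}, settled in 2018 via Theorem~\ref{thm:brouwer} (which is exactly why this paper needs the conjecture to handle $J(6,3,1)$, where $b=1 > m/12$). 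Consequently your claim that ``without $b \le m/12$ an integrality gap opens'' is backwards: granting the eigenvalue statement, the same ratio holds with no gap for all $b < m/4$. To complete the proof you must supply this eigenvalue (equivalently, dual SDP) argument rather than more bookkeeping on the cut; a minor additional point is that the stated ratio is the one attained when GW rounds the symmetric optimal embedding, i.e.\ the worst case over optimal SDP solutions, which is how Karloff phrases it.
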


Non-trivial instances using Karloff's construction arises once $b \geq 1$; however, in order for the hypotheses of Theorem \ref{Thm-Conj} to be satisfied, this requires $m \geq 12$ which implies that one needs to consider instances with at least ${\binom{12}{6}} = 924$ nodes. Performing reliable experiments on such large instances is not feasible for current quantum hardware.

We noticed that smaller instances with a weak {(instance-specific)} GW approximation ratio could be constructed if the following conjecture by Karloff was true. 

\begin{conjecture}\label{thm:conjecture} (Conjecture 2.12 in \cite{K99})
If $m$ is an even positive integer with $0 \leq b < m/4$, then the smallest eigenvalue of the adjacency matrix of $J(m,m/2,b)$ is ${\binom{m/2}{b}}^2\left(\frac{4b}{m}-1\right)$.
\end{conjecture}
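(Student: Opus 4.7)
The plan is to exploit the rich algebraic structure of $J(m, m/2, b)$ as a graph in the Johnson association scheme on the $m/2$-subsets of $[m]$. The adjacency matrices $A_0, A_1, \ldots, A_{m/2}$ (indexed by intersection size) of this scheme all commute and are simultaneously diagonalized by the decomposition of $\mathbb{C}^V$ into $S_m$-irreducibles,
\[
\mathbb{C}^V \;=\; V_0 \oplus V_1 \oplus \cdots \oplus V_{m/2},
\]
where $V_j$ is isomorphic to the Specht module indexed by $(m-j, j)$. By Schur's lemma, each $A_r$ acts as a scalar $E_j(r)$ on $V_j$, and these scalars are the classical Eberlein polynomials. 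Thus the spectrum of the adjacency matrix of $J(m, m/2, b)$ is exactly $\{E_0(b), E_1(b), \ldots, E_{m/2}(b)\}$, and the problem reduces to identifying the index $j^\star$ that minimizes $E_j(b)$ and evaluating $E_{j^\star}(b)$ in closed form.

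First I would verify that the claimed value $\binom{m/2}{b}^2 (4b/m - 1)$ equals $E_{m/2}(b)$, the eigenvalue on the top Specht module. The case $n=2k$ is special because the complement map $S \mapsto [m]\setminus S$ is an automorphism of the vertex set, and the $\pm 1$-eigenspaces of this involution are unions of the $V_j$ with $j$ of fixed parity; in particular $V_{m/2}$ sits inside the antisymmetric part. Taking an explicit Specht polytabloid as a test vector and pairing it against the neighbors (each formed by swapping $m/2 - b$ elements in and out), the action of $A_b$ reduces to a signed count of such swaps, which should collapse via Vandermonde--Chu to exactly $\binom{m/2}{b}^2(4b/m - 1)$. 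Equivalently, one can plug $j = m/2$ into the standard hypergeometric formula for $E_j(b)$ and verify the collapse directly.

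Next, the harder step: proving minimality of $E_{m/2}(b)$ among $E_0(b), \ldots, E_{m/2}(b)$ in the range $0 \leq b < m/4$. I would attempt two complementary routes. The first is analytic: use the three-term recurrence obeyed by the Eberlein polynomials (as a family of discrete orthogonal polynomials dual to the Hahn polynomials) and a sign-regularity argument to show that, for $b$ in the stated range, $E_j(b)$ is monotone decreasing in $j$. The self-duality of the Johnson scheme at $n=2k$ is useful here: it forces interlacing between consecutive Eberlein polynomials and symmetrizes the recurrence coefficients. The second route is combinatorial: use the Delsarte linear programming inequalities or the Hoffman-type bound, matching the slack in the bound at $V_{m/2}$ to confirm tightness.

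The main obstacle is the minimality step. Even with the Johnson-scheme machinery in place, Eberlein polynomials lack a simple closed form, and their extrema depend delicately on the regime of $b$. The cutoff $b < m/4$ (note that $4b/m - 1 < 0$ precisely here) strongly suggests that it is exactly the threshold where the monotonicity of $E_j(b)$ in $j$ flips sign, so the proof must quantitatively pin down this phase transition. If direct recurrence manipulations prove unwieldy, a fallback is induction on $m$ via the ``vertex-pair removal'' construction that relates $J(m, m/2, b)$ to $J(m-2, m/2 - 1, b-1)$ and related subschemes, combined with Cauchy interlacing to push eigenvalue bounds from smaller instances up to $m$. A careful base case plus this interlacing should close the argument, though verifying the induction step for all relevant $(m, b)$ is where the real work lies.
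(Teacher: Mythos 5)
There is a genuine error at the very first step of your plan, and it propagates through the rest. The claimed minimum eigenvalue $\binom{m/2}{b}^2\left(\frac{4b}{m}-1\right)$ is \emph{not} the eigenvalue on the top Specht module $V_{m/2}$; it is the eigenvalue $\beta_1$ on the $(m-1,1)$-isotypic component $V_1$ (multiplicity $m-1$). A small sanity check kills the proposed identification: for $J(6,3,1)$ the spectrum on $V_0,V_1,V_2,V_3$ is $9,\,-3,\,-1,\,3$, so the minimum $-3=\binom{3}{1}^2\left(\frac{4}{6}-1\right)$ sits on $V_1$, while the eigenvalue on $V_{m/2}=V_3$ is $+3$ (it cannot match the claimed value anyway, since that value is negative throughout the range $b<m/4$). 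The same example shows the sequence $E_j(b)$ is not monotone in $j$ ($9,-3,-1,3$), so your first analytic route (monotonicity/sign-regularity forcing the minimum at $j=m/2$) is aimed at a false statement, and your "verification" step would simply fail. The genuinely hard content is precisely to show that the minimum is attained at $j=1$, which your plan never targets; the interlacing/induction fallback is too coarse to pin down an exact eigenvalue with its exact value, so it does not close this gap either.

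For context, the paper does not give an original proof: the conjecture (Karloff's Conjecture 2.12) is obtained by citing Brouwer et al.\ (Theorem \ref{thm:brouwer}), which shows that $\beta_1$ is the smallest eigenvalue of $J(m,t,b)$ exactly when $(t-b)(m-1)\geq t(m-t)$; with $t=m/2$ this hypothesis follows from $b<m/4$ together with integrality of $b$, and then plugging $s=1$, $t=m/2$ into the $\beta_s$ formula collapses (via $\binom{m/2-1}{b}=\frac{m/2-b}{m/2}\binom{m/2}{b}$ and $\binom{m/2-1}{b-1}=\frac{b}{m/2}\binom{m/2}{b}$) to $\binom{m/2}{b}^2\left(\frac{4b}{m}-1\right)$. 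If you want to pursue a self-contained argument in the association-scheme language you set up, the correct target is to show $E_1(b)\leq E_j(b)$ for all $j$ in the stated range, which is the substance of Brouwer et al.'s proof and requires real work on the Eberlein polynomials — not a closed-form evaluation at $j=m/2$.
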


Karloff \cite{K99} proved a special case of this conjecture in the case where $0 \leq b \leq m/12$ which was instrumental in showing Theorem \ref{Thm-Conj} (with the same inequality in the hypotheses).

 In 2018, Brouwer et al. \cite{BCIM18} proved the following theorem below; substituting $t = m/2$ into Theorem \ref{thm:brouwer} and performing a few simple calculations, they also found that Conjecture \ref{thm:conjecture} follows as a corollary.\footnote{The statement of Theorem \ref{thm:brouwer} has been modified in order to be consistent with the notation used in Karloff's construction \cite{K99}.}

\begin{theorem}
\label{thm:brouwer}
 (Theorem 3.10 in \cite{BCIM18}) Let $0 \leq b < t$ and let\footnote{The eigenvalues of the adjacency matrix of $J(m,t,b)$ are $\beta_0,\beta_1,\dots,\beta_t$ (each with positive multiplicity) \cite{knuth91}.}
 
\[\beta_s = \sum_{r=0}^s (-1)^{s-r}{\binom{s}{r}}{\binom{t-r}{t-b}}{\binom{m-t-s+r}{m-2t+b}}. \] Then $\beta_1$ is the smallest eigenvalue of the adjacency matrix of $J(m,t,b)$ if and only if $(t-b)(m-1) \geq t(m-t)$. In this case, $\beta_1$ is also the second largest in absolute value among the eigenvalues of the adjacency matrix of $J(m,t,b)$.
\end{theorem}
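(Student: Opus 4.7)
The plan is to exploit the association-scheme structure of the Johnson scheme on $\binom{[m]}{t}$, of which $J(m,t,b)$ is a single relation graph. Indexing the classes by intersection size, let $A_0, A_1, \ldots, A_t$ be the adjacency matrices of the relations where $(S,T)$ belongs to class $i$ iff $|S\cap T| = t-i$; then $J(m,t,b)$ has adjacency matrix $A_{t-b}$. These matrices pairwise commute and share a common eigenspace decomposition $\mathbb{R}^{\binom{m}{t}} = V_0 \oplus V_1 \oplus \cdots \oplus V_t$ with $\dim V_s = \binom{m}{s} - \binom{m}{s-1}$ (the harmonic decomposition). The eigenvalue of $A_{t-b}$ on $V_s$ is given by the Eberlein polynomial evaluated at $s$. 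So the first step is to verify algebraically that the stated sum $\beta_s$ equals this Eberlein polynomial, for instance by a Vandermonde–Chu manipulation of the binomial triple product; once this is done, $\{\beta_s : 0 \le s \le t\}$ is exactly the spectrum, with multiplicities $\binom{m}{s}-\binom{m}{s-1}$.

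With that in place, the task reduces to a purely combinatorial comparison of the $\beta_s$. Since $\beta_0$ is the valency and is positive, I only need to compare $\beta_1$ against $\beta_2, \ldots, \beta_t$. The advantage of taking $s=1$ is that the sum defining $\beta_1$ collapses to just two terms (indices $r=0$ and $r=1$), giving a tractable closed form that can be rewritten, after dividing by the common factor $\binom{t}{b}\binom{m-t-1}{m-2t+b-1}$ or similar, as an affine expression in the parameters. The plan is then to compare $\beta_1$ to $\beta_2$ explicitly and show that $\beta_2 \ge \beta_1$ is \emph{equivalent} to $(t-b)(m-1) \ge t(m-t)$; once this boundary case is identified, the inequalities $\beta_s \ge \beta_1$ for $s \ge 3$ should follow by an induction based on the three-term recurrence satisfied by the Eberlein (Hahn-type) polynomials, which controls consecutive differences $\beta_{s+1}-\beta_s$ by a factor whose sign changes at most once on $\{1,\ldots,t\}$.

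The main obstacle I anticipate is precisely pinning down how the clean arithmetic condition $(t-b)(m-1)\ge t(m-t)$ emerges from the Eberlein machinery, rather than falling out as some messy polynomial inequality. The cleanest route is likely to write $\beta_s$ via the hypergeometric/dual Hahn representation and show that $s\mapsto \beta_s$ is unimodal with a unique minimum, then locate that minimum by solving $\beta_{s+1}-\beta_s = 0$ and observing that the solution lies in $[1,2)$ exactly when the stated condition holds. Establishing unimodality will probably be the most delicate step because a direct induction needs the recurrence coefficients to have a controlled sign pattern that depends on $m,t,b$.

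For the second assertion — that $|\beta_1|$ is the second largest eigenvalue in absolute value — Perron–Frobenius (applied to the connected regular graph, with connectedness a standard check in the Johnson scheme when $0 \le b < t$) gives that $|\beta_0|$ is the spectral radius. What remains is to show $|\beta_s| \le |\beta_1|$ for $2\le s \le t$; equivalently, since $\beta_1 < 0$ in the regime of interest, to show $-\beta_1 \ge \beta_s$ for positive $\beta_s$ and $\beta_s \ge \beta_1$ for negative $\beta_s$. The second half is already part of the argument above; the first half is the new content and should follow from the same recurrence argument by bounding the positive $\beta_s$'s against the valency $\beta_0$ using orthogonality of the Eberlein polynomials (which yields $\sum_s (\dim V_s)\,\beta_s = 0$ and controls the magnitudes of the positive values relative to $|\beta_1|$).
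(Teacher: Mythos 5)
First, a point of context: the paper does not prove this statement at all --- it is quoted verbatim as Theorem 3.10 of Brouwer et al.\ \cite{BCIM18} and used only to justify Conjecture \ref{thm:conjecture} --- so your proposal has to stand on its own as a proof of a genuinely nontrivial result from the literature. Your setup is sound: $J(m,t,b)$ is the relation $i=t-b$ of the Johnson scheme, the $\beta_s$ are the Eberlein eigenvalues on the common eigenspaces $V_s$ of dimension $\binom{m}{s}-\binom{m}{s-1}$, and the threshold does appear to be governed by the comparison of $\beta_1$ with $\beta_2$ (at the boundary case $(t-b)(m-1)=t(m-t)$, e.g.\ $J(7,3,1)$, one indeed finds $\beta_1=\beta_2=-3$). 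The genuine gap is in how you pass from the $s=2$ comparison to all $s\ge 2$. Your mechanism --- that the three-term recurrence forces the consecutive differences $\beta_{s+1}-\beta_s$ to change sign at most once, so that $s\mapsto\beta_s$ is unimodal and its minimum can be located by solving $\beta_{s+1}-\beta_s=0$ --- is false. For fixed $i=t-b\ge 2$ the sequence oscillates: for $J(8,4,1)$ (an instance appearing in Table \ref{tab:karloff}) the spectrum in the order $\beta_0,\dots,\beta_4$ is $16,\,-8,\,2,\,2,\,-4$, and for $J(9,4,1)$ it is $40,\,-14,\,0,\,5,\,-4$; in both cases the consecutive differences change sign more than once and there are two local minima, even though $\beta_1$ is still the global minimum. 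The entire difficulty of the theorem is to show that, under the stated arithmetic condition, none of the later dips (the dip near $s=t$ is exactly what wins for the Johnson graph $b=t-1$, whose minimum sits at $s=t$) drops below $\beta_1$; this is where Brouwer et al.\ spend their technical effort, and a generic unimodality or ``unique sign change'' argument cannot replace it.

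Second, the absolute-value claim is also unsupported as sketched. What is needed is $\beta_s\le-\beta_1$ for every $s\ge2$ with $\beta_s>0$; the identity $\sum_s\dim(V_s)\,\beta_s=0$ is just the vanishing of the trace and yields no pointwise bound on an individual positive eigenvalue. Two smaller issues: the claimed equivalence of $\beta_2\ge\beta_1$ with $(t-b)(m-1)\ge t(m-t)$ is asserted rather than proved (it looks correct --- one expects an identity expressing $\beta_1-\beta_2$ as a positive multiple of $t(m-t)-(t-b)(m-1)$ --- but you must exhibit it), and connectivity of $J(m,t,b)$ is not automatic for all $0\le b<t$ (e.g.\ $b=0$, $m=2t$ gives a perfect matching), so the Perron--Frobenius step needs explicit parameter restrictions. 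As it stands, the proposal sets up the correct spectral framework but supplies no workable argument for either of the two inequalities, $\beta_s\ge\beta_1$ for all $s\ge 2$ and $\beta_s\le-\beta_1$ for the positive $\beta_s$, that constitute the theorem.
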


Conjecture \ref{thm:conjecture} allows us to relax the inequality in Theorem \ref{Thm-Conj} to $0 \leq b < m/4$; thus, it can be applied with $m=6$ and $b=1$ to obtain the graph $J(6,3,1)$ (used in Figure \ref{fig:Karlov_20}) with ${\binom{6}{6/2}} = 20$ nodes where GW achieves an {(instance-specific)} approximation ratio of
$$\frac{\frac{1}{\pi}\arccos(\frac{4}{6}-1)}{1-\frac{2}{6}} = 0.912.$$

Table \ref{tab:karloff} shows all non-trivial instances under 1000 nodes that use Karloff's \cite{K99} construction for which we can calculate the approximation ratio using Theorem \ref{Thm-Conj} and Conjecture \ref{thm:conjecture}. These instances may be of interest to those working with near-term quantum devices.

\end{document}